\keywords{Expressive power, schema languages}
\renewcommand{\models}{\vDash}
\newcommand{\nmodels}{\nvDash}
\newcommand{\prog}{\mathcal{P}}
\newcommand{\tbox}{\mathcal{T}}
\newcommand{\Sch}{\mathcal{S}}
\renewcommand{\circ}/
\newcommand{\comp}/
\newcommand\concept{\phi}
\newcommand\voc{\Sigma}
\newcommand{\defc}[1]{\llbracket #1 \rrbracket}
\newcommand{\iexpr}[2]{\llbracket #1 \rrbracket^{#2}}      
\newcommand{\semi}[1]{\iexpr{#1}I}
\newcommand{\semj}[1]{\iexpr{#1}J}
\newcommand{\sem}[1]{\iexpr{#1}G}
\newcommand{\sems}[1]{\iexpr{#1}{\Istar}}
\newcommand{\Istar}{I}
\newcommand{\const}[1]{\{#1\}}
\newcommand{\id}{\mathit{id}}
\newcommand{\eq}{\mathit{eq}}
\newcommand{\disj}{\mathit{disj}}
\newcommand{\closed}{\mathit{closed}}
\newcommand{\lang}{\mathcal{L}}
\newcommand\dom{\Delta}
\newcommand{\geqn}[3]{\mathop\geq\nolimits_{#1}#2.#3}
\newcommand{\leqn}[3]{\leq_{#1}#2.#3}
\newcommand{\fulleq}{\mathit{full\text{-}eq}}
\newcommand{\fulldisj}{\mathit{full\text{-}disj}}
\newcommand{\prev}{\mathit{prev}}
\newcommand{\nxt}{\mathit{next}}
\begin{document}
\title[Expressiveness of SHACL Features]{Expressiveness of SHACL
  Features and Extensions for Full Equality and Disjointness
  Tests}\thanks{This paper is an extended version of our work
  presented at ICDT 2022 \cite{icdt/BogaertsJV22}.}
\author[B.~Bogaerts]{Bart Bogaerts\lmcsorcid{0000-0003-3460-4251}}[a]
\author[M.~Jakubowski]{Maxime
  Jakubowski\lmcsorcid{0000-0002-7420-1337}}[a,b] \author[J.~Van den
Bussche]{Jan {Van den Bussche}\lmcsorcid{0000-0003-0072-3252}}[b]

\address{Vrije Universiteit Brussel, Belgium}	
\email{bart.bogaerts@vub.be}  

\address{Universiteit Hasselt, Belgium}	
\email{maxime.jakubowski@uhasselt.be}
\email{jan.vandenbussche@uhasselt.be} 

\begin{abstract}
  SHACL is a W3C-proposed schema language for expressing structural
  constraints on RDF graphs.  Recent work on formalizing this language
  has revealed a striking relationship to description logics.  SHACL
  expressions can use three fundamental features that are not so
  common in description logics.  These features are equality tests;
  disjointness tests; and closure constraints.  Moreover, SHACL is
  peculiar in allowing only a restricted form of expressions
  (so-called targets) on the left-hand side of inclusion constraints.

  The goal of this paper is to obtain a clear picture of the impact
  and expressiveness of these features and restrictions.  We show that
  each of the three features is primitive: using the feature, one can
  express boolean queries that are not expressible without using the
  feature.  We also show that the restriction that SHACL imposes on
  allowed targets is inessential, as long as closure constraints are
  not used.

  In addition, we show that enriching SHACL with ``full'' versions of
  equality tests, or disjointness tests, results in a strictly more
  powerful language.
\end{abstract}

\maketitle

\section{Introduction}

On the Web, the Resource Description Framework (RDF
\cite{rdf11primer}) is a standard format for representing
knowledge and publishing data.  RDF represents information in the
form of directed graphs, where labeled edges indicate properties
of nodes.  To facilitate more effective access and exchange, it
is important for a consumer of an RDF graph to know what
properties to expect, or, more generally, to be able to rely on
certain structural constraints that the graph is guaranteed to
satisfy.  We therefore need a declarative language in which such
constraints can be expressed formally.  In database terms, we
need a schema language.

Two prominent proposals in this vein have been ShEx \cite{shex}
and SHACL \cite{shacl}.  Both approaches use formulas that
express the presence or absence of certain properties of a
node or its neighbors in the graph.  Such formulas are called
``shapes.'' When we evaluate a shape on a node, that node is
called the ``focus node.''  Some
examples of shapes, expressed for now in English, could be the
following:\footnote{In real RDF, names of properties and nodes
must conform to IRI syntax, but in this paper, to avoid
clutter, we take the liberty to
use simple names.}
\begin{enumerate}
  \item ``The focus node has a \textsf{phone} property, but no
    \textsf{email}.''
  \item ``The focus node has at least five incoming
    \textsf{managed-by} edges.''
  \item ``Through a path of \textsf{friend} edges, the focus node
  can reach a node with a \textsf{CEO-of} edge to the node
  \textsf{Apple}.''
  \item ``The focus node has at least one \textsf{colleague} who is
    also a \textsf{friend}.''
  \item ``The focus node has no other properties than
    \textsf{name}, \textsf{address}, or \textsf{birthdate}.''
\end{enumerate}

In this paper, we look deeper into SHACL, the language
recommended by the World Wide Web Consortium.
We do not use the actual SHACL syntax, but work
with the elegant formalization proposed by Corman, Reutter and
Savkovic \cite{corman}, and used in subsequent works by several
authors \cite{andresel,leinberger,shaclsatsouth}.  That
formalization reveals a striking similarity between shapes on the
one hand, and concepts, familiar from description logics
\cite{dlintro}, on the other hand.  The similarity between SHACL
and description logics runs even deeper when we account for
targeting, which is the actual mechanism to express constraints
on an RDF graph using shapes.

Specifically, a non-recursive \emph{shape schema\footnote{Real
SHACL uses the term \emph{shapes graph} instead of shape schema.}}
is essentially a finite list
of shapes, where each shape $\phi$ is additionally associated
with a target query $q$.  An RDF graph $G$ is said to conform to
such a schema if for every target--shape combination $(q,\phi)$,
and every node $v$ returned by $q$ on $G$, we have that $v$
satisfies $\phi$ in $G$.  Let us see some examples of
target--shape pairs, still expressed in English:
\begin{enumerate}
\setcounter{enumi}{5}
\item
``Every node of type \textsf{Person} has an \textsf{email} or
\textsf{phone} property.''  Here, the target query returns all nodes
with an edge labeled \textsf{type} to node \texttt{Person}; the shape
    checks that the focus node
    has an \textsf{email} or \textsf{phone} property.
\item
  ``Different nodes never have the same \textsf{email}.'' Here
    the target query returns all nodes with an incoming
    \textsf{email} edge, and the shape checks that the focus node
    does not have two or more incoming \textsf{email} edges.
\item
  ``Every mathematician has a finite Erd\H os number.''
Here the target query returns all nodes of type
  \textsf{Mathematician}, and
    the shape checks that the focus node can reach the
    node \textsf{Erd\H os} by a path that matches the regular
    expression $(\textsf{author}^-/\textsf{author})^*$.  Here,
    the minus superscript denotes an inverse edge.
\end{enumerate}

Interestingly, and apparent in the examples 6--8, the
target queries considered for this purpose in SHACL, as well as
in ShEx, actually correspond to simple cases of shapes.  It is
then only a small step to consider \emph{generalized} shape schemas as
finite sets of inclusion statements of the form $\phi_1 \subseteq
\phi_2$, where $\phi_1$ and $\phi_2$ are shapes.  Since, as noted
above, shapes correspond to concepts, we thus see that shape
schemas correspond to TBoxes in description logics.

We stress that the task we are focusing on in this paper is
checking conformance of RDF graphs against shape schemas.  Every
shape schema $\mathcal{S}$ defines a decision problem: given an RDF
graph $G$, check whether $G$ conforms to $\mathcal{S}$.  In database
terms, we are processing a \emph{boolean query} on a graph
database.  In description logic terms, this amounts to
\emph{model checking} of a TBox: given an interpretation, check
whether it satisfies the TBox.  Thus our focus is a bit different
from that of typical applications of description logics.  There,
facts are declared in ABoxes, which should not be confused with
interpretations.  The focus is then on higher reasoning tasks,
such as checking satisfiability of an ABox and a TBox, or
deciding logical entailment.

Given the above context, let us now look in more detail at the
logical constructs that can be used to build shapes.  Some of
these constructs are well known concept constructors
from expressive description logics \cite{expressivedl}:
the boolean connectives; constants;
qualified number restriction (a combination of existential
quantification and counting); and regular path expressions with inverse.  
To illustrate, example shapes 1--3 are expressible as follows:
\begin{enumerate}
\item
$\geqn 1{\textsf{phone}}\top \land
\neg \geqn 1{\textsf{email}}\top$. This uses qualified number
restriction with count 1 (so essentially existential
quantification), conjunction, and negation; $\top$ stands for true.
\item
$\geqn 5{\textsf{managed-by}^-}\top$. This uses counting to 5,
and inverse.
\item
$\geqn 1{\textsf{friend}^*/\textsf{CEO-of}}{\const{\textsf{Apple}}}$.
This uses a regular path expression and the constant
\textsf{Apple}.
\end{enumerate}

However, SHACL also has three specific logical features that are less
common:
\begin{description}
  \item[Equality] The shape $\eq(E,r)$, for a
    path expression $E$ and a property $r$, tests equality of
    the sets of nodes reachable from the focus node by an
    $r$-edge on the one hand, and by an $E$-path on the other
    hand.  
  \item[Disjointness] A similar shape $\disj(E,r)$ tests disjointness
    of the two sets of reachable nodes.  To illustrate, example
    shape~4 is expressed as
    $ \neg \disj(\textsf{colleague},\textsf{friend}) $.
  \item[Closure constraints] RDF graphs to be checked for
    conformance against some shape schema need not obey some fixed
    vocabulary.  Thus SHACL provides shapes of the form
    $\closed(R)$, with $R$ a finite set of
    properties, expressing that the focus node has no properties
    other than those in $R$.  This was already illustrated as
    example shape~5, with $R=\{
    \textsf{name}, \textsf{address}, \textsf{birthdate}\}$.
\end{description}

Our goal in this paper is to clarify the impact of the above three
features on the expressiveness of SHACL as a language for boolean
queries on graph databases.  Thereto, we offer the following
contributions.
\begin{itemize}
\item We show that each of the three features is primitive in a strong
  sense.  Specifically, for each feature, we exhibit a boolean query
  $Q$ such that $Q$ is expressible by a single target--shape pair,
  using only the feature and the basic constructs; however, $Q$ is not
  expressible by any generalized shape schema when the feature is
  disallowed.
\item We also clarify the significance of the restriction that SHACL
  puts on allowed targets.  We observe that as long as closure
  constraints are not used, the restriction is actually insignificant.
  Any generalized shape schema, allowing arbitrary but closure-free
  shapes on the left-hand sides of the inclusion statements, can be
  equivalently written as a shape schema with only targets on the
  left-hand sides.  However, allowing closure constraints on the
  left-hand side strictly adds expressive power.
\item We additionally show that ``full'' variants of equality tests or
  disjointness tests result in strictly more expressive
  languages. This result anticipates planned extensions of SHACL
  \cite{dash, shacl_extensions}.
\item Our results continue to hold when the definition of
  \emph{recursive} shapes is allowed, provided that recursion through
  negation is stratified.
\end{itemize}

This paper is organized as follows.  Section~\ref{secdefs} presents
clean formal definitions of non-recursive shape schemas, building on
and inspired by the work of Andre\c sel, Corman, et al.\ cited above.
Section~\ref{secexpr} and Section~\ref{sec:targetres} present our
results, and Section \ref{sec:fulltests} extends our result for
``full'' equality and disjointness tests.  Section~\ref{secrec}
presents the extension to stratified recursion.  Section~\ref{seconc}
offers concluding remarks.

\section{Shape schemas} \label{secdefs}

In this section we define shapes, RDF graphs, shape schemas, and the
conformance of RDF graphs to shape schemas.  Perhaps curiously to
those familiar with SHACL, our treatment for now omits \emph{shape
  names}.  Shape names are redundant as far as expressive power is
concerned, as long as we are in a non-recursive setting, because shape
name definitions can then always be unfolded.  Indeed, for clarity of
exposition, we have chosen to work first with non-recursive shape
schemas.  Section~\ref{secrec} then presents the extension to
recursion (and introduces shape names in the process).  We point out
that the W3C SHACL recommendation only considers non-recursive shape
schemas.

\paragraph*{Node and property names}
From the outset we assume two disjoint, infinite universes $N$ and $P$
of \emph{node names} and \emph{property names},
respectively.\footnote{In practice, node names and property names are
  IRIs \cite{rdf11primer}, so the disjointness assumption would not
  hold.  However, this assumption is only made for simplicity of
  notation; it can be avoided if we make our notation for vocabularies
  and interpretations (see below) more complicated.}

\subsection{Shapes}

We define \emph{path expressions} $E$ and \emph{shapes} $\phi$ 
by the following grammar:
\begin{align*}
  E & ::=  \id \mid p \mid p^- \mid E \cup E \mid E \circ E \mid E^* \\
  \concept &::= \top \mid \const c \mid \concept \land \concept \mid
  \concept \lor \concept \mid \neg \concept \mid
  {}\geqn{n}{E}{\concept} \mid \eq(E,p)\mid \disj(E,p) \mid
\closed(R)
\end{align*}
Here, $p$ and $c$ stand for property names and node names,
respectively; $n$ stands for nonzero natural numbers; and $R$ stands
for finite sets of property names.  A node name $c$ is also referred
to as a \emph{constant}. In $p^-$, $-$ stands for inverse.

\begin{description}
\item[Abbreviation] We will use $\exists E.\phi$ as an abbreviation
  for $\geqn 1E\phi$.
\end{description}

\begin{rem}
  Real SHACL supports some further shapes which have to do with tests
  on IRI constants and literals, as well as comparisons on numerical
  values and language tags.  Like other work on the formal aspects of
  SHACL, we abstract these away, as many questions are already
  interesting without these features.
\end{rem}

\begin{rem}
  Universal quantification $\forall E.\phi$ could be introduced as an
  abbreviation for $\neg \exists E.\neg \phi$.  Likewise,
  $\leqn nE\phi$ may be used as an abbreviation for
  $\neg \geqn {n+1}E\phi$.
\end{rem}

\begin{rem} \label{rem:id}
  In our formalization, a path expression can be `$\id$'. We show in
  Lemma~\ref{lem:id} that every path expression is equivalent to
  $\id$, $E'\cup\id$ or $E'$, where $E'$ does not use $\id$. In real
  SHACL, it is possible to write $E'\cup\id$ using ``zero-or-one''
  path expressions. Explicitly writing $\id$ is not possible, but this
  poses no problem. Path expressions can only appear in counting
  quantifiers, equality and disjointness shapes. The shape
  $\geqn{n}{\id}{\phi}$ is clearly equivalent to $\phi$ if $n=1$,
  otherwise, it is equivalent to $\neg\top$. The shapes $\eq(E,p)$ or
  $\disj(E,p)$ where $E$ is $\id$ are implicitly expressible in SHACL
  by writing the equality or disjointness constraint in node shapes,
  rather than property shapes.
\end{rem}

A \emph{vocabulary} $\Sigma$ is a subset of $N \cup P$.  A path
expression is said to be \emph{over} $\Sigma$ if it only uses property
names from $\Sigma$.  Likewise, a shape is over $\Sigma$ if it only
uses constants from $\Sigma$ and path expressions over $\Sigma$.

Following common practice in logic, shapes are evaluated in
\emph{interpretations}.  We recall the familiar definition of an
interpretation.  Let $\Sigma$ be a vocabulary.  An interpretation $I$
over $\Sigma$ consists of
\begin{itemize}
\item a set $\dom^I$, called the \emph{domain} of $I$;
\item for each constant $c \in \Sigma$, an element
  $\semi c \in \dom^I$; and
\item for each property name $p \in \Sigma$, a binary relation
  $\semi p$ on $\dom^I$.
\end{itemize}

The semantics of shapes is now defined as follows.
\begin{itemize}
\item On any interpretation $I$ as above, every path expression $E$
  over $\Sigma$ evaluates to a binary relation $\semi E$ on $\dom^I$,
  defined in Table~\ref{tab:sempath}.
\item Now for any shape $\phi$ over $\Sigma$ and any element
  $a \in \dom^I$, we define when \emph{$a$ conforms to $\phi$ in $I$},
  denoted by $I,a \models \phi$.  For the boolean operators $\top$
  (true), $\land$ (conjunction), $\lor$ (disjunction), $\neg$
  (negation), the definition is obvious.  For the other constructs,
  the definition is given in Table~\ref{tab:sem}, taking note of the
  following:

\begin{itemize}
\item We use the notation $R(x)$, for a binary relation $R$, to denote
  the set $\{y \mid (x,y) \in R\}$. We apply this notation to the case
  where $R$ is of the form $\semi E$.
\item We also use the notation $\sharp X$ for the cardinality of a set
  $X$.
\end{itemize}
\item For a shape $\phi$ and interpretation $I$, the notation
\[ \semi\phi := \{a \in \dom^I \mid I,a \models \phi \} \] will be
convenient.
\end{itemize}

\begin{table}
  \centering
  % We removed the @{} here, because all the other tables in the document
  % did not have @{}. Either style is fine for us, but we of course prefer
  % a consistent style within one document :-)
  \begin{tabular}{ll}
    \toprule
    $E$ & $\iexpr{E}{I}$ \\
    \midrule
    $\id$ & $\{(x,x) \mid x \in \dom^I\}$ \\
    $p^-$ & $\{(a,b) \mid (b,a) \in \iexpr{p}{I}\}$\\
    ${E_1 \cup E_2}$ & $\iexpr{E_1}{I} \cup \iexpr{E_2}{I}$\\
    ${E_1 \circ E_2}$ & $\{(a,b) \mid \exists x: (a,x) \in \iexpr{E_1}{I}\land (x,b) \in \iexpr{E_2}{I} \}$ \\
    ${E^*}$ & the reflexive-transitive closure of $\iexpr{E}{I}$\\
    \bottomrule
  \end{tabular}
  \caption{Semantics of path expressions.} \label{tab:sempath}
\end{table}

\begin{table}
\centering
\begin{tabular}{ll}
\toprule
$\phi$ & $I,a \models \phi$ if: \\
\midrule
$\const c$ & $a=\semi c$ \\
$\geqn{n}{E}{\psi}$ &
$\sharp \{b \in \semi E(a) \mid I,b \models \psi\} \geq n$ \\
$\eq(E,p)$ &
the sets $\semi E(a)$ and $\semi p(a)$ are equal \\
$\disj(E,p)$ &
the sets $\semi E(a)$ and $\semi p(a)$ are disjoint \\
$\closed(R)$ & $\semi p(a)$ is empty for each $p \in \Sigma-R$ \\
\bottomrule
\end{tabular}
\caption{Conditions for conformance of a node to a shape.}
\label{tab:sem}
\label{tabdef}
\end{table}

\begin{exa}
  \label{ex:running1} In the Introduction we already gave three
  examples (1), (2), and (3) of shapes expressed in English and the
  formal syntax. The target query of example~(7) from the Introduction
  can be expressed as the shape $\exists \textsf{email}^-.\top$. The
  shape of example~(7) can be written as
  $\leqn 1{\textsf{email}^-}\top$.
\end{exa}

\subsection{Graphs and their interpretation}

Remember from Table~\ref{tab:sem} that a shape $\closed(R)$ states
that the focus node may only have outgoing properties that are
mentioned in $R$. It may appear that such a shape is simply
expressible as the conjunction of $\neg \exists p.\top$ for
$p \in \Sigma-R$.  However, since shapes must be finite formulas, this
only works if $\Sigma$ is finite.  In practice, shapes can be
evaluated over arbitrary RDF graphs, which can involve arbitrary
property names (and node names), not limited to a finite vocabulary
that is fixed in advance.

Formally, we define a \emph{graph} as a finite set of triples of the
form $(a,p,b)$, where $p$ is a property name and $a$ and $b$ are (not
necessarily distinct) node names.  We refer to the node names
appearing in a graph $G$ simply as the \emph{nodes} of $G$; the set of
nodes of $G$ is denoted by $N_G$.  A pair $(a,b)$ with $(a,p,b) \in G$
is referred to as an \emph{edge}, or a \emph{$p$-edge}, in $G$.

We now canonically view any graph $G$ as an interpretation over the
\emph{full} vocabulary $N \cup P$ as follows:
\begin{itemize}
\item
$\dom^G$ equals $N$ (the universe of all node names).
\item
$\sem c$ equals $c$ itself, for every node name $c$.
\item
$\sem p$ equals the set of $p$-edges in $G$, for every property name $p$.
\end{itemize}
Note that since graphs are finite, $\sem p$ will be empty for all but
a finite number of $p$'s.

Given this canonical interpretation, path expressions and shapes
obtain a semantics on all graphs $G$.  Thus for any path expression
$E$, the binary relation $\sem E$ on $N$ is well-defined; for any
shape $\phi$ and $a \in N$, it is well-defined whether or not
$G,a \models \phi$; and we can also use the notation $\sem \phi$.

\begin{rem}
  Since a graph is considered to be an interpretation with the
  infinite domain $N$, it may not be immediately clear that shapes can
  be effectively evaluated over graphs.  Adapting well-known methods,
  however, we can reduce to a finite domain over a finite vocabulary
  \cite[Theorem 5.6.1]{ahv_book}, \cite{4rus,hs_domind}.  Formally,
  let $\phi$ be a shape and let $G$ be a graph.  Recall that $N_G$
  denotes the set of nodes of $G$; similarly, let $P_G$ be the set of
  property names appearing in $G$.  Let $C$ be the set of constants
  mentioned in $\phi$.  We can then form the finite vocabulary
  $\Sigma = N_G \cup C \cup P_G$.  Now define the interpretation
  $\Istar$ over $\Sigma$ as follows:
  \begin{itemize}
  \item $\dom^{\Istar} = N_G \cup C \cup \{\star\}$, where $\star$ is an
    element not in $N$;
  \item $\sems c = c$ for each node name $c \in \Sigma$;
  \item $\sems p = \sem p$ for each property name $p \in \Sigma$.
  \end{itemize}
  Note that no constant symbol names $\star$ in $\Istar$.  Then for
  every $x \in N_G \cup C$, one can show that $x \in \sem \phi$ if and
  only if $x \in \sems \phi$.  For all other node names $x$, one can
  show that $x \in \sem \phi$ if and only if $\star \in \sems \phi$.
\end{rem}

\newcommand{\ex}{\mathit{ex}}

\begin{exa}[Example \ref{ex:running1} continued] \label{ex:running2}
  Consider the graph $G_\ex$ depicted in
  Figure~\ref{fig:exgraph}. This graph can be seen as the
  interpretation $I_\ex$ with an infinite domain containing the
  elements $a$, $b$, $m_1$, and $m_2$. It interprets the predicate
  name $\textsf{email}$ as $\{(a,m_1), (b,m_1), (b,m_2)\}$ and all
  other predicate names as the empty set.  If we look at the
  interpretation of example~(7) from the Introduction in $I_\ex$, we
  have $\iexpr{\leqn{1}{\textsf{email}^-}\top}{I_\ex}=\{m_1\}$ for the
  shape, and
  $\iexpr{\exists \textsf{email}^-.\top}{I_\ex}=\{m_1,m_2\}$ for the
  target.
  \end{exa}

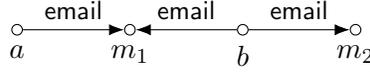
\begin{figure}[t]
\centering
\begin{tikzpicture}
\tikzstyle{grnode}=[draw,circle,fill=white,minimum size=4pt,
                            inner sep=0pt]
\draw (0,0) node [grnode] (a) [label=below:$a$] {};
\draw (1.5,0) node [grnode] (m1) [label=below:$m_1$] {};
\draw (3,0) node [grnode] (b) [label=below:$b$] {};
\draw (4.5,0) node [grnode] (m2) [label=below:$m_2$] {};
  
\draw [-{Latex[length=2mm]}] (a) -- (m1) node [sloped,midway,above ] {\small{$\textsf{email}$}};
\draw [-{Latex[length=2mm]}] (b) -- (m2) node [sloped,midway,above ] {\small{$\textsf{email}$}};
\draw [-{Latex[length=2mm]}] (b) -- (m1) node [sloped,midway,above ] {\small{$\textsf{email}$}};

\end{tikzpicture}
\caption{An example graph $G_\ex$}\label{fig:exgraph}
\end{figure}

\subsection{Targets and shape schemas}

SHACL identifies four special forms of shapes and calls them
\emph{targets}:  
\begin{description}
\item[Node targets] $\const c$ for any constant $c$.
\item[Class-based targets] $\exists
\textsf{type}/\textsf{subclass}^*.\const c$ for any constant $c$.
Here, $\textsf{type}$ and $\textsf{subclass}$ represent distinguished
IRIs from the RDF Schema vocabulary \cite{rdf11primer}.
\item[Subjects-of targets] $\exists p.\top$ for any property name
$p$.
\item[Objects-of targets] $\exists p^-.\top$ for any property name
$p$.
\end{description}

We now define a \emph{generalized shape schema} (or shape schema
for short) as a finite set of inclusion statements, where an
inclusion statement is of the form $\phi_1 \subseteq \phi_2$,
with $\phi_1$ and $\phi_2$ shapes.  A \emph{target-based shape
schema} is a shape schema that only uses targets, as defined
above, on the left-hand sides of its inclusion statements.  This
restriction corresponds to the shape schemas considered in real
SHACL.

As already explained in the Introduction, a graph $G$ \emph{conforms}
to a shape schema $\Sch$, denoted by $G \models \Sch$, if
$\sem{\phi_1}$ is a subset of $\sem{\phi_2}$, for every statement
$\phi_1 \subseteq \phi_2$ in $\Sch$.

Thus, any shape schema $\Sch$ defines the class of graphs that conform
to it.  We denote this class of graphs by
\[ \defc{\Sch} := \{\text{graph $G$} \mid G \models \Sch\}. \]
Accordingly, two shape schemas $\Sch_1$ and $\Sch_2$ are said to be
\emph{equivalent} if $\defc{\Sch_1} = \defc{\Sch_2}$.

\begin{exa}[Example \ref{ex:running2} continued] \label{ex:running3}
  Constraint (7) from the Introduction can be written as:
  \[
    \exists \textsf{email}^-.\top \subseteq {}\leqn{1}{\textsf{email}}\top
  \]
  We can see that $G_{ex}$ does not satisfy the constraint, because
  $\{m_1,m_2\}\not\subseteq\{m_2\}$. However, if we remove the
  triple $(b,\textsf{email},m_1)$ from $G_{ex}$, then the shape is
  interpreted as $\iexpr{\leqn{1}{\textsf{email}}\top}{I_\ex} = \{m_1,m_2\}$
  and the constraint does hold, as
  $\{m_1,m_2\}\subseteq\{m_1, m_2\}$.
\end{exa}

\section{Expressiveness of SHACL features} \label{secexpr}

When a complicated but influential new tool is proposed in the
community, in our case SHACL, we feel it is important to have a solid
understanding of its design.  Concretely, as motivated in the
Introduction, our goal is to obtain a clear picture of the relative
expressiveness of the features $\eq$, $\disj$, and $\closed$.  Our
methodology is as follows.

A \emph{feature set} $F$ is a subset of $\{\eq,\disj,\closed\}$.  The
set of all shape schemas using only features from $F$, besides the
standard constructs, is denoted by $\lang(F)$.  In particular, shape
schemas in $\lang(\emptyset)$ use only the standard constructs and
none of the three features.  Specifically, they only involve shapes
built from boolean connectives, constants, and qualified number
restrictions, with path expressions built from property names, $\id$
and the standard operators union, composition, and Kleene
star. 

We say that feature set $F_1$ is \emph{subsumed} by feature set $F_2$,
denoted by $F_1 \preceq F_2$, if every shape schema in $\lang(F_1)$ is
equivalent to some shape schema in $\lang(F_2)$.  As it will turn out,
\begin{equation}
  \label{eq:1}
  \tag{$*$}
  F_1 \preceq F_2 \quad \Leftrightarrow \quad F_1 \subseteq F_2,
\end{equation}
or intuitively, ``every feature counts.''
Note that the implication from right to left is trivial, but the
other direction is by no means clear from the outset.

More specifically, for every feature, we introduce a class of graphs,
as follows.  In what follows we fix some property name $r$.
\begin{description}
\item[Equality] $Q_{\eq}$ is the class of graphs where all $r$-edges
  are symmetric.  Note that $Q_{\eq}$ is definable in $\lang(\eq)$ by
  the single, target-based, inclusion statement
  $\exists r.\top \subseteq \eq(r^-,r)$.
\item[Disjointness] $Q_{\disj}$ is the class of graphs where all
  nodes with an outgoing $r$-edge have at least one symmetric
  $r$-edge.  This time, $Q_{\disj}$ is definable in $\lang(\disj)$, by
  the single, target-based, inclusion statement
  $\exists r.\top \subseteq \neg \disj(r^-,r)$.
\item[Closure] $Q_{\closed}$ is the class of graphs where for all
  nodes with an outgoing $r$-edge, all outgoing edges have label $r$.
  Again $Q_{\closed}$ is definable in $\lang(\closed)$ by the single,
  target-based, inclusion statement
  $\exists r.\top \subseteq \closed(r)$.
\end{description}

We establish the following theorem, from which the above equivalence
(\ref{eq:1}) immediately follows:

\begin{thm} 
  \label{bool}
  Let $X \in\{\eq,\disj,\closed\}$ and let $F$ be a feature set with
  $X \notin F$.  Then $Q_X$ is not definable in $\lang(F)$.
\end{thm}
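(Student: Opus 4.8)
The plan is to use Ehrenfeucht--Fra\"iss\'e-style indistinguishability arguments, tailored to the three features separately. For each $X \in \{\eq,\disj,\closed\}$, I would construct, for every $n$, two graphs $G_n$ and $H_n$ such that $G_n \in Q_X$ and $H_n \notin Q_X$ (or vice versa), yet $G_n$ and $H_n$ agree on every shape schema in $\lang(F)$ when $F$ is a feature set not containing $X$. Since a shape schema is just a finite conjunction of inclusion statements $\phi_1 \subseteq \phi_2$, and each $\phi_i$ is a fixed finite shape, it suffices to show: for each fixed $F$-free-of-$X$ shape $\phi$ and each $n$ large enough relative to the size of $\phi$, the sets $\sem{\phi}$ in $G_n$ and in $H_n$ ``look the same'' on the relevant nodes --- more precisely, that there is a bijection between the node sets of $G_n$ and $H_n$ respecting membership in $\sem{\phi}$ for all shapes $\phi$ of quantifier/counting rank at most $n$ that avoid $X$. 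Then a schema that is supposed to separate $Q_X$ would have to distinguish $G_n$ from $H_n$ at some finite $n$, contradiction.

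Concretely I expect the constructions to exploit exactly what $X$ can see that the remaining features cannot. For $X = \closed$: without $\closed$, a shape of rank $n$ can only ``probe'' finitely many property names and only to bounded multiplicity, so I would take $G_n$ to be a single node with outgoing $r$-edges only (in $Q_\closed$) and $H_n$ the same node with one extra edge labelled by a fresh property name $p$ not mentioned in the schema (not in $Q_\closed$); a closure-free schema over a finite set of property names simply cannot refer to $p$, so it cannot tell the graphs apart --- this is essentially the observation already made in the paper that $\closed(R)$ is not a finite boolean combination of $\neg\exists p.\top$. For $X = \eq$ and $X = \disj$ the situation is more delicate: here one must show that counting quantifiers, even with inverse and regular path expressions, together with the \emph{other} of the two features, cannot compare the $r$-successor set of a node with its $E$-reachable set for arbitrary $E$. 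I would use ``local'' gadgets: around a focus node $v$, put a large set $A$ of $r$-successors and a large set $B$ reachable by the path expression $r^-$ (or whatever $E$ is relevant), arranged so that in $G_n$ we have $A = B$ and in $H_n$ we have $A \neq B$ but $|A|, |B|$, and all bounded-radius neighbourhoods coincide; the key is that counting up to $n$ and the surviving feature only see cardinalities and bounded neighbourhoods, never the \emph{identity} of the two sets.

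The main obstacle, I expect, is the path expressions: Kleene star means shapes can reach arbitrarily far, so the ``bounded-radius neighbourhood'' intuition is not literally available. To handle this I would invoke Lemma~\ref{lem:id} to normalise path expressions, and then prove a pumping/locality lemma: on the specific graph families I construct (which I will build to be, e.g., disjoint unions of short cycles or paths, or ``flower'' graphs with many petals), the reflexive-transitive closure of any $E$ used in a rank-$n$ shape is determined by data of size bounded in $n$, so that adding more petals beyond $n$ changes nothing observable. A secondary subtlety is that the separating shape for $Q_X$ sits on the \emph{right-hand} side of an inclusion and is evaluated at \emph{target} nodes selected by the left-hand side; but since $\exists r.\top$ is used as target in all three defining schemas, I would make sure the bijection between $G_n$ and $H_n$ restricts to a bijection on nodes with an outgoing $r$-edge, so the targets match up too. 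Once the indistinguishability of $G_n$ and $H_n$ under all rank-$\le n$ $X$-free shapes is established --- which I anticipate is the bulk of the technical work and proceeds by induction on shape structure --- the theorem follows immediately, since any fixed schema in $\lang(F)$ has some finite maximal rank $n_0$, and $G_{n_0}, H_{n_0}$ then witness that the schema cannot define $Q_X$.
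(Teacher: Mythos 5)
Your overall framework is the one the paper uses: reduce a schema to its validation shape, then exhibit, for each finite vocabulary and each counting bound $m$, a pair of graphs---one in $Q_X$, one not---on which every $X$-free shape counting to at most $m$ evaluates identically, proving this by induction on shape structure (this is exactly Proposition~\ref{deprop} and the discussion following Theorem~\ref{bool}). Your $\closed$ case is essentially verbatim the paper's argument: a one-node graph with an $r$-loop versus the same graph with an extra loop labelled by a fresh property name, separated because closure-free shapes over a finite vocabulary cannot mention the fresh name (Lemma~\ref{lemclos}).

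For $X=\eq$ and $X=\disj$, however, there is a genuine gap: the entire difficulty of the theorem lies in actually producing the graph pairs and carrying the induction through, and the constructions you sketch point in a direction that is unlikely to work. Sparse ``local gadgets'' --- disjoint unions of short cycles, paths, or flower graphs --- run into two problems you do not resolve. First, the feature set $F$ may contain the \emph{other} test: when proving $\eq$ primitive you must also defeat $\disj(E,p)$ for \emph{every} path expression $E$, and an arrangement where the $r$-successor and $r$-predecessor sets of the focus node are equal in one graph but not in the other will typically change the truth value of $\disj(r^-,r)$ (or of some $\disj(E,p)$ with composite $E$) unless the graphs are engineered so that disjointness tests are \emph{uniformly} false on all relevant nodes. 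Second, counting quantifiers over \emph{composite} path expressions ($E_1\circ E_2$, Kleene star) are genuinely informative on sparse graphs, and a pumping/locality lemma will not tame them in general. The paper's way out is the opposite of locality: it takes very \emph{dense, homogeneous} graphs (for $\eq$, a complete directed graph with self-loops on $\max(3,m+1)$ nodes versus the same graph with a single edge deleted; for $\disj$, a directed $4$-cycle of cliques versus a $4$-cycle alternating cliques and independent sets), and proves as the key lemma that on these graphs the image of \emph{any} non-trivial path expression from any node contains the whole $r$-image or the whole $r^-$-image (Lemmas~\ref{lem:disjoint_g}, \ref{lem:disjoint_gprime}, \ref{lem:eq_notin_self}). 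This single structural fact simultaneously kills the surviving equality/disjointness feature, makes all counts exceed $m$, and trivializes $\closed$ (every node has every property outgoing). You would also need to control shape evaluation on the infinitely many node names \emph{outside} the graphs (the domain is all of $N$), which the paper handles via the safety analysis of Lemma~\ref{lem:safety}; your bijection idea does not address these external nodes. Without a concrete construction enjoying such a homogeneity property, the inductive step for $\geq_n E.\phi$ and for the surviving test cannot be completed.
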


For $X = \closed$, Theorem~\ref{bool} is proven differently than for
the other two features. First, we deal with the remaining features
through the following concrete result, illustrated in
Figure~\ref{figraphs}. The formal definition of the graphs illustrated
in Figure~\ref{figraphs} for $X=\disj$ will be provided in
Definition~\ref{def:graph_disj}.

\begin{prop} \label{deprop} Let $X = \disj$ or $\eq$.  Let $\Sigma$ be
  a finite vocabulary including $r$, and let $m$ be a nonzero natural
  number.  There exist two graphs $G$ and $G'$ with the following
  properties:
\begin{enumerate}
  \item $G'$ belongs to $Q_X$, but $G$ does not.
  \item For every shape $\phi$ over $\Sigma$ such that $\phi$ does not
    use $X$, and $\phi$ counts to at most $m$, we have
    \[ \sem \phi = \iexpr{\phi}{G'}. \]
\end{enumerate}
\end{prop}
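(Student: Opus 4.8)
The plan is to exhibit, for the given $\Sigma$ and $m$, explicit finite graphs $G$ and $G'$ built using only the property name $r$, such that $G' \in Q_X$, $G \notin Q_X$, and $G$ and $G'$ cannot be distinguished by any shape over $\Sigma$ that avoids $X$ and counts to at most $m$. Restricting both graphs to $r$-edges makes the possible presence of $\closed$ in a feature set harmless: over a graph using only $r$-edges, $\closed(R)$ is equivalent to $\top$ if $r \in R$ and to $\neg \exists r.\top$ otherwise, so $\closed$ adds nothing beyond the counting behaviour for the threshold $n=1$. The other basic constructs are handled by an indistinguishability argument of the bisimulation type, as follows.

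I would define a ``counting bisimulation up to $m$'' $Z \subseteq N \times N$ between $\dom^G$ and $\dom^{G'}$ satisfying: related nodes are named by exactly the same constants of $\Sigma$, and one is anonymous iff the other is; and for every path expression $E$ over $\Sigma$, every threshold $n \le m$, and every $Z$-class $\tau$, the node $x$ has at least $n$ $E$-successors in $\tau$ iff $x'$ does, and symmetrically for $E$-predecessors. The need to match path expressions (not just single edges) and both their forward and backward images is dictated by the presence of $p^-$, $\cup$, $\circ$ and $E^*$ in the grammar; once these closure conditions on $Z$ are in place, a routine structural induction on a shape $\phi$ that uses neither $\eq$ nor $\disj$ and counts to at most $m$ shows $x \in \iexpr{\phi}{G} \iff x' \in \iexpr{\phi}{G'}$ whenever $x \, Z \, x'$. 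Requiring in addition that $Z$ contain the diagonal $\{(x,x) : x \in N\}$ — which forces $G$ and $G'$ to be ``locally the same'' everywhere except in one controlled region — then gives $\iexpr{\phi}{G} = \iexpr{\phi}{G'}$, i.e.\ part~(2).

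For part~(1), the two graphs must agree everywhere under $Z$ while $G$ alone contains a node witnessing the failure of $\exists r.\top \subseteq \neg\disj(r^-,r)$ (for $X=\disj$), respectively of $\exists r.\top \subseteq \eq(r^-,r)$ (for $X=\eq$). Concretely I would start from a ``good gadget'' in which the relevant local symmetry of $r$ holds at every node, and obtain a ``bad gadget'' by breaking that symmetry at a single node in a way that is invisible to counting-$m$, direction-aware, reachability-aware observation: one arranges that this node has $m+1$ copies of each kind of $r$-neighbour and deletes or re-orients one of them, so that every capped neighbour-multiset is preserved and the sets of nodes reachable (forward and backward, along every $E$ over $\Sigma$) are unchanged. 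The formal definition of this pair for $X = \disj$ will be given in Definition~\ref{def:graph_disj}, and the $X=\eq$ case follows the same scheme with a slightly different local symmetry.

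I expect the real difficulty to be exactly the design of the bad gadget: a violation of $Q_X$ is a local, essentially first-order condition at one node, and it must be smuggled into $G$ without producing any difference detectable by counting-bounded, inverse-sensitive, Kleene-star-sensitive shapes — in particular the ``broken'' $r$-neighbour must itself have, recursively, a type already realised in the good gadget. Kleene star is what makes this delicate, since $r^*$ probes unbounded distance, so the construction has to keep all reachability sets intact and confine the change to edge multiplicities that stay at or above the threshold $m$. Once the gadget is fixed, checking the closure conditions on $Z$ is a finite verification and the soundness induction for part~(2) is routine.
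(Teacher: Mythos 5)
Your overall architecture (an indistinguishable pair of graphs plus a structural induction on shapes) matches the paper's, and your gadget for $X=\eq$ --- give the critical node $m+1$ neighbours of each kind and delete one edge --- is essentially the paper's construction (a complete directed graph with self-loops on $\max(3,m+1)$ nodes, minus a single edge $(b,a)$). But there are two genuine gaps. First, the ``break the symmetry at a single node by deleting or re-orienting one of $m+1$ copies'' template cannot produce a violation of $Q_\disj$. A graph fails $Q_\disj$ only if some node with an outgoing $r$-edge has \emph{no} symmetric $r$-edge at all, i.e.\ its entire $r$-out-neighbourhood is disjoint from its entire $r$-in-neighbourhood; deleting one back-edge from a node that had $m+1$ symmetric neighbours leaves $m$ symmetric neighbours, so the node still satisfies $\neg\disj(r^-,r)$. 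The required perturbation is global at the offending node, and making it invisible is precisely where the work lies: the paper uses a directed $4$-cycle of blobs of size $\max(m,3)$, alternating cliques and independent sets, so that every ``odd'' node's out-neighbourhood (the next blob) and in-neighbourhood (the previous blob) are disjoint, compared against a graph in which every blob is a clique (Definition~\ref{def:graph_disj}). Your proposal as written has no candidate for this pair, and the claim that the $\eq$ and $\disj$ cases follow ``the same scheme'' is false.

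Second, your closure condition on $Z$ quantifies over \emph{all} path expressions $E$ over $\Sigma$, so ``checking the closure conditions on $Z$ is a finite verification'' does not hold as stated: there are infinitely many $E$, and $\cup$, $\comp$ and ${}^*$ do not propagate class-wise counting conditions automatically. What is actually needed --- and what the paper supplies in Lemmas~\ref{lem:disjoint_g}, \ref{lem:disjoint_gprime} and \ref{lem:eq_notin_self} --- is an induction on the structure of $E$ establishing a homogeneity invariant tailored to the gadgets, e.g.\ that every non-simple $\id$-free $E$ satisfies $\iexpr{E}{G}(v)\supseteq\iexpr{r}{G}(v)$ or $\iexpr{E}{G}(v)\supseteq\iexpr{r^-}{G}(v)$ for the relevant nodes $v$, together with $\iexpr{E}{G}(v)-\iexpr{r}{G}(v)\neq\emptyset$. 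It is this invariant, not a generic counting bisimulation, that makes the $\eq(E,p)$ (resp.\ $\disj(E,p)$) and $\geqn{n}{E}{\psi}$ cases of the shape induction go through, and it also dictates the shape of the gadgets in the first place. Identifying and proving that invariant is the substance of the proof; without it, and without a workable $\disj$ gadget, the proposal covers only the routine outer layer of the argument.
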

Here, ``counting to at most $m$'' means that all quantifiers $\geq_n$
used in $\phi$ satisfy $n\leq m$. For $X=\eq$, this proposition is
reformulated as Proposition~\ref{prop:disj}, and for $X=\disj$, this
proposition is reformulated as Proposition~\ref{prop:eq}.

To see that Proposition~\ref{deprop} indeed establishes
Theorem~\ref{bool} for the three features under consideration, we use
the notion of \emph{validation shape} of a shape schema.  This shape
evaluates to the set of all nodes that violate the schema.  Thus, the
validation shape is an abstraction of the ``validation report'' in
SHACL \cite{shacl}: a graph conforms to a schema if and only if the
validation shape evaluates to the empty set.  The validation shape can
be formally constructed as the disjunction of
$\phi_1 \land \neg \phi_2$ for all statements
$\phi_1 \subseteq \phi_2$ in the schema.

Now consider a shape schema $\Sch$ not using feature $X$.  Let $m$ be
the maximum count used in shapes in $\Sch$, and let $\Sigma'$ be the
set of constants and property names mentioned in $\Sch$.  Now given
$\Sigma=\Sigma'\cup\{r\}$ and $m$, let $G$ and $G'$ be the two graphs
exhibited by the Proposition, and let $\phi$ be the validation shape
for $\Sch$.  Then $\phi$ will evaluate to the same result on $G$ and
$G'$. However, for $\Sch$ to define $Q_X$, validation would have to
return the empty set on $G'$ but a nonempty set on $G$.  We conclude
that $\Sch$ does not define $Q_X$.

We will prove Proposition~\ref{deprop} for $X=\disj$ in
Section~\ref{secdisjproof}, and $X=\eq$ in
Section~\ref{sec:equality}. We will show Theorem~\ref{bool} for
$X=\closed$ in Section~\ref{seclosure}. However, we first need to
establish some preliminaries on path expressions.

\begin{figure}
\centering
\begin{tabular}{ccc}
\toprule
$X$ & $G$ & $G'$ \\
\midrule
$\eq$ &
\raisebox{-0.5\height}{\scalebox{0.3}{\includegraphics{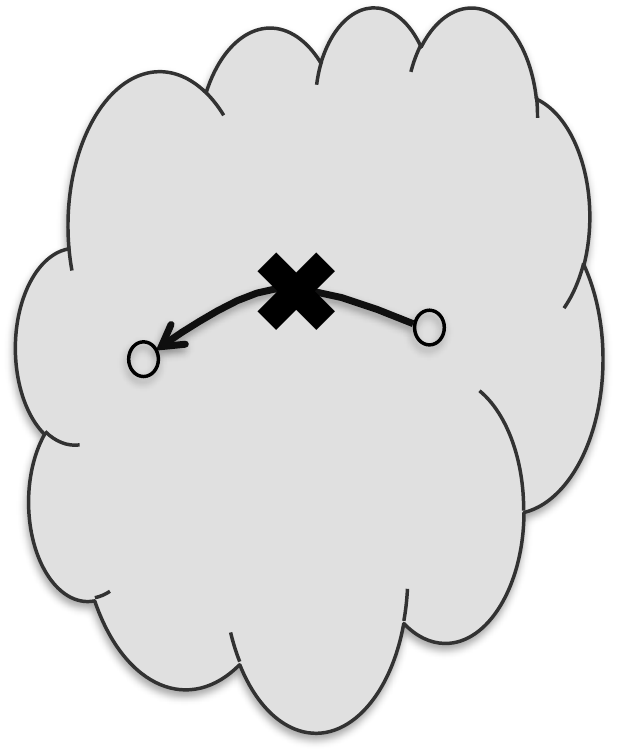}}}
&
\raisebox{-0.5\height}{\scalebox{0.3}{\includegraphics{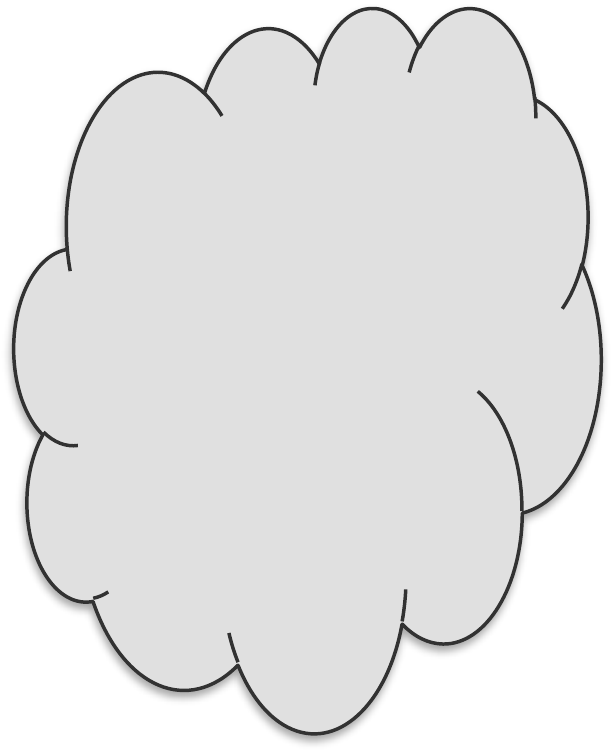}}}
\\[9ex]
$\disj$ &
\raisebox{-0.5\height}{\scalebox{0.2}{\includegraphics{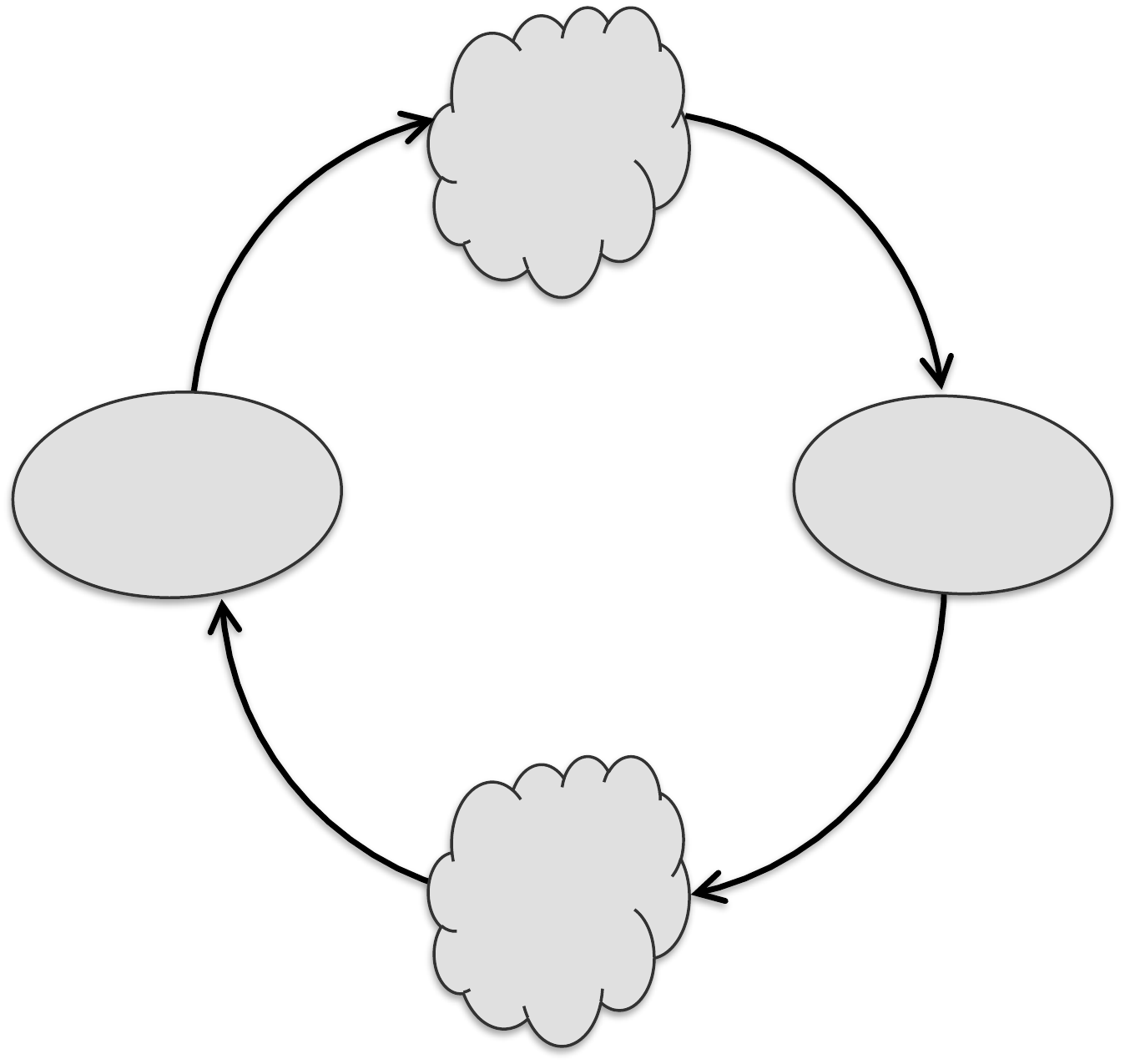}}}
&
\raisebox{-0.5\height}{\scalebox{0.2}{\includegraphics{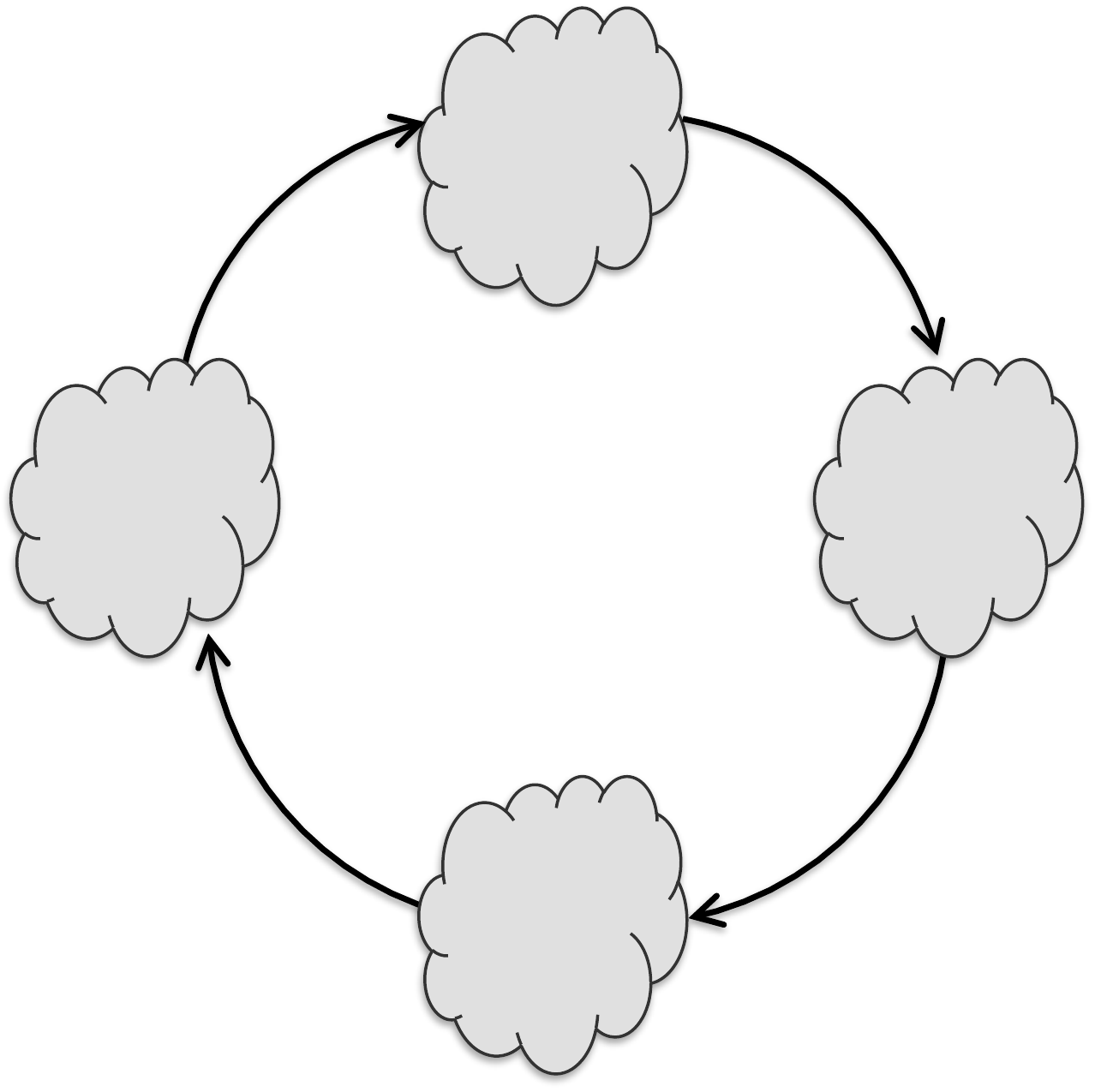}}}
\\
\bottomrule
\end{tabular}
\caption{Graphs used to prove
Proposition~\ref{deprop}.  The nodes are taken outside $\Sigma$.
For $X=\eq$, the cloud shown for $G'$ represents a complete
directed graph
on $m+1$ nodes, with self-loops,
and $G$ is the same graph with one directed edge
removed.  For $X=\disj$, in the picture for $G$, each cloud again
stands for a complete graph, but this time on $M=\max(m,3)$ nodes,
and without the self-loops. Each oval stands
for a set of $M$ separate nodes.  An arrow from one blob to the
next means that every node of the first blob has a directed edge
to every node of the next blob.
So, $G$ is a directed 4-cycle of alternating clouds and ovals, and
$G'$ is a directed 4-cycle of clouds.}
\label{figraphs}
\end{figure}

\subsection{Preliminaries on path expressions}
\label{sec:prelimpath}

We call a path expression $E$ \emph{equivalent} to a path expression
$E'$ when for every graph $G$, $\iexpr{E}{G}=\iexpr{E'}{G}$. We call a
path expression $E$ \emph{$id$-free} whenever $\id$ is not present in
the expression.

\begin{lem}
  \label{lem:id}
  Every path expression $E$ is equivalent to: $\id$, or $E'\cup\id$, or
  $E'$ where $E'$ is an $\id$-free path expression.
\end{lem}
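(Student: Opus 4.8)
The plan is to prove Lemma~\ref{lem:id} by structural induction on the path expression $E$, using the grammar $E ::= \id \mid p \mid p^- \mid E \cup E \mid E \circ E \mid E^*$, and establishing the stronger statement that every $E$ is equivalent to exactly one of the three \emph{normal forms} $\id$, $E' \cup \id$, or $E'$ with $E'$ being $\id$-free. The base cases are immediate: $\id$ is already of the first form, while $p$ and $p^-$ are already $\id$-free and hence of the third form.

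For the inductive step, I would handle each of the three operators, in each case relying on the inductive hypothesis that the immediate subexpressions have already been put into one of the three normal forms, and then doing a small case analysis on which form each subexpression takes. For union $E_1 \cup E_2$: if either operand normalizes to $\id$ or $E'' \cup \id$, the whole union contains $\id$ and we can collect all the $\id$-free parts of both operands into a single $\id$-free expression $E'$ (using associativity/commutativity of $\cup$, which hold semantically), producing $E' \cup \id$; the only subtlety is the degenerate situation where no $\id$-free part survives, which yields the form $\id$. If neither operand contains $\id$, the union is $\id$-free. For composition $E_1 \circ E_2$: here I would use the semantic facts that $\id \circ E \equiv E$, $E \circ \id \equiv E$, and, more generally, $(E_1' \cup \id)\circ(E_2' \cup \id) \equiv (E_1' \circ E_2') \cup E_1' \cup E_2' \cup \id$ — so again the $\id$-free parts combine and $\id$ either persists (when both operands had it, or when the normal form happens to degenerate to $\id$) or does not. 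For Kleene star $E^*$: the key observation is that $(E')^*$ is already semantically reflexive, so $(E')^* \equiv (E')^* \cup \id$ is harmless; concretely, $\id^* \equiv \id$, $(E' \cup \id)^* \equiv (E')^* \equiv (E' \circ \ldots)$-type reasoning gives $(E')^* \cup \id$, and $(E')^* \equiv \big((E')^+\big) \cup \id$ where $(E')^+ = E' \circ (E')^*$ can be written $\id$-free (noting $(E')^*$ itself may again need the outer $\id$ peeled, but one can write $(E')^* \equiv (E'\circ\text{(already-normalized $(E')^*$)})\cup\id$ and fold the inner occurrence — or, more cleanly, just observe that the result is always of the form $E'' \cup \id$ for some $\id$-free $E''$, or of the form $\id$ when $E'$ is semantically empty, which for syntactic path expressions over the grammar it never is unless $E'$ itself was trivial).

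The main obstacle I anticipate is the Kleene-star case, specifically writing $(E')^*$ as $E'' \cup \id$ with $E''$ genuinely $\id$-free: naively one wants $E'' = E' \circ (E')^*$, but $(E')^*$ is not $\id$-free, so this needs care. The clean fix is to prove the auxiliary claim that for any $\id$-free $E'$, the expression $(E')^*$ is equivalent to $\big(E' \circ (E')^*\big) \cup \id$ where we \emph{keep} the inner $(E')^*$ as an abbreviation — but since we are producing an actual syntactic expression, I would instead argue that $(E')^* \cup \id \equiv (E')^*$ holds trivially (reflexive-transitive closure is reflexive), so the normal form $E'' \cup \id$ with $E'' := (E')^*$ works \emph{provided} we relax ``$\id$-free'' appropriately — but that is circular. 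The actual resolution: observe $(E')^*$ already \emph{is} an $\id$-free path expression in the grammar (it contains no literal $\id$ token), so $E^* $ where $E$ normalizes to an $\id$-free $E'$ is simply $(E')^*$, an $\id$-free expression — the third form — and we are done; and if $E$ normalizes to $\id$ then $\id^* \equiv \id$, and if $E$ normalizes to $E' \cup \id$ then $(E' \cup \id)^* \equiv (E')^*$, again $\id$-free. So in fact the star case always lands in the first or third form, never needing the second. With this observation the proof is entirely routine, and I would present it compactly by giving, for each operator, the equational normalization rules used and noting that $\id$-freeness of the combined $\id$-free subparts is preserved because the grammar operators applied to $\id$-free expressions produce $\id$-free expressions.
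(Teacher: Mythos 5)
Your proposal is correct and follows essentially the same route as the paper: structural induction with a case analysis on the normal forms of the subexpressions, using the same distributivity identities for composition and the observation that $\id^*\equiv\id$, $(E'\cup\id)^*\equiv(E')^*$, and $(E')^*$ is itself syntactically $\id$-free so the star case never needs the second form. The detour you take before arriving at that last observation is unnecessary but ends in the right place.
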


\begin{proof}
  The proof is by induction on the structure of $E$. When $E$ is
  $id$-free or $\id$, the claim directly follows. We consider the
  following inductive cases:
  \begin{itemize}
  \item $E$ is $E_1\comp E_2$. By induction, we consider nine
    cases. When both $E_1$ and $E_2$ are $\id$-free, $E$ is
    $\id$-free. Whenever $E_1$ is $\id$, clearly $E$ is equivalent to
    $E_2$. Analogously, whenever $E_2$ is $\id$, $E$ is equivalent to
    $E_1$.

    Consider the two cases where $E_1$ is $E_1'\cup\id$ with $E_1'$ an
    $\id$-free path expression. First, when $E_2$ is $E_2'\cup\id$
    with $E_2'$ an $\id$-free path expression, then $E$ is equivalent
    to $E_1'\comp E_2'\cup E_1'\cup E_2'\cup \id$ which is of the form
    $E'\cup\id$ with $E'$ the $\id$-free path expression
    $E_1'\comp E_2'\cup E_1'\cup E_2'$. Second, when $E_2$ is
    $\id$-free, $E$ is equivalent to $E_1'\comp E_2 \cup E_2$ which is
    $\id$-free.

    Finally, consider the two case where $E_1$ is $\id$-free and $E_2$
    is $E_2'\cup\id$ with $E_2'$ an $\id$-free path expression, then
    $E$ is equivalent to $E_1\comp E_2'\cup E_1$ which is $\id$-free.
    
  \item $E$ is $E_1\cup E_2$. This case follows immediately by
    induction.
  \item $E$ is $E_1^*$. There are three cases. When $E_1$ is $\id$,
    $E$ is equivalent to $\id$. When $E_1$ is $E_1'\cup\id$ with
    $E_1'$ an $\id$-free path expression, then $E$ is equivalent to
    $E_1'^*$ and clearly $E_1'^*$ is $\id$-free. Lastly, when $E_1$ is
    $id$-free, clearly $E$ is as well. \qedhere
  \end{itemize}
\end{proof}

We also need the notion of ``safe'' path expressions together with the
following Lemma, detailing how path expressions can behave on the
nodes outside a graph.  One can divide all path expressions into the
``safe'' and the ``unsafe'' ones.

\begin{defi}[Safety]
  A path expression is \emph{safe} if one of the following conditions
  holds:
  \begin{itemize}
  \item $E$ is $p$ or $p^-$ with $p$ a property name
  \item $E$ is $E_1\cup E_2$ and both $E_1$ and $E_2$ are safe
  \item $E$ is $E_1\comp E_2$ and at least one of $E_1$ or $E_2$ is safe 
  \end{itemize}
  Otherwise, $E$ is \emph{unsafe}.
\end{defi}

\begin{lem}
  \label{lem:safety}
  Let $E$ be an $\id$-free path expression and let $G$ be a graph.
  \begin{itemize}
  \item If $E$ is safe, then $\iexpr{E}{G} \subseteq N_G \times N_G$.
  \item If $E$ is unsafe, then
    $\iexpr{E}{G} = (\iexpr{E}{G} \cap N_G \times N_G) \cup \{(a,a)
    \mid a\in N - N_G\}$.
  \end{itemize}
\end{lem}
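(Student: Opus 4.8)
The plan is to prove both items together by structural induction on the $\id$-free path expression $E$, keeping the graph $G$ fixed throughout and writing $D := \{(a,a) \mid a \in N - N_G\}$ for the diagonal on the nodes lying outside $G$; note $D \cap (N_G \times N_G) = \emptyset$. The induction hypothesis (IH) is exactly the statement of the lemma for every proper subexpression. Since $E$ is $\id$-free, it has one of the shapes $p$, $p^-$, $E_1 \cup E_2$, $E_1 \circ E_2$, or $E_1^*$; the first two are safe by definition, $E_1^*$ is unsafe by definition, and $\cup$, $\circ$ can go either way. If $E$ is $p$ or $p^-$, then $\sem E$ is the set of $p$-edges of $G$, possibly reversed, which by the definition of the canonical interpretation of a graph is contained in $N_G \times N_G$; this is exactly the safe conclusion.

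For $E = E_1 \cup E_2$: if both $E_i$ are safe then $E$ is safe and $\sem E = \sem{E_1} \cup \sem{E_2} \subseteq N_G \times N_G$ by the IH. Otherwise $E$ is unsafe, at least one $\sem{E_i}$ equals $(\sem{E_i} \cap N_G \times N_G) \cup D$ by the IH, and the other is either of that same form or contained in $N_G \times N_G$; in all cases $\sem{E_i} \subseteq (N_G \times N_G) \cup D$, so $\sem E \subseteq (N_G \times N_G) \cup D$, while $D \subseteq \sem E$ because some $\sem{E_i}$ already contains $D$. Intersecting with $N_G \times N_G$ and using $D \cap (N_G \times N_G) = \emptyset$ then gives $\sem E = (\sem E \cap N_G \times N_G) \cup D$.

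For $E = E_1 \circ E_2$, first suppose one of $E_1, E_2$ is safe, say $E_1$; take $(a,b) \in \sem E$ with witness $x$, so $(a,x) \in \sem{E_1}$ and $(x,b) \in \sem{E_2}$. Then $a, x \in N_G$ by the IH for $E_1$, and from $(x,b) \in \sem{E_2}$ with $x \in N_G$ the IH for $E_2$ forces $b \in N_G$ as well (if $E_2$ is unsafe, $(x,b)$ cannot lie in $D$ since $x \in N_G$). Hence $\sem E \subseteq N_G \times N_G$ and $E$ is safe; the case where $E_2$ is safe is analogous. If neither $E_i$ is safe then $E$ is unsafe: from $D \subseteq \sem{E_1}$ and $D \subseteq \sem{E_2}$ we get $D \subseteq \sem E$ by composing $(a,a)$ with $(a,a)$; and for any $(a,b) \in \sem E$ with witness $x$, the IH says each of $(a,x)$ and $(x,b)$ lies in $N_G \times N_G$ or is of the form $(c,c)$ with $c \notin N_G$, so a four-way case split on these options yields either $a, b \in N_G$ or $a = x = b \notin N_G$. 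Thus $\sem E = (\sem E \cap N_G \times N_G) \cup D$.

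For $E = E_1^*$ (always unsafe): $D \subseteq \sem E$ is immediate since the reflexive-transitive closure contains $(a,a)$ for every $a \in N$. Conversely, take $(a,b) \in \sem{E_1}^*$ realized by a path $a = x_0, x_1, \dots, x_k = b$ with each $(x_{i-1}, x_i) \in \sem{E_1}$; by the IH for $E_1$, every such edge is either in $N_G \times N_G$ or is $(c,c)$ with $c \notin N_G$, so as soon as some $x_j$ lies outside $N_G$ its path-neighbours equal $x_j$, and by induction along the path all $x_i$ coincide. Hence either all $x_i \in N_G$, giving $(a,b) \in N_G \times N_G$, or $a = b \notin N_G$, giving $(a,b) \in D$; so $\sem E = (\sem E \cap N_G \times N_G) \cup D$, completing the induction. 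The only place where genuine care is needed is the composition case with both subexpressions unsafe — tracking how membership in $N - N_G$ propagates through the intermediate node $x$ — together with the closely related propagation-along-a-path argument for the Kleene star.
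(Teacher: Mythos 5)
Your proof is correct and follows essentially the same structural induction as the paper, including the same four-way case split for the composition of two unsafe expressions. The only cosmetic difference is in the Kleene-star case, where you use a uniform path-propagation argument instead of the paper's split on whether $E_1$ is safe or unsafe; both arguments are equally valid.
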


\begin{proof}
By induction. If $E$ is a property name or its inverse, then the
claim clearly holds.  Now assume $E$ is of the form $E_1 \cup
E_2$.  The cases where both $E_1$ and $E_2$ are safe, or both are
unsafe, are clear by induction. If $E_1$ is safe but $E_2$ is
not, then $\iexpr{E}{G} = \iexpr{E_1}{G} \cup \iexpr{E_2}{G} =
(\iexpr{E_1}{G} \cap N_G \times N_G) \cup (\iexpr{E_2}{G} \cap
N_G\times N_G) \cup \{(a,a) \mid a\in N - N_G\} = \iexpr{E}{G}
\cap (N_G \times N_G) \cup \{(a,a) \mid a\in N - N_G\}$. The same
reasoning can be used when $E_2$ is safe but $E_1$ is not.

Next, assume $E$ is of the form $E_1 \comp E_2$.  Furthermore
assume $E_1$ is safe, so that $E$ is safe. 
Let $(x,y) \in \iexpr{E}{G}$.  Then there exists $z$
    such that $(x,z) \in \iexpr{E_1}{G}$ and $(z,y) \in
    \iexpr{E_2}{G}$.  Since $E_1$ is safe, $x$ and $z$ are in
    $N_G$. Now regardless of whether $E_2$ is safe or not,
since $(z,y) \in \iexpr{E_2}{G}$ and $z \in N_G$, we get $y \in
N_G$ as desired.  The same reasoning can be used when $E_2$ is safe.

If $E$ is not safe, we verify that $\iexpr{E}{G} =
(\iexpr{E}{G} \cap N_G \times N_G) \cup \{(a,a) \mid a\in N -
N_G\}$.  For the inclusion from left to right, take $(x,y) \in
\iexpr{E}{G}$. Then there exists $z$ such that $(x,z) \in
\iexpr{E_1}{G}$ and $(z,y) \in \iexpr{E_2}{G}$.  By induction,
there are four cases.  If both $(x,z)$ and $(z,y)$ are in
$N_G\times N_G$, then clearly $(x,y) \in \iexpr{E}{G}\cap
N_G\times N_G$. If both $(x,z),(z,y)$ are in $\{(a,a) \mid a\in N
- N_G\}$ clearly $(x,y) \in \{(a,a) \mid a\in N - N_G\}$. Lastly,
the two cases where one of $(x,z)$ and $(z,y)$ is in $N_G\times
N_G$ and the other in $\{(a,a) \mid a\in N - N_G\}$, are not
possible.

For the inclusion from right to left, take $(x,y) \in
\iexpr{E}{G} \cap (N_G\times N_G) \cup \{(a,a) \mid a\in N -
N_G\}$. If $(x,y) \in \iexpr{E}{G} \cap N_G\times N_G$ then
$(x,y) \in \iexpr{E}{G}$. Otherwise, $(x,y) = (a,a)$ for some $a
\in N-N_G$. Then $(a,a) \in \iexpr{E_1}{G}$ and $(a,a) \in
\iexpr{E_2}{G}$ since $E_1$ and $E_2$ are not safe. We conclude
$(a,a) \in \iexpr{E_1\comp E_2}{G}$ as desired.  

Next, assume $E$ is of the form $E_1^*$. Note that $E$ is unsafe.
By definition of Kleene star, we only need to verify that
$\iexpr{E}{G} \subseteq (\iexpr{E}{G} \cap N_G \times N_G) \cup
\{(a,a) \mid a\in N - N_G\}$.
Let $(x,y) \in \iexpr{E}{G}$. If $x = y$, the claim clearly
holds.  Otherwise, we consider two cases:

\begin{itemize}
\item If $E_1$ is safe, we know
  $\iexpr{E_1}{G} \subseteq N_G\times N_G$. Clearly the
  reflexive-transitive closure of a subset of $N_G\times N_G$ is also
  a subset of $N_G\times N_G$. Therefore, $(x,y)\in N_G\times N_G$ as
  desired.
\item If $E_1$ is unsafe, then by induction
  $\iexpr{E_1}{G} = (\iexpr{E_1}{G} \cap N_G \times N_G) \cup \{(a,a)
  \mid a\in N - N_G\}$. As $x\neq y$ we know $(x,y)$ is in the
  reflexive-transitive closure of $\iexpr{E_1}{G} \cap N_G \times N_G$
  which is a subset of $N_G\times N_G$. \qedhere
\end{itemize}
\end{proof}

Lastly, we define the notion of a \emph{string}, together with the
following Lemma, detailing a convenient property of path expressions.

\begin{defi}
  A string $s$ is a path expression of the form: $\id$, or $s'\comp p$
  or $s'\comp p^-$ where $s'$ is a string and $p$ is a property name.
\end{defi}

\begin{lem}
  \label{lem:strings}
  For every path expression $E$ and every natural number $n$, there
  exists a finite non-empty set of strings $U$ s.t. for every graph
  $G$ with at most $n$ nodes we have
  $\iexpr{E}{G}=\bigcup_{s\in U}\iexpr{s}{G}$.
\end{lem}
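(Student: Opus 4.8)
The plan is to prove the statement by induction on the structure of $E$, with $n$ fixed throughout. The only bookkeeping needed is that strings are closed under a concatenation operation: given strings $s$ and $t$, one defines a string $s \cdot t$ --- by recursion on $t$, setting $s \cdot \id := s$ and $s \cdot (t' \circ p) := (s \cdot t') \circ p$, and symmetrically for $p^-$ --- so that $\iexpr{s \cdot t}{G} = \iexpr{s}{G} \circ \iexpr{t}{G}$ on every graph $G$, where $\circ$ denotes composition of binary relations. For iterated concatenations I write $s_1 \cdots s_i$. The base cases are immediate: for $E = \id$ put $U := \{\id\}$, and for $E = p$ or $E = p^-$ put $U := \{\id \circ p\}$ or $U := \{\id \circ p^-\}$. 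For $E = E_1 \cup E_2$, apply the induction hypothesis to $E_1$ and $E_2$ with the same $n$, obtaining $U_1$ and $U_2$, and put $U := U_1 \cup U_2$. For $E = E_1 \circ E_2$, put $U := \{s \cdot t \mid s \in U_1,\ t \in U_2\}$; then on any graph $G$ we get $\iexpr{E}{G} = \iexpr{E_1}{G} \circ \iexpr{E_2}{G} = \bigcup_{s \in U_1,\, t \in U_2}\iexpr{s}{G} \circ \iexpr{t}{G} = \bigcup_{u \in U}\iexpr{u}{G}$, and $U$ is finite and non-empty. Iterating, for each $i \geq 1$ the $i$-fold composition $\iexpr{E_1 \circ \cdots \circ E_1}{G}$ ($i$ factors) equals $\bigcup\{\iexpr{s_1 \cdots s_i}{G} \mid s_1,\dots,s_i \in U_1\}$.

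The case $E = E_1^*$ is the one requiring care. Let $U_1$ be the set obtained from the induction hypothesis for $E_1$ and $n$, and fix a graph $G$ with at most $n$ nodes; write $R := \iexpr{E_1}{G}$. Each string in $U_1$ either equals $\id$, evaluating on $G$ to $\{(x,x) \mid x \in N\}$, or has length at least one, evaluating to a subset of $N_G \times N_G$ since every edge of $G$ lies in $N_G \times N_G$. Hence $R = \bigcup_{s \in U_1}\iexpr{s}{G}$ splits as $(R \cap N_G \times N_G) \cup D$ with $D \subseteq \{(x,x) \mid x \in N\}$. From this I would derive that $\iexpr{E_1^*}{G}$ --- the reflexive--transitive closure of $R$ over the infinite domain $N$ --- equals $\iexpr{\id}{G} \cup \bigcup_{i=1}^{n}\iexpr{E_1 \circ \cdots \circ E_1}{G}$ ($i$ factors): an $R$-walk that starts outside $N_G$ is forced by the split to be constant, so it contributes only to $\iexpr{\id}{G}$; an $R$-walk that starts inside $N_G$ stays inside $N_G$, and by the usual pigeonhole shortcut may be taken to have length at most $\sharp N_G \leq n$. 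Combining this identity with the composition case, and using the string $\id$ for the reflexive part, yields the finite non-empty set $U := \{\id\} \cup \bigcup_{i=1}^{n}\{s_1 \cdots s_i \mid s_1,\dots,s_i \in U_1\}$.

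I expect the Kleene-star case to be the main obstacle, and within it the delicate point is that a graph is an interpretation over the infinite domain $N$ of all node names, so one must not forget the loops $(x,x)$ for $x \notin N_G$ that $E_1^*$ always produces. These loops are harmless: they are exactly what the string $\id$ contributes, while all other pairs stay inside the finite set $N_G$, so the star bottoms out after at most $n$ unfoldings. The resulting set $U$ is purely syntactic, depending only on $E$ and $n$ and not on $G$, as required.
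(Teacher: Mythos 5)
Your proof is correct and takes essentially the same route as the paper's: structural induction on $E$, with union and composition handled by closing the string sets under union and (re-associated) concatenation, and the Kleene star truncated to at most $n$ unfoldings by a pigeonhole shortcut on walks inside $N_G$. If anything you are more careful than the paper on two bookkeeping points: you make the concatenation of two strings literally a string of the grammar, and in the star case you explicitly account for the identity loops $(x,x)$ with $x \notin N_G$ that arise over the infinite domain $N$, whereas the paper's auxiliary lemma is stated only for relations over a finite set and is applied to $\iexpr{E'}{G}$ without comment.
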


The proof of Lemma~\ref{lem:strings} can be found in the appendix.

\subsection{Disjointness}
\label{secdisjproof}

We present here the proof for $X=\disj$.  The general strategy is to
first characterize the behavior of path expressions on $G$ and $G'$.
Then the Proposition is proven with a stronger induction hypothesis,
to allow the induction to carry through.  A similar strategy is
followed in the proof for $X=\eq$.

We begin by defining the graphs $G$ and $G'$ more formally.
\begin{defi}[$G_\disj(\Sigma,m)$]
  \label{def:graph_disj} 
  Let $\Sigma$ be a finite vocabulary including $r$, and let $m$ be a
  natural number. We define the graph $G_\disj(\Sigma,m)$ over the set
  of property names in $\Sigma$ as follows. Let $M=\max(m,3)$. There
  are $4M$ nodes in the graph, which are chosen outside of
  $\Sigma$. We denote these nodes by $x_i^j$ for $i=1,2,3,4$ and
  $j = 1,\dots,M$.  (In the description that follows, subscripts range
  from $1$ to $4$ and superscripts range from $1$ to $M$.)  For each
  property name $p$ in $\Sigma$, the graph has the same set of
  $p$-edges. We describe these edges next. There is an edge from
  $x_i^j$ to $x_{i\bmod 4 + 1}^{j'}$ for every $i$, $j$ and
  $j'$. Moreover, if $i$ is 2 or 4, there is an edge from $x_i^j$ to
  $x_i^{j'}$ for all $j \neq j'$. So, formally, we have:
  $G_\disj(\Sigma,m) := \{ (x_i^j,p,x_{i \bmod 4 + 1}^{j'}) \mid i \in
  \{1,\dots,4\} \;\text{and}\; j,j’ \in \{1,\dots,M\} \;\text{and}\; p
  \in \Sigma \cap P\} \cup \{(x_i^j,p,x_i^{j'}) \mid i \in
  \{1,\dots,4\} \;\text{and}\; j,j’ \in \{1,\dots,M\} \;\text{and}\;
  j\neq j’ \;\text{and}\; p \in \Sigma \cap P\} $.

\end{defi}

Thus, in Figure~\ref{figraphs}, bottom left, one can think of the left oval as
the set of nodes $x_1^j$; the top cloud as the set of nodes
$x_2^j$; and so on.  We call the nodes $x_i^j$ with $i=2,4$ the
\emph{even nodes}, and the nodes $x_i^j$ with $i=1,3$ the
\emph{odd nodes}.

\begin{defi}[$G'_\disj(\Sigma,m)$]
  \label{def:graph_disj_bool}
We define the graph $G'_\disj(\Sigma,m)$
in the same way as $G_\disj(\Sigma,m)$ except
that there is an edge from $x_i^j$ to $x_i^{j'}$ for all $i$ and
$j \neq j'$ (not only for even $i$ values).
\end{defi}

We characterize the behavior of path expressions on the graph
$G_\disj(\Sigma,m)$ as follows.

\begin{lem}
  \label{lem:disjoint_g}
  Let $G$ be $G_\disj(\Sigma,m)$.  Call a path expression
  \emph{simple} if it is a union of expressions of the form
  $s_1\comp\dots\comp s_n$, where $n\geq 1$ and one of the $s_i$ is a
  property name while the other $s_i$ are ``$\id$''.  Let $E$ be a
  non-simple, $\id$-free path expression over $\Sigma$.  The following
  three statements hold:
  \begin{enumerate}
  \item
    \begin{enumerate}[(A)]
    \item \label{caseA} for all even nodes $v$ of $G$, we have
      $\iexpr{E}{G}(v) \supseteq \iexpr{r}{G}(v)$; or
    \item \label{caseB} for all even nodes $v$ of $G$, we have
      $\iexpr{E}{G}(v) \supseteq \iexpr{r^-}{G}(v)$.
    \end{enumerate}
  \item
    \begin{enumerate}[(A)]
      \setcounter{enumii}{2}
    \item \label{caseC} for all odd nodes $v$ of $G$, we have $\iexpr{E}{G}(v) \supseteq \iexpr{r}{G}(v)$; or
    \item \label{caseD} for all odd nodes $v$ of $G$, we have
      $\iexpr{E}{G}(v) \supseteq \iexpr{r^-}{G}(v)$.
    \end{enumerate}
  \item
    \label{item3}
    For all nodes $v$ of $G$, we have $\iexpr{E}{G}(v) - \iexpr{r}{G}(v)
    \neq \emptyset$.
  \end{enumerate}
\end{lem}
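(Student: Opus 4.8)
The plan is to collapse $G$ into a single binary relation with a four‑layer cyclic structure, and then to prove, by structural induction on path expressions, a statement somewhat stronger than the three claims. Write $R:=\sem r$; since every property name of $\Sigma$ has interpretation $R$ in $G$, $\sem E$ for $\id$-free $E$ over $\Sigma$ depends only on the shape of $E$ and on $R$. Put $L_i:=\{x_i^1,\dots,x_i^M\}$, so that $N_G=L_1\cup\dots\cup L_4$, and read all layer indices cyclically. Unfolding the definition of $G$ gives, for every $j$: $\sem r(x_i^j)$ equals $L_{i+1}\cup(L_i\setminus\{x_i^j\})$ if $i\in\{2,4\}$ and $L_{i+1}$ if $i\in\{1,3\}$; and $\sem{r^-}(x_i^j)$ equals $L_{i-1}\cup(L_i\setminus\{x_i^j\})$ if $i\in\{2,4\}$ and $L_{i-1}$ if $i\in\{1,3\}$. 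By Lemma~\ref{lem:safety}, $\sem E(v)\subseteq N_G$ for $\id$-free $E$ and $v\in N_G$, so everything takes place inside $N_G$; and an $\id$-free path expression over $\Sigma$ is simple exactly when it is a finite union of property names, in which case $\sem E=R$. I would then prove, by induction on the structure of an $\id$-free path expression $E$ over $\Sigma$, the following: \emph{if $E$ is non-simple, there exist directions $o_e,o_o\in\{+1,-1\}$ such that}
\begin{itemize}
\item[(I)] $\sem E(x_i^j)\supseteq L_{i+o_e}\cup(L_i\setminus\{x_i^j\})$ for all even $i$ and all $j$;
\item[(II)] $\sem E(x_i^j)\supseteq L_{i+o_o}$ for all odd $i$ and all $j$;
\item[(III)] $\sem E(v)-\sem r(v)\neq\emptyset$ for all $v\in N_G$.
\end{itemize}
By the formulas just displayed, (I) yields claim~(1) (case~(A) if $o_e=+1$, case~(B) if $o_e=-1$), (II) yields claim~(2) (case~(C) if $o_o=+1$, case~(D) if $o_o=-1$), and (III) is claim~(3). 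Strengthening claims~(1) and~(2) to (I) and~(II) is what makes the induction go: (I) and~(II) record that \emph{whole} layers are reached, which is exactly the data one needs to push across a composition.

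The base case is $E=p^-$, the only non-simple $\id$-free atom: here $\sem E=R^-$, and $o_e=o_o=-1$ work, all three items being read off the formula for $\sem{r^-}$. If $E=E_1\cup E_2$, at least one disjunct, say $E_1$, is non-simple (otherwise $E$ is a union of property names, hence simple), and since $\sem E\supseteq\sem{E_1}$ the directions and all three items transfer from $E_1$. If $E=E_1^{*}$ (always non-simple): either $E_1$ is simple, so $\sem{E_1}=R$ and its reflexive--transitive closure sends every node of $G$ to all of $N_G$, which makes (I)--(III) immediate because $|N_G|=4M>|\sem r(v)|$; or $E_1$ is non-simple, so $\sem E\supseteq\sem{E_1}$ and the items transfer.

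The heart of the argument is $E=E_1\circ E_2$, which (being $\id$-free) is always non-simple; here $\sem E(v)=\bigcup_{w\in\sem{E_1}(v)}\sem{E_2}(w)$, and I would split into four subcases by whether $E_1$ and $E_2$ are simple or non-simple, using that a simple factor has interpretation $R$ and invoking the induction hypothesis for a non-simple one. All four run on two elementary facts about $G$, valid because $M\geq3$: (a) applying $R$ or $R^-$ to a full even layer $L_i$, or to any subset of an even layer of size $\geq2$, yields $L_i\cup L_{i+1}$, respectively $L_i\cup L_{i-1}$; and (b) applying $R$ or $R^-$ to a full odd layer $L_i$ yields the full even layer $L_{i+1}$, respectively $L_{i-1}$. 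Combined with (I) and~(II), these say that applying $\sem{E'}$ of a non-simple $E'$ to a full even layer recovers that layer plus an adjacent odd layer, and to a full odd layer yields an adjacent full even layer. For an even focus node $v=x_i^j$: (I) for $E_1$ (or $\sem{E_1}=R$) places the $\geq2$-element set $L_i\setminus\{v\}$ inside $\sem{E_1}(v)$, and applying $\sem{E_2}$ (or $R$) to it recovers all of $L_i$ together with a full odd layer $L_{i+o}$; this gives (I) for $E$ with $o_e:=o$, and gives (III) because $v\in L_i\subseteq\sem E(v)$ whereas $v\notin\sem r(v)$. For an odd focus node $v=x_i^j$: (II) for $E_1$ (or $\sem{E_1}=R$) places a full even layer $L_b$ --- which is exactly $\sem r(v)$ or $\sem{r^-}(v)$ --- inside $\sem{E_1}(v)$, and applying $\sem{E_2}$ (or $R$) to $L_b$ recovers $L_b$ together with a full odd layer adjacent to it; this gives (II) for $E$ with the corresponding direction, and gives (III) because an odd layer cannot be contained in the even layer $\sem r(v)$. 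In every subcase $o_e$ and $o_o$ depend only on $E_1$ and~$E_2$, hence are uniform over all even and all odd focus nodes, as (I) and~(II) require.

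I expect the main obstacle to be not any single computation but designing the strengthened statement: the bare claims~(1)--(3) do not survive composition (knowing only $\sem{E_1}(v)\supseteq\sem r(v)$ throws away the ``whole layer'' information the next application needs), and it is items~(I) and~(II), together with the counting facts (a) and~(b) --- the only place where the choice $M=\max(m,3)$ is actually used --- that carry the composition step, in particular the subcase where both factors are non-simple. Once the invariant is fixed, the remaining verification is routine.
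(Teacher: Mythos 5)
Your proposal is correct and follows essentially the same route as the paper's proof: the same induction on $\id$-free path expressions, the same observation that simple subexpressions evaluate to $\sem r$, and the same composition argument (direction inherited from $E_2$ at even nodes via the $\geq 2$-element clique remainder $L_i\setminus\{v\}$, and from $E_1$ at odd nodes via a full even layer), with $M\ge 3$ used in exactly the same place. Your invariants (I)--(II) are in fact just claims (1)--(2) of the lemma rewritten with the explicit layer formulas rather than a genuine strengthening, but the extra containment $L_i\subseteq\sem E(v)$ you extract in the composition step for claim (3) is also what the paper uses.
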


\begin{proof}
  For $i=1,2,3,4$, define the \emph{i-th blob of nodes} to be the set
  $X_i=\{x_i^1,\dots,x_i^M\}$ (see Figure~\ref{figraphs}).  We also
  use the notations $\nxt(1)=2$; $\nxt(2)=3$; $\nxt(3)=4$;
  $\nxt(4)=1$; $\prev(4)=3$; $\prev(3)=2$; $\prev(2)=1$; $\prev(1)=4$.
  Thus $\nxt(i)$ indicates the next blob in the cycle, and $\prev(i)$
  the previous.

  The proof is by induction on the structure of $E$. If $E$ is a
  property name, $E$ is simple so the claim is trivial. If $E$ is of
  the form $p^-$, cases B and D are clear and we only need to verify
  the third statement.  That holds because for any $i$, if $v\in X_i$,
  then $\iexpr{p^-}{G}(v)\supseteq X_{\prev(i)}$ and clearly
  $X_{\prev(i)} - \iexpr{r}{G}(v)\neq\emptyset$.  We next consider the
  inductive cases.

  First, assume $E$ is of the form $E_1 \cup E_2$.  When at least one
  of $E_1$ and $E_2$ is not simple, the three statements immediately
  follow by induction, since $\sem E \supseteq \sem{E_1}$ and
  $\sem E \supseteq \sem{E_2}$.  If $E_1$ and $E_2$ are simple, then
  $E$ is simple and the claim is trivial.

  Next, assume $E$ is of the form $E_1^*$.  If $E_1$ is not simple,
  the three statements follow immediately by induction, since
  $\sem E \supseteq \sem{E_1}$.  If $E_1$ is simple, cases A and C
  clearly hold for $E$, so we only need to verify the third statement.
  That holds because, by the form of $E$, every node $v$ is in
  $\iexpr{E}{G}(v)$, but not in $\iexpr{r}{G}(v)$, as $G$ does not
  have any self-loops.

  Finally, assume $E$ is of the form $E_1 \comp E_2$.  Note that if
  $E_1$ or $E_2$ is simple, clearly cases A and C apply to them.  The
  argument that follows will therefore also apply when $E_1$ or $E_2$
  is simple.  We will be careful not to apply the induction hypothesis
  for the third statement to $E_1$ and $E_2$.

  We first focus on the even nodes, and show the first and the third
  statement.  We distinguish two cases.

  \begin{itemize}
  \item If case~A applies to $E_2$, then we show that case~A
    also applies to $E$.  Let $v \in X_i$ be an even node.
    We verify the following two inclusions:
    \begin{itemize}

    \item $\iexpr{E}{G}(v) \supseteq X_i$. Let $u\in X_i$. If $u\neq
      v$, choose a third node $w\in X_i$. Since $X_i$ is a clique,
      $(v,w)\in \iexpr{E_1}{G}$ regardless of whether case A or B
      applies to $E_1$. By case~A for $E_2$, we
      also have $(w,u)\in \iexpr{E_2}{G}$, whence $u\in
      \iexpr{E}{G}(v)$ as desired. If $u=v$, we similarly have $(v,w)
      \in \iexpr{E_1}{G}$ and $(w,u) \in \iexpr{E_2}{G}$ as desired.

    \item $\iexpr{E}{G}(v) \supseteq X_{\nxt(i)}$. Let $u\in
      X_{\nxt(i)}$ and choose $w\neq v \in X_i$. Regardless of
      whether case A or B applies to $E_1$, we have $(v,w) \in
      \iexpr{E_1}{G}$. By case~A for $E_2$, we also have $(w,u)\in \iexpr{E_2}{G}$, whence $u\in \iexpr{E}{G}(v)$ as desired.
    \end{itemize}
    We conclude that $\iexpr{E}{G}(v) \supseteq X_i \cup
    X_{\nxt(i)} \supseteq \sem r$ as desired.

  \item If case~B applies to $E_2$, then we show that case~B
    also applies to $E$. This is analogous to the previous case,
    now verifying that $\iexpr{E}{G}(v) \supseteq X_i \cup X_{\prev(i)}$.
  \end{itemize}

  In both cases, the third statement now follows for even nodes $v$.
  Indeed, $v\in X_i \subseteq \iexpr{E}{G}(v)$ but
  $v\notin \iexpr{r}{G}(v)$.

  We next focus on the odd nodes, and show the second and the third
  statement.  We again consider two cases.
  \begin{itemize}
  \item If case~C applies to $E_1$, then we show that case~C also
    applies to $E$. Let $v \in X_i$ be an odd node.  Note that
    $\iexpr{r}{G}(v) = X_{\nxt(i)}$.  To verify that
    $\iexpr{E}{G}(v) \supseteq X_{\nxt(i)}$, let $u\in
    X_{\nxt(i)}$. Then $u$ is even. Choose $w\neq u\in
    X_{\nxt(i)}$. Since case~C applies to $E_1$, we have
    $(v,w) \in \iexpr{E_1}{G}$.  Moreover, since $X_{\nxt(i)}$ is a
    clique, $(w,u)\in \iexpr{E_2}{G}$ regardless of whether case A or
    B applies to $E_2$. We obtain $(v,u) \in \iexpr{E}{G}$ as desired.

    \bigskip We also verify the third statement for odd nodes in this
    case.  We distinguish two further cases.
    \begin{itemize}
    \item If case~A applies to $E_2$, any node
      $u \in X_{\nxt(\nxt(i))}$ belongs to $\iexpr{E}{G}(v)$, and
      clearly these $u$ are not in $X_{\nxt(i)}=\sem r(v)$.
    \item If case~B applies to $E_2$, then, since $X_i$ is a clique,
      any node $u\in X_i$ belongs to $\iexpr{E}{G}(v)$, and again
      these $u$ are not in $X_{\nxt(i)}$.
    \end{itemize}
  \item If case~D applies to $E_1$, then we show that case~D also
    applies to $E$.  This is analogous to the previous case, now
    verifying that $\iexpr{E}{G}(v) \supseteq X_{\prev(i)}$.  In this
    case the third statement for odd nodes is clear, as clearly
    $X_{\prev(i)} - X_{\nxt(i)} \neq \emptyset$. \qedhere
  \end{itemize}
\end{proof}

We similarly characterize the behavior of path expressions on the
other graph.

\begin{lem}
  \label{lem:disjoint_gprime}
  Let $G'$ be $G'_\disj(\Sigma,m)$ and let $E$ be a non-simple,
  $\id$-free path expression over $\Sigma$. The following statements
  hold:
  \begin{enumerate}
  \item
    $\iexpr{E}{G'} \supseteq \iexpr{r}{G'}$ or 
    $\iexpr{E}{G'} \supseteq \iexpr{r^-}{G'}$.
  \item
    For all nodes $v$ of $G'$,
    we have $\iexpr{E}{G'}(v) - \iexpr{r}{G'}(v) \neq \emptyset$.
  \end{enumerate}
\end{lem}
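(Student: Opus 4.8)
The plan is to follow the inductive template of the proof of Lemma~\ref{lem:disjoint_g}, which here becomes considerably simpler: since all four blobs $X_i$ of $G'$ are cliques, the odd/even distinction disappears, and the dichotomy between $\iexpr{\cdot}{G'}\supseteq\iexpr{r}{G'}$ (call it case~A) and $\iexpr{\cdot}{G'}\supseteq\iexpr{r^-}{G'}$ (case~B) applies uniformly at every node. First I would record the basic facts. Using the notation $X_i$, $\nxt$, $\prev$ as in the proof of Lemma~\ref{lem:disjoint_g}, for every $v\in X_i$ we have $\iexpr{r}{G'}(v)=(X_i\setminus\{v\})\cup X_{\nxt(i)}$ and $\iexpr{r^-}{G'}(v)=(X_i\setminus\{v\})\cup X_{\prev(i)}$, and since all property names of $\Sigma$ have exactly the same edges in $G'$, the same holds with $r$ replaced by any property name. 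Consequently every simple $\id$-free path expression over $\Sigma$ --- which is just a union of property names --- evaluates in $G'$ to $\iexpr{r}{G'}$; as in the previous proof, a simple subexpression of $E$ can therefore be handled exactly as an expression to which case~A applies.

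Next I would carry out the induction on the structure of $E$. The case $E=p$ is vacuous, since $p$ is simple. For $E=p^-$ we have $\iexpr{E}{G'}=\iexpr{r^-}{G'}$, so statement~(1) holds through case~B, and statement~(2) holds because for $v\in X_i$ the set $X_{\prev(i)}\subseteq\iexpr{E}{G'}(v)$ is disjoint from $\iexpr{r}{G'}(v)$. For $E=E_1\cup E_2$: if some $E_k$ is non-simple, both statements follow from the induction hypothesis applied to $E_k$, using $\iexpr{E}{G'}\supseteq\iexpr{E_k}{G'}$; otherwise $E$ itself is simple and there is nothing to prove. For $E=E_1^*$: if $E_1$ is non-simple, the induction hypothesis for $E_1$ gives both statements, again via $\iexpr{E}{G'}\supseteq\iexpr{E_1}{G'}$; if $E_1$ is simple, then $\iexpr{E}{G'}\supseteq\iexpr{E_1}{G'}=\iexpr{r}{G'}$ yields~(1), and~(2) holds because $v\in\iexpr{E}{G'}(v)$ by reflexivity of the Kleene star while $v\notin\iexpr{r}{G'}(v)$, as $G'$ has no self-loops.

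The main case is $E=E_1\comp E_2$. The key observation is that whichever of case~A, case~B, or simplicity holds for $E_1$, one always has $\iexpr{E_1}{G'}(v)\supseteq X_i\setminus\{v\}$ for every $v\in X_i$, and similarly $\iexpr{E_2}{G'}(w)\supseteq X_i\setminus\{w\}$ for every $w\in X_i$. For statement~(1) I split on $E_2$: if case~A applies to $E_2$, then case~A holds for $E$, which I check by showing $\iexpr{E}{G'}(v)\supseteq(X_i\setminus\{v\})\cup X_{\nxt(i)}$ for each $v\in X_i$ --- to reach a target $u\in X_i\setminus\{v\}$, route $v\to w\to u$ through an intermediate $w\in X_i\setminus\{v,u\}$ (which exists since $M\geq 3$), and to reach $u\in X_{\nxt(i)}$ route through any $w\in X_i\setminus\{v\}$, in both cases using $w\in\iexpr{E_1}{G'}(v)$ and $u\in\iexpr{r}{G'}(w)\subseteq\iexpr{E_2}{G'}(w)$; the case~B branch is symmetric and gives case~B for $E$. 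For statement~(2) one round-trip argument covers every branch: choose $w\in X_i\setminus\{v\}$, so that $(v,w)\in\iexpr{E_1}{G'}$ and $(w,v)\in\iexpr{E_2}{G'}$ (because $v\in X_i\setminus\{w\}$), whence $v\in\iexpr{E}{G'}(v)$ while $v\notin\iexpr{r}{G'}(v)$; this deduces~(2) for $E$ from statement-(1)-type facts about $E_1$ and $E_2$, so there is no circularity. I expect the main obstacle to be bookkeeping: tracking which of the two forms of~(1) is being proved in each branch, and verifying that each routing step has an available intermediate node --- which is precisely why $M$ was taken to be at least $3$. The remaining $4$-cycle arithmetic is routine.
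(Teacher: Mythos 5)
Your proof is correct and follows essentially the same route as the paper's own argument (given in the appendix): induction on the structure of $E$, treating simple subexpressions as satisfying case~A, splitting the composition case on whether case~A or case~B applies to $E_2$, routing through an intermediate node of the clique $X_i$ (using $M\geq 3$), and deliberately avoiding the second-statement induction hypothesis for $E_1,E_2$. Your explicit ``key observation'' $\iexpr{E_1}{G'}(v)\supseteq X_i\setminus\{v\}$ and the round-trip argument for statement~(2) are just tidier packagings of steps the paper performs inline.
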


\begin{proof}
  The proof is similar to the proof of Lemma \ref{lem:disjoint_g}, but
  simpler due to the homogeneous nature of the graph $G'$. We omit the
  proof. 
\end{proof}

We are now ready to prove the non-obvious part of
Proposition~\ref{deprop} where $X=\disj$. We use the following version
of the proposition.

\begin{prop}
  \label{prop:disj}
  Let $V$ be the common set of nodes of the graphs
  $G=G_\disj(\Sigma,m)$ and $G'=G'_\disj(\Sigma,m)$.  Let $\phi$ be a
  shape over $\Sigma$ that does not use $\disj$, and that
  counts to at most $m$. Then either $\iexpr{\phi}{G} \cap V = \emptyset$ or
  $\iexpr{\phi}{G} \supseteq V$.  Moreover,
  $\iexpr{\phi}{G} = \iexpr{\phi}{G'}$.
\end{prop}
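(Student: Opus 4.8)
The plan is to prove both assertions simultaneously by induction on the structure of $\phi$; observe that the statement of the proposition is already its own strengthened induction hypothesis, the ``all-or-nothing on $V$'' clause being exactly what the counting case will need. Throughout, $V$ is the common node set, $M=\max(m,3)$, the nodes $x_i^j$ lie outside $\Sigma$, every property name of $\Sigma$ is interpreted in $G$ (and in $G'$) by the same binary relation as $r$, and neither $G$ nor $G'$ has a self-loop. Lemmas~\ref{lem:id} and~\ref{lem:safety} yield two facts used repeatedly: \textbf{(i)} for every path expression $E$ over $\Sigma$ and every $v\in V$, $\iexpr{E}{G}(v)\subseteq V$ (and likewise in $G'$), since $v\in N_G$; and \textbf{(ii)} for $v\notin V$, $\iexpr{E}{G}(v)$ equals $\emptyset$ when the $\id$-normal form of $E$ (in the sense of Lemma~\ref{lem:id}) is a safe $\id$-free expression, and equals $\{v\}$ otherwise --- a value depending only on $E$ and hence coinciding in $G$ and in $G'$. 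The base cases $\top$ and $\const c$ are immediate ($\const c$ evaluates to $\{c\}$ with $c\notin V$, and to the same set in both graphs), the Boolean connectives too (unions, intersections and complements of sets that are all-or-nothing on $V$ are again all-or-nothing on $V$, and these operations commute with the hypothesis $\iexpr{\psi_i}{G}=\iexpr{\psi_i}{G'}$), and $\closed(R)$ as well: for $v\in V$, conformance to $\closed(R)$ holds in $G$ iff every property name of $\Sigma$ belongs to $R$ --- a condition independent of $v$ and identical for $G'$ --- while every $v\notin V$ conforms in both graphs.

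For the counting shape $\geqn{n}{E}{\psi}$ with $n\le m$, first replace $E$ by an equivalent path expression in the normal form of Lemma~\ref{lem:id}; the subcase $E=\id$ is disposed of by Remark~\ref{rem:id} ($\geqn{1}{\id}{\psi}$ is equivalent to $\psi$, and $\geqn{n}{\id}{\psi}$ to $\neg\top$ for $n>1$), via the induction hypothesis for $\psi$. Otherwise $E$ is $\id$-free, or $E'\cup\id$ with $E'$ $\id$-free; let $E_0$ be this underlying $\id$-free expression, so that $\iexpr{E}{G}(v)\supseteq\iexpr{E_0}{G}(v)$ always. Now apply the induction hypothesis to $\psi$. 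If $\iexpr{\psi}{G}\cap V=\emptyset$, then by fact~(i) the set $\iexpr{E}{G}(v)\subseteq V$ contains no $\psi$-satisfying node for any $v\in V$, so $\geqn{n}{E}{\psi}$ fails everywhere on $V$, and likewise in $G'$. If $V\subseteq\iexpr{\psi}{G}$, then by fact~(i) we have $\iexpr{E}{G}(v)\subseteq V\subseteq\iexpr{\psi}{G}$ for $v\in V$, so the relevant count equals $\sharp\iexpr{E}{G}(v)\ge\sharp\iexpr{E_0}{G}(v)$, and $\sharp\iexpr{E_0}{G}(v)\ge M\ge m\ge n$: either $E_0$ is simple, in which case $\iexpr{E_0}{G}(v)=\iexpr{r}{G}(v)$ has size $M$ at odd nodes and $2M-1$ at even nodes, or $E_0$ is non-simple, in which case the first and second statements of Lemma~\ref{lem:disjoint_g} give $\iexpr{E_0}{G}(v)\supseteq\iexpr{r}{G}(v)$ or $\iexpr{E_0}{G}(v)\supseteq\iexpr{r^-}{G}(v)$, again of size $\ge M$. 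Hence $\geqn{n}{E}{\psi}$ holds everywhere on $V$ in $G$, and by the analogous computation in the homogeneous graph $G'$ --- where every node has $r$-out-degree $2M-1$ and Lemma~\ref{lem:disjoint_gprime} supplies the matching lower bound for non-simple expressions --- also everywhere on $V$ in $G'$. In every case $\geqn{n}{E}{\psi}$ is all-or-nothing on $V$ and agrees on $V$ between the two graphs; outside $V$, agreement follows from fact~(ii) together with $v\in\iexpr{\psi}{G}\Leftrightarrow v\in\iexpr{\psi}{G'}$.

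For the equality shape $\eq(E,p)$ --- permitted here, since only $\disj$ is excluded --- recall $\iexpr{p}{G}=\iexpr{r}{G}$ and again normalize $E$ by Lemma~\ref{lem:id}. If $E$ is $\id$ or $E'\cup\id$, then $v\in\iexpr{E}{G}(v)$ for every $v$, while $v\notin\iexpr{r}{G}(v)$ because neither graph has a self-loop; hence $\eq(E,p)$ is satisfied nowhere, in $G$ and in $G'$ alike, so $\iexpr{\eq(E,p)}{G}=\emptyset=\iexpr{\eq(E,p)}{G'}$. If $E$ is $\id$-free and simple it is equivalent to a union of property names of $\Sigma$, hence to $r$, and $\eq(E,p)$ holds everywhere in both graphs. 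If $E$ is $\id$-free and non-simple, the third statement of Lemma~\ref{lem:disjoint_g} (resp.\ the second of Lemma~\ref{lem:disjoint_gprime}) gives $\iexpr{E}{G}(v)\setminus\iexpr{r}{G}(v)\neq\emptyset$ for every $v\in V$, so $\eq(E,p)$ fails everywhere on $V$ in both graphs; and on the nodes outside $V$, by fact~(ii), $\eq(E,p)$ holds at $v$ iff $E$ is safe --- the same syntactic condition for $G$ and $G'$ --- so the two interpretations coincide there too. This exhausts all cases.

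I expect the counting case to be the main obstacle. The entire point of the construction is that counting up to $m$ cannot detect the extra cliques among the odd nodes of $G'$; turning this into a proof forces an essential use of the strengthened hypothesis ``$\psi$ is all-or-nothing on $V$'' (a bare hypothesis $\iexpr{\psi}{G}=\iexpr{\psi}{G'}$ would not suffice), in combination with the numerical slack $M\ge m$ and the out-degree lower bounds furnished by Lemmas~\ref{lem:disjoint_g} and~\ref{lem:disjoint_gprime}. The bookkeeping around $\id$ (the normal form, the safe/unsafe dichotomy, the behaviour on the nodes outside $V$) is routine but must be carried out carefully to nail down the interpretations on $N\setminus V$.
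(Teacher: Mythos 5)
Your proof is correct and follows essentially the same route as the paper's: structural induction on $\phi$ with the same strengthened ``all-or-nothing on $V$'' hypothesis, the same case split via the $\id$-normal form of Lemma~\ref{lem:id}, the safety dichotomy of Lemma~\ref{lem:safety} for nodes outside $V$, and the out-degree/containment bounds from Lemmas~\ref{lem:disjoint_g} and~\ref{lem:disjoint_gprime} in the counting and equality cases. The only differences are organizational (extracting facts~(i) and~(ii) up front, and treating the simple subcase of $E_0$ explicitly where the paper leaves it implicit), which if anything make the argument slightly more careful.
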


\begin{proof}
  This is proven by induction on the structure of $\phi$.  Let $H$ be
  $G$ or $G'$.  If $\phi$ is $\top$, then
  $\iexpr{\top}{H} = N \supseteq V$.  If $\phi$ is $\const c$, then
  $\iexpr{\{c\}}{H} = \{c\} \subseteq \Sigma$ and we know that
  $\Sigma \cap V = \emptyset$.  Next assume $\phi$ is of the form
  $\eq(E, p)$. Using Lemma~\ref{lem:id}, we distinguish four different
  cases for $E$.
  \begin{itemize}
  \item $E$ is $\id$. According to Lemma~\ref{lem:disjoint_g} and
    Lemma~\ref{lem:disjoint_gprime} $\iexpr{E}{H}$ will always contain
    either $\iexpr{p}{H}$ or $\iexpr{p^-}{H}$. In both cases,
    $\iexpr{E}{H}(v)$ clearly never equals
    $\iexpr{\id}{H}(v)=\{v\}$. Therefore,
    $\iexpr{\phi}{H}\cap V=\emptyset$.
  \item $E$ is $E'\cup \id$ where $E'$ is $\id$-free or $E$ itself is
    $\id$-free and non-simple. Lemmas \ref{lem:disjoint_g} and
    \ref{lem:disjoint_gprime} tell us that
    $\iexpr EH(v) - \iexpr rH(v) \neq \emptyset$ for every $v \in V$.
    Since $\iexpr rH = \iexpr pH$, this means $H,v \nmodels \phi$ for
    $v \in V$, or, equivalently, $\iexpr{\phi}{G} \cap V = \emptyset$.
    To see that, moreover, $\iexpr{\phi}{G} = \iexpr{\phi}{G'}$, it
    remains to show that $G,v \models \phi$ iff $G',v \models \phi$
    for all node names $v \notin V$.
  \item $E$ is $\id$-free and simple. Then $\iexpr EH = \iexpr pH$, so
    clearly $\iexpr{\phi}{H} = N \supseteq V$.
  \end{itemize}
  We still need to show $\iexpr{\phi}{G}=\iexpr{\phi}{G'}$. Clearly,
  $\iexpr pG(v)=\iexpr p{G'}(v)=\emptyset$.  Now by
  Lemma~\ref{lem:safety}, if $E$ is safe, then also
  $\iexpr EG(v)=\iexpr E{G'}(v)=\emptyset$, so $G,v\models\phi$ and
  $G',v\models\phi$.  On the other hand, if $E$ is unsafe, then by the
  same Lemma $\iexpr EG(v)=\iexpr E{G'}(v)=\{v\} \neq \emptyset$, so
  $G,v\nmodels\phi$ and $G',v\nmodels\phi$, as desired.

  As the final base case, assume $\phi$ is of the form $\closed(R)$.
  If $\Sigma$ contains a property name $p$ not in $R$, then
  $\iexpr{\phi}{H}\cap V =\emptyset$, since every node in $H$ has an
  outgoing $p$-edge.  Otherwise, i.e., if $\Sigma \subseteq R$, we
  have $\iexpr{\phi}{H}\supseteq V$, since every node in $H$ has only
  outgoing edges labeled by property names in $\Sigma$.  To see that,
  moreover, $\iexpr{\phi}{G} = \iexpr{\phi}{G'}$, it suffices to
  observe that trivially $H,v \models \phi$ for all node names
  $v \notin V$.

  We next consider the inductive cases. The cases for the boolean
  connectives follow readily by induction. Finally, assume $\phi$ is
  of the form $\geqn{n}{E}{\phi_1}$.  By induction, there are two
  possibilities for $\phi_1$:
  \begin{itemize}
  \item If $\iexpr{\phi_1}{H} \cap V = \emptyset$, then also
    $\iexpr{\phi}{H} \cap V = \emptyset$ since path expressions can
    only reach nodes in some graph from nodes in that graph.

  \item If $\iexpr{\phi_1}{H} \supseteq V$, we distinguish three cases
    using Lemma~\ref{lem:id}. First, when $E$ is $\id$, then if $n=1$,
    $\iexpr{\phi}{H}\supseteq V$. Otherwise, if $n\neq 1$, then
    $\iexpr{\phi}{H}=\emptyset$. Next, when $E$ is $\id$-free or
    $E'\cup\id$ with $E'$ an $\id$-free path expression, it suffices
    to show that $\sharp\iexpr{E'}{H}(v) \geq n$ for all $v\in V$.  By
    Lemmas \ref{lem:disjoint_g} and \ref{lem:disjoint_gprime} we know
    that $\iexpr{E_1}{H}(v)$ contains $\iexpr{r}{H}(v)$ or
    $\iexpr{r^-}{H}(v)$.  Inspecting $H$, we see that each of these
    sets has at least $\max(3, m) \geq n$ elements, as
    desired. Finally, when $E$ is equivalent to an $\id$-free path
    expression or whenever $E$ simply does not use $\id$, the argument
    is analogous to the previous case. 
  \end{itemize}

  In both cases we still need to show that
  $\iexpr{\phi}{G} = \iexpr{\phi}{G'}$.  We already showed that
  $\iexpr{\phi}{G} \supseteq V$ and $\iexpr{\phi}{G'} \supseteq V$, or
  $\iexpr{\phi}{G} \cap V = \emptyset$ and
  $\iexpr{\phi}{G'} \cap V = \emptyset$.  Therefore, towards a proof
  of the equality, we only need to consider the node names not in $V$.

  For the inclusion from left to right, take
  $x \in \iexpr{\phi}{G}-V$.  Since $G,x \models \phi$, there exist
  $y_1$, \dots, $y_n$ such that $(x,y_i) \in \iexpr EG$ and
  $G,y_i \models \phi_1$ for $i=1,\dots,n$.  However, since
  $x \notin V$, by Lemma~\ref{lem:safety}, all $y_i$ must equal $x$.
  Hence, $n=1$ and $(x,x) \in \sem E$ and $G,x \models \phi_1$.  Then
  again by the same Lemma, $(x,x) \in \iexpr E{G'}$, since $G$ and
  $G'$ have the same set of nodes $V$.  Moreover, by induction,
  $G',x \models \phi_1$.  We conclude that $G',x \models \phi$ as
  desired.  The inclusion from right to left is argued symmetrically.
\end{proof}

\subsection{Equality}
\label{sec:equality}

Next, we turn our attention to Proposition~\ref{deprop} for
$X=\eq$. We define the graphs from
Figure~\ref{figraphs} formally.

\begin{defi}
  Let $\Sigma$ be a finite vocabulary including $r$, and let $m$ be a
  natural number.  Choose a set $V$ of node names outside $\Sigma$, of
  cardinality $M:=\max(3,m+1)$.  Fix two arbitrary nodes $a$ and $b$
  from $V$.  We define the graph $G_\eq(\Sigma)$ over the set of
  property names from $\Sigma$ as follows. For each property name $p$
  in $\Sigma$, the set of $p$-edges in $G_\eq(\Sigma)$ equals
  $V\times V - (b,a)$.  We define the graph $G_\eq'(\Sigma)$
  similarly, but with $V \times V$ as the set of $p$-edges.
\end{defi}

So, $G'_\eq(\Sigma,m)$ is a complete graph, and $G_\eq(\Sigma,m)$ is a
complete graph with one edge $(b,a)$ removed.

\begin{lem}
  \label{lem:eq_notin_self}
  Let $E$ be an $\id$-free path expression over $\Sigma$ and let
  $H=G_\eq(\Sigma,m)$ or $G'_\eq(\Sigma,m)$.
  Then
  \begin{enumerate}[A.]
  \item $\iexpr{E}{H} \supseteq \iexpr{r}{H}$, or
  \item $\iexpr{E}{H} \supseteq \iexpr{r^-}{H}$.
  \end{enumerate}
\end{lem}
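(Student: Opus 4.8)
The plan is to argue by induction on the structure of the $\id$-free path expression $E$. The first observation is that $\id$-freeness is inherited by all subexpressions, so the induction runs over path expressions built from the property names of $\Sigma$ using only inverse, union, composition, and Kleene star. It is also worth recording the concrete form of the relations involved: since $H$ interprets every property name $p \in \Sigma$ by one and the same binary relation, we have $\iexpr{p}{H} = \iexpr{r}{H}$, and this relation is $V \times V$ when $H = G'_\eq(\Sigma,m)$ and $V \times V - \{(b,a)\}$ when $H = G_\eq(\Sigma,m)$; accordingly $\iexpr{r^-}{H}$ is $V \times V$, respectively $V \times V - \{(a,b)\}$. In either case, both $\iexpr{r}{H}$ and $\iexpr{r^-}{H}$ are subsets of $V \times V$ differing from $V \times V$ in at most one pair, so only pairs of nodes from $V$ are relevant to the two inclusions to be proved.

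The base cases and two of the three inductive cases are then immediate. If $E = p$ with $p \in \Sigma$, then $\iexpr{E}{H} = \iexpr{r}{H}$, so alternative~A holds; if $E = p^-$, then $\iexpr{E}{H} = \iexpr{r^-}{H}$, so alternative~B holds. If $E = E_1 \cup E_2$, then $\iexpr{E}{H} \supseteq \iexpr{E_1}{H}$, and if $E = E_1^*$, then $\iexpr{E}{H}$ is the reflexive-transitive closure of $\iexpr{E_1}{H}$ and hence also $\supseteq \iexpr{E_1}{H}$; in both cases the desired alternative is inherited from $E_1$ via the induction hypothesis (note $E_1$ is again $\id$-free).

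The one case requiring an argument is composition, $E = E_1 \comp E_2$, and this is where I expect the only real (albeit mild) obstacle: one must make sure the single missing edge $(b,a)$ of $G_\eq(\Sigma,m)$ never blocks the choice of a suitable intermediate node. By the induction hypothesis, $\iexpr{E_1}{H}$ contains $\iexpr{r}{H}$ or $\iexpr{r^-}{H}$, and similarly $\iexpr{E_2}{H}$. Since each of $\iexpr{r}{H}$ and $\iexpr{r^-}{H}$ differs from $V \times V$ in at most one pair, for every fixed $x \in V$ the set $\{ z \in V \mid (x,z) \in \iexpr{E_1}{H} \}$ omits at most one element of $V$, and for every fixed $y \in V$ the set $\{ z \in V \mid (z,y) \in \iexpr{E_2}{H} \}$ omits at most one element of $V$. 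As $\sharp V = M = \max(3, m+1) \geq 3$, two subsets of $V$ each of size at least $M - 1$ must intersect, so for every $(x,y) \in V \times V$ there is a $z \in V$ with $(x,z) \in \iexpr{E_1}{H}$ and $(z,y) \in \iexpr{E_2}{H}$, i.e.\ $(x,y) \in \iexpr{E}{H}$. Hence $\iexpr{E}{H} \supseteq V \times V \supseteq \iexpr{r}{H}$, so alternative~A holds, which completes the induction.
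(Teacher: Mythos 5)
Your proof is correct and follows the same overall plan as the paper's: induction on the structure of $E$, with the property-name and inverse base cases and the union case handled identically. The differences are confined to the two remaining cases. For composition, the paper splits on whether alternative~A or~B holds for $E_1$ and uses the self-loops of $H$ (every pair $(v,v)$ with $v\in V$ is an edge) to ``stay put'' with $E_2$, treating the single asymmetric pair $(a,b)$ of $G_\eq(\Sigma,m)$ by hand; you instead run a counting argument on the intermediate node, using that for fixed $x$ and $y$ the relevant slices of $\iexpr{E_1}{H}$ and $\iexpr{E_2}{H}$ each omit at most one element of $V$ and $\sharp V = M \geq 3$. Your route is uniform over $G$ and $G'$ and incidentally proves the stronger fact $\iexpr{E_1\comp E_2}{H} \supseteq V\times V$, which the paper only remarks on in a footnote. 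For Kleene star you simply inherit whichever alternative holds for $E_1$ from $\iexpr{E_1^*}{H}\supseteq\iexpr{E_1}{H}$, which is simpler than the paper's reduction to the composition case via $E_1^*\supseteq E_1\comp E_1$ (the paper's version yields that A always holds for starred expressions, but the lemma does not require this). Both arguments are sound; yours is marginally more self-contained in the composition step.
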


\begin{proof}
  The claim is obvious for $G'_\eq(\Sigma,m)$, being a complete graph.
  So we focus on the graph $G_\eq(\Sigma,m)$.  The proof is by
  induction.  If $E$ is a property name or its inverse, the claim is
  clear.  If $E$ is of the form $E_1 \cup E_2$, the claim is immediate
  by induction.

  Assume $E$ is of the form $E_1 \comp E_2$.  We show that A
  applies.\footnote{Actually, $\sem E$ always contains $V\times V$ in
    this case, but we do not need this.} If A applies to $E_1$, this
  is clear, since we can follow any edge by $E_1$ and then stay at the
  head of the edge by $E_2$ using the self-loop.  If B applies to
  $E_1$, the same can still be done for all edges except for $(a,b)$,
  which is the only nonsymmetrical edge.  To go from $a$ to $b$ by
  $E$, we go by $E_1$ from $a$ to a node $c$ distinct from $a$ and
  $b$, then go by $E_2$ from $c$ to $b$.

  If $E$ is of the form $E_1^*$, again A applies, since ${E_1^*}$
  contains ${E_1/E_1}$.
\end{proof}

We are now ready to prove the non-obvious part of
Proposition~\ref{deprop} where $X=\eq$. We use the following version
of the proposition.

\begin{prop}
  \label{prop:eq}
  Let $G$ be $G_\eq(\Sigma,m)$ and let $G'$ be $G'_\eq(\Sigma,m)$.  Let
  $\phi$ be a shape over $\Sigma$ that does not use $\eq$ and that
  counts to at most $m$. Then either
  $\iexpr{\phi}{G} \cap V = \emptyset$ or
  $\iexpr{\phi}{G} \supseteq V$. Moreover,
  $\iexpr{\phi}{G} = \iexpr{\phi}{G'}$.
\end{prop}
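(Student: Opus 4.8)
The plan is to proceed by induction on the structure of $\phi$, mirroring the proof of Proposition~\ref{prop:disj}. Throughout, write $H$ for either $G$ or $G'$. The structural facts I would lean on are: (i) every property name $p\in\Sigma$ is interpreted in $H$ by the \emph{same} relation ($V\times V$ in $G'$, and $V\times V-(b,a)$ in $G$), so $\iexpr{p}{H}=\iexpr{r}{H}$ for all such $p$; (ii) since $a\neq b$, both graphs contain every self-loop $(v,v)$ with $v\in V$, hence $v\in\iexpr{r}{H}(v)\cap\iexpr{r^-}{H}(v)$ and $v\in\iexpr{p}{H}(v)$ for every $v\in V$ and $p\in\Sigma$; and (iii) $\iexpr{E}{H}(v)\subseteq V$ whenever $v\in V$, since path expressions reach nodes of a graph only from nodes of that graph. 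I would invoke Lemma~\ref{lem:id} to reduce $E$ to an expression equivalent to $\id$, to $E'\cup\id$ with $E'$ $\id$-free, or to an $\id$-free $E'$; Lemma~\ref{lem:eq_notin_self} to know that $\iexpr{E}{H}\supseteq\iexpr{r}{H}$ or $\iexpr{E}{H}\supseteq\iexpr{r^-}{H}$ for $\id$-free $E$ over $\Sigma$; and Lemma~\ref{lem:safety} to control the behavior of $E$ on node names outside $V$.

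For the base cases: $\phi=\top$ gives $\iexpr{\top}{H}=N\supseteq V$; $\phi=\const c$ gives $\iexpr{\const c}{H}=\{c\}\subseteq\Sigma$, which is disjoint from $V$; in both, $\iexpr{\phi}{G}=\iexpr{\phi}{G'}$ trivially. The only genuinely new base case relative to Proposition~\ref{prop:disj} is $\phi=\disj(E,p)$, and it is easy: for $v\in V$, facts (i)--(ii) together with Lemmas~\ref{lem:id} and~\ref{lem:eq_notin_self} yield $v\in\iexpr{E}{H}(v)$ and $v\in\iexpr{p}{H}(v)$, so the two sets meet and $H,v\nmodels\disj(E,p)$; hence $\iexpr{\phi}{H}\cap V=\emptyset$. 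For node names $v\notin V$, Lemmas~\ref{lem:safety} and~\ref{lem:id} give $\iexpr{E}{G}(v)=\iexpr{E}{G'}(v)\in\{\emptyset,\{v\}\}$ while $\iexpr{p}{G}(v)=\iexpr{p}{G'}(v)=\emptyset$, so $G,v\models\disj(E,p)$ and $G',v\models\disj(E,p)$; combined with $\iexpr{\phi}{G}\cap V=\emptyset=\iexpr{\phi}{G'}\cap V$, this yields $\iexpr{\phi}{G}=\iexpr{\phi}{G'}$. The case $\phi=\closed(R)$ is handled exactly as in Proposition~\ref{prop:disj}: if $\Sigma$ contains a property name outside $R$ then $\iexpr{\phi}{H}\cap V=\emptyset$, otherwise $\iexpr{\phi}{H}\supseteq V$; and the node names outside $V$ have no outgoing edges, so they conform trivially, giving $\iexpr{\phi}{G}=\iexpr{\phi}{G'}$.

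For the inductive step, the boolean connectives are immediate, since both ``$\iexpr{\psi}{H}\cap V=\emptyset$'' and ``$\iexpr{\psi}{H}\supseteq V$'' are preserved under intersection, union, and complement, and $\iexpr{\psi}{G}=\iexpr{\psi}{G'}$ is inherited from the subformulas. For $\phi=\geqn{n}{E}{\phi_1}$, apply the induction hypothesis to $\phi_1$. If $\iexpr{\phi_1}{H}\cap V=\emptyset$, then by fact (iii) no $v\in V$ has an $E$-successor conforming to $\phi_1$, so $\iexpr{\phi}{H}\cap V=\emptyset$. If $\iexpr{\phi_1}{H}\supseteq V$, use Lemma~\ref{lem:id}: when $E\equiv\id$ we have $\sharp\iexpr{E}{H}(v)=1$, so $\iexpr{\phi}{H}\supseteq V$ if $n=1$ and $\iexpr{\phi}{H}\cap V=\emptyset$ if $n>1$; otherwise $E$ is $\id$-free or $E'\cup\id$ with $E'$ $\id$-free, and by Lemma~\ref{lem:eq_notin_self} we get $\iexpr{E}{H}(v)\supseteq\iexpr{r}{H}(v)$ or $\iexpr{E}{H}(v)\supseteq\iexpr{r^-}{H}(v)$, each of cardinality at least $M-1\geq m\geq n$, with all these successors lying in $V\subseteq\iexpr{\phi_1}{H}$ by fact (iii), so $\iexpr{\phi}{H}\supseteq V$. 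Finally, for node names $v\notin V$, Lemmas~\ref{lem:safety} and~\ref{lem:id} give $\iexpr{E}{G}(v)=\iexpr{E}{G'}(v)\in\{\emptyset,\{v\}\}$: if it is $\emptyset$ then $G,v\nmodels\phi$ and $G',v\nmodels\phi$ (as $n\geq 1$); if it is $\{v\}$ then $G,v\models\phi$ holds iff $n=1$ and $G,v\models\phi_1$, which by the induction hypothesis is equivalent to $n=1$ and $G',v\models\phi_1$, i.e.\ to $G',v\models\phi$. Hence $\iexpr{\phi}{G}=\iexpr{\phi}{G'}$, closing the induction.

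I do not expect a real obstacle: the argument is essentially a transcription of the proof of Proposition~\ref{prop:disj}, with $G_\disj$, $G'_\disj$ and the blob lemmas~\ref{lem:disjoint_g}--\ref{lem:disjoint_gprime} replaced by $G_\eq$, $G'_\eq$ and Lemma~\ref{lem:eq_notin_self}. The one spot needing a fresh (one-line) argument is the $\disj(E,p)$ base case, where one must see that $\iexpr{E}{H}(v)$ and $\iexpr{p}{H}(v)$ always intersect for $v\in V$; this is precisely where the self-loops on $V$ and the uniform interpretation of all property names of $\Sigma$ are used. Everything else is bookkeeping around Lemma~\ref{lem:id} and the node names outside $V$ handled via Lemma~\ref{lem:safety}.
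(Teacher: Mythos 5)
Your proposal is correct and follows essentially the same route as the paper: structural induction on $\phi$, using Lemma~\ref{lem:eq_notin_self} for the key cases, Lemma~\ref{lem:id} to dispatch $\id$, and Lemma~\ref{lem:safety} for node names outside $V$, with your explicit self-loop observation ($v\in\iexpr{E}{H}(v)\cap\iexpr{p}{H}(v)$) correctly supplying the detail the paper leaves implicit in the $\disj$ base case. The only (immaterial) difference is that the paper phrases that case for $\disj(E_1,E_2)$ with two arbitrary path expressions so the argument also covers full disjointness later, whereas you treat only $\disj(E,p)$, which is all the stated proposition requires.
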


\begin{proof}
  This is proven by induction on the structure of $\phi$.  Let $H$ be
  $G$ or $G'$. We focus directly on the relevant cases. Assume $\phi$
  is of the form $\disj(E_1,E_2)$.  Lemma~\ref{lem:eq_notin_self}
  clearly yields that $\iexpr{\phi}{H} \cap V = \emptyset$.  It again
  remains to verify that $G,v \models \phi$ iff $G',v \models \phi$
  for all node names $v \notin V$.  By Lemma~\ref{lem:safety}, for
  such $v$ and $H = G$ or $G'$, we indeed have $H,v \models \phi$ if
  exactly one of $E_1$ and $E_2$ is safe.  If both are safe or both
  are unsafe, we have $H,v \nmodels \phi$.

  The last base case of interest is the case where $\phi$ is of the
  form $\closed(R)$. This goes again exactly as in the proof for
  $X=\disj$.

  We next consider the inductive cases. The cases for the boolean
  connectives follow readily by induction. Finally, assume $\phi$ is
  of the form $\geqn{n}{E}{\phi_1}$.  By induction, there are two
  possibilities for $\phi_1$:
  \begin{itemize}
  \item If $\iexpr{\phi_1}{G} \cap V = \emptyset$ then
    $\iexpr{\phi}{G} \cap V = \emptyset$ since path expressions can
    only reach nodes in some graph from nodes in that graph.
  \item If $\iexpr{\phi_1}{H} \supseteq V$, we distinguish three
    cases using Lemma~\ref{lem:id}. First, when $E$ is $\id$, then if
    $n=1$, $\iexpr{\phi}{H}\supseteq V$. Otherwise, if $n\neq 1$, then
    $\iexpr{\phi}{H}=\emptyset$. Next, when $E$ is $\id$-free or
    $E'\cup\id$ with $E'$ an $\id$-free path expression, it suffices
    to show that $\sharp\iexpr{E'}{H}(v) \geq n$ for all $v\in V$.
    By Lemma~\ref{lem:eq_notin_self}, we know that $\iexpr EH(v)$
    contains $\iexpr rH(v)$ or $\iexpr {r^-}H(v)$.  These sets contain
    at least $M - 1 \geq m \geq n$ elements as desired.  (The number
    $M-1$ is reached only when $H$ is $G$ and $v=b$ or $v=a$;
    otherwise the sets contain $M$ elements.)
  \end{itemize}

  The equality $\iexpr{\phi}{G} = \iexpr{\phi}{G'}$ is shown in the
  same way as in the proof for $X=\disj$
  (Section~\ref{secdisjproof}).
\end{proof}

\subsection{Closure}
\label{seclosure}

Without using $\closed$, shapes cannot say anything about properties
that they do not explicitly mention.  We formalize this intuitive
observation as follows.  The proof is straightforward.

\begin{lem}
  \label{lemclos} Let $\Sigma$ be a vocabulary, let $E$ be a path
  expression over $\Sigma$, and let $\phi$ be a shape over $\Sigma$
  that does not use $\closed$.  Let $G_1$ and $G_2$ be graphs such
  that $\iexpr p{G_1} = \iexpr p{G_2}$ for every property name $p$ in
  $\Sigma$.  Then $\iexpr E{G_1} = \iexpr E{G_2}$ and
  $\iexpr \phi{G_1} = \iexpr \phi{G_2}$. \qed
\end{lem}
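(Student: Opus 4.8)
The plan is to prove both statements by structural induction, handling the path expression $E$ first and then the shape $\phi$. The key observation is that $G_1$ and $G_2$, viewed as interpretations over the full vocabulary $N\cup P$, have the same domain $N$, interpret every constant as itself, and agree on every property name in $\Sigma$; so the only way they can differ at all is on property names \emph{outside} $\Sigma$. Since $E$ and $\phi$ are over $\Sigma$, and $\phi$ does not use $\closed$, no subexpression ever ``sees'' such a property name, and the two evaluations must coincide.

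\textbf{Path expressions.} Induct on the structure of $E$. For $E=\id$ the value is $\{(x,x)\mid x\in N\}$ in both graphs. For $E=p$ or $E=p^-$, the hypothesis $\iexpr{p}{G_1}=\iexpr{p}{G_2}$ applies precisely because $E$ being over $\Sigma$ forces $p\in\Sigma$. For the inductive constructors (union, composition, Kleene star), Table~\ref{tab:sempath} defines $\iexpr{E}{G_i}$ as one fixed set-theoretic operation on the relations $\iexpr{E_j}{G_i}$ of the immediate subexpressions; these agree between $G_1$ and $G_2$ by the induction hypothesis, hence so does $\iexpr{E}{G_i}$.

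\textbf{Shapes.} Now induct on the structure of $\phi$, invoking the path-expression claim wherever a path expression occurs. The cases $\top$ (value $N$ in both graphs) and $\const c$ (value $\{c\}$ in both) are immediate. For $\geqn{n}{E}{\psi}$, we have $\iexpr{E}{G_1}=\iexpr{E}{G_2}$ by the first part and $\iexpr{\psi}{G_1}=\iexpr{\psi}{G_2}$ by the induction hypothesis, so for every node $a$ the set $\{b\in\iexpr{E}{G_1}(a)\mid b\in\iexpr{\psi}{G_1}\}$ equals the corresponding set for $G_2$; in particular its cardinality, and hence the truth of the condition $\sharp(\cdots)\geq n$, is the same in both graphs. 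For $\eq(E,p)$ and $\disj(E,p)$, both $\iexpr{E}{G_1}(a)=\iexpr{E}{G_2}(a)$ (by the first part) and $\iexpr{p}{G_1}(a)=\iexpr{p}{G_2}(a)$ (by hypothesis, using $p\in\Sigma$ since $\phi$ is over $\Sigma$), so equality resp.\ disjointness of the two sets has the same truth value at every $a$. The boolean connectives $\land,\lor,\neg$ follow at once from the induction hypothesis. The only shape constructor left is $\closed(R)$, which is exactly the one excluded by hypothesis; the exclusion is essential, since $\closed(R)$ refers to $\iexpr{p}{}(a)$ for property names $p$ outside $R$, hence possibly outside $\Sigma$, on which $G_1$ and $G_2$ may genuinely differ.

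\textbf{Main obstacle.} There is essentially none — as the paper notes, the proof is straightforward; it is a routine ``only the relevant symbols matter'' argument. The single point deserving care is the interplay between ``over $\Sigma$'' and the fact that on a graph the ambient vocabulary is the full $N\cup P$: it is only because $E$ and $\phi$ mention property names and constants solely from $\Sigma$ that agreement of $G_1$ and $G_2$ on $\Sigma$ is enough, whereas $\closed(R)$ implicitly quantifies over all property names and therefore has to be ruled out.
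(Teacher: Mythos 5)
Your proof is correct and is exactly the straightforward structural induction (first on path expressions, then on shapes) that the paper has in mind when it omits the proof as ``straightforward''; the key points — that agreement on $\Sigma$ suffices because $E$ and $\phi$ only mention symbols from $\Sigma$, and that $\closed$ must be excluded because it implicitly refers to all property names — are the right ones.
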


Theorem~\ref{bool} now follows readily for $X = \closed$.  Let $F$ be
a feature set without $\closed$, let $\Sch$ be a shape schema in
$\lang(F)$, and let $\phi$ be the validation shape of $\Sch$.  Let $p$
be a property name not mentioned in $\Sch$, and different from $r$.
Consider the graphs $G = \{(a,r,a),(a,p,a)\}$ and $G'=\{(a,r,a)\}$, so
that $G'$ belongs to $Q_\closed$ but $G$ does not.  By
Lemma~\ref{lemclos} we have $\iexpr\phi G = \iexpr\phi {G'}$, showing
that $\Sch$ does not define $Q_\closed$.

\begin{rem}
  Lemma~\ref{lemclos} fails completely in the presence of closure
  constraints.  The simplest counterexample is to consider
  $\Sigma=\emptyset$ and the shape $\closed(\emptyset)$.  Trivially,
  any two graphs agree on the property names from $\Sigma$. However,
  $\sem{\closed(\emptyset)}$, which equals the set of node names that
  do not have an outgoing edge in $G$ (they may still have an incoming
  edge), obviously depends on the graph $G$.
\end{rem}

\newcommand{\iexpra}[2]{\iexpr{#1}{#2}_{\mathit{adom}}}

The reader may wonder if this statement still holds under active
domain semantics. In such semantics, which we denote by
$\iexpra{\phi}{G}$, we would view $G$ as an interpretation with domain
\emph{not} the whole of $N$; rather we would take as domain the set
$N_G\cup C$, with $C$ the set of constants mentioned in $\phi$. When
assuming active domain semantics, a modified lemma is required. To see
this, consider the graph $G= \{(a,p,b)\}$ and
$G'=\{(a,p,b), (a,q,c)\}$. Let $\phi$ simply be $\top$. We
have $\iexpra{\phi}{G}=\{a, b\}$ and $\iexpra{\phi}{G'}=\{a,b,c\}$, so
Lemma~\ref{lemclos} no longer holds. We can, however, give the following more refined variant of Lemma~\ref{lemclos}:

\begin{lem}
  \label{lemclosadom}
  Let $\Sigma$ be a vocabulary, let $E$ be a path expression over
  $\Sigma$, and let $\phi$ be a shape over $\Sigma$ that does not use
  $\closed$.  Let $I_1$ and $I_2$ be interpretations such that
  $\iexpr p{I_1} = \iexpr p{I_2}$ for every property name $p$ in
  $\Sigma$.  Then
  $\iexpr E{I_1} \cap \Delta^{I_2}\times \Delta^{I_2}= \iexpr E{I_2} \cap
  \Delta^{I_1}\times \Delta^{I_1}$ and
  $\iexpr \phi{I_1}\cap \Delta^{I_2} = \iexpr \phi{I_2} \cap
  \Delta^{I_1}$. \qed
\end{lem}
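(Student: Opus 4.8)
The plan is to prove Lemma~\ref{lemclosadom} by structural induction, mimicking the (omitted) proof of Lemma~\ref{lemclos} but carefully tracking which domain each computed pair or element lives in. The key asymmetry to keep in mind is that an interpretation $I_1$ may have elements in its domain that $I_2$ lacks, and vice versa; so instead of asserting the relations $\iexpr E{I_1}$ and $\iexpr E{I_2}$ are literally equal, we only claim they agree once restricted to the common part of both domains, i.e.\ $\iexpr E{I_1}\cap\Delta^{I_2}\times\Delta^{I_2} = \iexpr E{I_2}\cap\Delta^{I_1}\times\Delta^{I_1}$ (and note $\iexpr E{I_k}\subseteq\Delta^{I_k}\times\Delta^{I_k}$ automatically, so each side really is a subset of $(\Delta^{I_1}\cap\Delta^{I_2})\times(\Delta^{I_1}\cap\Delta^{I_2})$).

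First I would handle the path-expression claim by induction on $E$. The case $E=\id$ gives $\{(x,x)\mid x\in\Delta^{I_k}\}$ on each side, whose restriction to the other domain is exactly $\{(x,x)\mid x\in\Delta^{I_1}\cap\Delta^{I_2}\}$ in both cases. The cases $p$ and $p^-$ use the hypothesis $\iexpr p{I_1}=\iexpr p{I_2}$ directly (together with $\iexpr p{I_k}\subseteq\Delta^{I_k}\times\Delta^{I_k}$, so both sides already sit in the common domain). Union is immediate since intersection distributes over union. For composition $E_1\comp E_2$ and for Kleene star $E_1^*$, the subtle point is the intermediate node: a pair $(x,y)\in\iexpr{E_1\comp E_2}{I_1}$ is witnessed by some $z\in\Delta^{I_1}$, and if $(x,y)$ lies in $\Delta^{I_2}\times\Delta^{I_2}$ we need $z\in\Delta^{I_2}$ as well to invoke the induction hypothesis on each factor. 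This follows because $\iexpr{E_1}{I_1}\subseteq\Delta^{I_1}\times\Delta^{I_1}$ forces $z\in\Delta^{I_1}$, and then $(x,z)\in\iexpr{E_1}{I_1}$ with $x\in\Delta^{I_2}$; but that alone does not put $z$ in $\Delta^{I_2}$ --- this is exactly the step I expect to be the main obstacle. The resolution is that $z$ need only range over $\Delta^{I_1}\cap\Delta^{I_2}$ once we know the \emph{composed} pair is in the common domain: more precisely, one proves by induction the sharper statement that $\iexpr E{I_1}$ and $\iexpr E{I_2}$ already agree on $(\Delta^{I_1}\cap\Delta^{I_2})^2$ \emph{and} that any pair of $\iexpr E{I_k}$ with both endpoints in $\Delta^{I_1}\cap\Delta^{I_2}$ can be decomposed through intermediate nodes lying in $\Delta^{I_1}\cap\Delta^{I_2}$; this second clause is what makes the composition/star step go through, and it holds because the base cases $p,p^-$ never leave the common domain at all.

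For the shape claim, I would again induct on $\phi$. Boolean connectives are routine once we observe everything is relativized to the ambient domain: e.g.\ $\iexpr{\neg\psi}{I_k}=\Delta^{I_k}\setminus\iexpr\psi{I_k}$, and intersecting with the other domain and applying the induction hypothesis for $\psi$ gives the result. For $\const c$, both sides contain the single element $\semi c$ interpreted in the respective structure; here one uses that $c\in\Sigma$ and the hypothesis pins down the interpretation of constants as well (implicit in ``$\iexpr p{I_1}=\iexpr p{I_2}$ for $p\in\Sigma$'' being stated for the relevant vocabulary --- strictly one should also assume $\semi c$ agrees, or note constants of $\phi$ lie in $\Sigma$). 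The counting quantifier $\geqn nE{\psi}$ is where the path-expression lemma is used: for $a\in\Delta^{I_1}\cap\Delta^{I_2}$, the set $\iexpr E{I_1}(a)$ is a subset of $\Delta^{I_1}$, and by the path claim its intersection with $\Delta^{I_2}$ equals $\iexpr E{I_2}(a)\cap\Delta^{I_1}$; combined with the induction hypothesis $\iexpr\psi{I_1}\cap\Delta^{I_2}=\iexpr\psi{I_2}\cap\Delta^{I_1}$, the filtered successor sets $\{b\in\iexpr E{I_k}(a)\mid I_k,b\models\psi\}$ coincide on the common domain, hence have the same cardinality, so the threshold-$n$ condition is equivalent in $I_1$ and $I_2$. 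The remaining shapes are $\eq(E,p)$ and $\disj(E,p)$: for $a$ in the common domain, $\iexpr E{I_k}(a)$ and $\iexpr p{I_k}(a)$ both lie in $\Delta^{I_k}$ and, restricted to $\Delta^{I_1}\cap\Delta^{I_2}$, agree across $k$ by the path claim and the hypothesis on $p$; since equality and disjointness of these sets is unaffected by ambient points that do not belong to either set, the conformance condition transfers. Finally, $\closed(R)$ is explicitly excluded from $\phi$, which is essential --- as the preceding remark shows, closure constraints see property names outside $\Sigma$ and thus break the agreement. This completes the induction and hence the proof.
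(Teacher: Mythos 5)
The paper states this lemma with no proof at all (like Lemma~\ref{lemclos}, it is presented as straightforward and closed with a tombstone), so there is nothing to compare against line by line; your structural induction is the intended argument and it does go through. Two points are worth tightening. First, note that since $\iexpr{E}{I_k}\subseteq \Delta^{I_k}\times\Delta^{I_k}$, both sides of the path claim reduce to $\iexpr{E}{I_k}\cap D\times D$ with $D:=\Delta^{I_1}\cap\Delta^{I_2}$, and similarly for shapes; so only nodes of $D$ matter. For the composition and star cases, the auxiliary invariant you want is not the ``decomposition through intermediate nodes in the common domain'' clause as you literally state it --- that clause cannot be closed inductively, because to apply it to the factor pair $(x,z)\in\iexpr{E_1}{I_k}$ you would already need to know that $z$ lies in $D$. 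The invariant that does close, mirroring Lemma~\ref{lem:safety}, is the dichotomy $\iexpr{E}{I_k}\subseteq (D\times D)\cup\{(w,w)\mid w\in\Delta^{I_k}\}$, which holds because every property name of $\Sigma$ is interpreted inside $D\times D$ and the only other atomic move is $\id$; from it one gets immediately that any witness $z$ for a pair $(x,y)$ with $x\in D$ is itself in $D$, after which the induction hypothesis applies to both factors. You do gesture at exactly this (``the base cases $p,p^-$ never leave the common domain''), so this is a matter of formulation rather than a gap. Second, you are right that the case $\phi=\const c$ needs $I_1$ and $I_2$ to agree on the constants of $\Sigma$; the hypothesis as written mentions only property names, so this agreement must be read as an implicit assumption (it is automatic in the graph setting of Lemma~\ref{lemclos}, where constants denote themselves). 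With those two adjustments the induction is complete.
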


The same reasoning as given after Lemma~\ref{lemclos}, now using the
new Lemma, then shows that $\closed$ is still primitive under active
domain semantics.

\section{Are target-based shape schemas enough?}
\label{sec:targetres}
Lemma~\ref{lemclos} also allows us to clarify that, as far as
expressive power is concerned, and in the absence of closure
constraints, the restriction to target-based shape schemas is
inconsequential.

\begin{thm}
  \label{thm:generalizedschema}
  Every generalized shape schema that does not use
  closure constraints is equivalent to a target-based shape schema
  (that still does not use closure constraints).
\end{thm}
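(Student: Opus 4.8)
The plan is to transform a closure-free generalized shape schema $\Sch$ statement by statement. Fix the finite vocabulary $\Sigma$ of all constants and property names mentioned in $\Sch$, and write $C = \Sigma \cap N$ and $P_\Sigma = \Sigma \cap P$. An inclusion $\phi_1 \subseteq \phi_2$ holds on a graph $G$ precisely when the (again closure-free) shape $\psi := \phi_1 \land \neg\phi_2$ satisfies $\sem\psi = \emptyset$, so the task reduces to expressing ``$\sem\psi$ is empty'' by target-based, closure-free inclusions.

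First I would pin down where the members of $\sem\psi$ can live. Call a node $v$ \emph{$\Sigma$-isolated} if $v \notin C$ and no edge of $G$ incident to $v$ carries a label in $\Sigma$. Every node that is \emph{not} $\Sigma$-isolated is a constant from $C$, or is the subject of a $p$-edge, or is the object of a $p$-edge, for some $p \in P_\Sigma$; hence it belongs to one of the sets $\sem{\const c}$, $\sem{\exists p.\top}$, $\sem{\exists p^-.\top}$ with $c \in C$, $p \in P_\Sigma$ --- and these shapes are exactly the node-, subjects-of- and objects-of-targets permitted in a target-based schema. So the only nodes not ``reachable'' by admissible targets over $\Sigma$ are the $\Sigma$-isolated ones.

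The crux is therefore the behaviour of $\psi$ on $\Sigma$-isolated nodes, and this is the step I expect to cost the most care. I would prove: \emph{whether a $\Sigma$-isolated node satisfies $\psi$ does not depend on the graph at all} --- it is a property of the syntax of $\psi$, and it is effectively decidable. The argument is a structural induction on $\psi$ using Lemmas~\ref{lemclos},~\ref{lem:id}, and~\ref{lem:safety}. By Lemma~\ref{lemclos} we may delete every edge of $G$ whose label lies outside $\Sigma$ without changing $\sem\psi$; in the resulting graph a $\Sigma$-isolated node $v$ is no longer a node of the graph, so Lemma~\ref{lem:safety}, combined with Lemma~\ref{lem:id} (which normalizes every path expression to $\id$, to an $\id$-free expression, or to an $\id$-free expression unioned with $\id$), tells us that every path expression $E$ satisfies $\sem E(v) = \emptyset$ or $\sem E(v) = \{v\}$, depending only on whether the $\id$-free part of $E$ is safe. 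Consequently the conditions defining $\geqn n E {\phi_1}$, $\eq(E,p)$, and $\disj(E,p)$ at $v$ collapse to purely syntactic conditions on $E$ together with the value of $\phi_1$ at $v$ (in the $\geq$ case), and the base cases $\top$ and $\const c$ are immediate since $v \notin C$. The induction then goes through.

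It remains to assemble the schema. If the decision procedure reports that $\Sigma$-isolated nodes \emph{do} satisfy $\psi$, then $\sem\psi$ is nonempty on every graph (every graph has $\Sigma$-isolated nodes, as $N$ is infinite), so this inclusion --- hence all of $\Sch$ --- is conformed to by no graph; in that case I output the single target-based inclusion $\const c \subseteq \neg\top$ for an arbitrary constant $c$, which is likewise violated by every graph. Otherwise no $\Sigma$-isolated node ever satisfies $\psi$, so for every $G$ we have $\sem\psi = \emptyset$ iff no constant in $C$ and no subject or object of a $P_\Sigma$-edge satisfies $\psi$; this is captured exactly by the finite family of target-based, closure-free inclusions $\const c \subseteq \neg\psi$ for $c \in C$, together with $\exists p.\top \subseteq \neg\psi$ and $\exists p^-.\top \subseteq \neg\psi$ for $p \in P_\Sigma$ (here $\neg\psi$ abbreviates the closure-free shape $\neg\phi_1 \lor \phi_2$). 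Taking the union over all statements of $\Sch$ of the inclusions produced in this way yields a target-based, closure-free schema equivalent to $\Sch$.
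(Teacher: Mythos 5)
Your proposal is correct and follows essentially the same route as the paper: your graph-independence claim for $\Sigma$-isolated nodes is exactly the paper's Lemma~\ref{lembuiten} (proved the same way, via Lemmas~\ref{lemclos}, \ref{lem:id} and \ref{lem:safety}), your case split matches the paper's ``internal vs.\ not internal'' dichotomy, and the resulting target-based inclusions are identical. The only (immaterial) differences are that you work statement by statement rather than with a single validation shape for the whole schema, and that you additionally observe the case split is decidable, which the theorem does not require.
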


In order to prove this theorem, we first establish the following
lemma.

\begin{lem}
\label{lembuiten}
Let $\phi$ be a shape and let $C$ be the set of constants mentioned in
$\phi$.  Assume there exists a graph $G$ and a node name
$x \notin N_G \cup C$ such that $G,x \models \phi$.  Then for any graph
$H$ and any node name $y \notin N_H \cup C$, also $H,y \models \phi$.
\end{lem}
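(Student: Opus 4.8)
The plan is to prove the statement by structural induction on $\phi$, in a slightly strengthened and more symmetric form. For a shape $\psi$ write $C_\psi$ for the set of constants mentioned in $\psi$; the claim to be proved by induction is that for all graphs $G$ and $H$ and all node names $x \notin N_G \cup C_\psi$ and $y \notin N_H \cup C_\psi$, we have $G,x \models \psi$ if and only if $H,y \models \psi$. Every subformula of a shape mentions only a subset of that shape's constants, so a node that is ``outside'' with respect to $\psi$ is automatically outside with respect to every subformula of $\psi$; this is what makes the induction close. The lemma is then the instance $\psi = \phi$.

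Before the induction I would isolate one observation about path expressions on nodes that lie outside a graph. Combining Lemma~\ref{lem:id} (every $E$ is equivalent to $\id$, to $E'\cup\id$, or to an $\id$-free $E'$) with Lemma~\ref{lem:safety} (for $\id$-free $E'$ and $x\notin N_G$, the set $\iexpr{E'}{G}(x)$ is empty if $E'$ is safe and equals $\{x\}$ if $E'$ is unsafe), one obtains: for every path expression $E$, exactly one of $\iexpr{E}{G}(x) = \{x\}$ or $\iexpr{E}{G}(x) = \emptyset$ holds, and it holds for \emph{every} graph $G$ and every $x \notin N_G$ simultaneously, the outcome depending only on the syntactic form of $E$. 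Call $E$ \emph{returning} in the first case and \emph{non-returning} in the second. I also record the trivial fact that $\iexpr{p}{G}(x) = \emptyset$ for every property name $p$ whenever $x \notin N_G$, since every edge of $G$ lies within $N_G$.

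Given these facts the base cases are immediate and graph-independent. For $\top$ both sides are true. For $\const c$ we have $x \neq c$ and $y \neq c$ because $c \in C_\psi$, so both sides are false. For $\closed(R)$ both sides are true, since an outside node has no outgoing edge of any label. For $\disj(E,p)$ both sides are true, since $\iexpr{E}{G}(x)$ is trivially disjoint from the empty set $\iexpr{p}{G}(x)$. For $\eq(E,p)$, the defining condition $\iexpr{E}{G}(x) = \iexpr{p}{G}(x)$ amounts to $\iexpr{E}{G}(x) = \emptyset$, which holds for the outside node precisely when $E$ is non-returning; this is the same for $G,x$ and $H,y$. In the inductive step the Boolean connectives follow at once from the induction hypothesis, using $C_{\psi_i} \subseteq C_\psi$. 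The only case requiring attention is $\psi = \geqn{n}{E}{\psi_1}$, where $C_\psi = C_{\psi_1}$: if $E$ is non-returning then the counted set is empty and, as $n \geq 1$, both sides are false; if $E$ is returning, the counted set has at most one element, so for $n \geq 2$ both sides are false, while for $n = 1$ both sides reduce to ``$x$ (resp.\ $y$) conforms to $\psi_1$'', and these are equivalent by the induction hypothesis applied to $\psi_1$.

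I expect no genuinely hard step: the whole argument rests on the fact that a path expression, evaluated at a node outside the graph, can return at most the singleton consisting of that node, which turns conformance of an outside node into a purely syntactic property of the shape. The one place to be careful is the bookkeeping around constants --- this is precisely why the strengthened induction hypothesis is phrased with $C_\psi$ varying with the subformula, and why it matters that path expressions mention no constants at all --- together with the routine split of the counting-quantifier case according to whether $E$ is returning and whether $n=1$.
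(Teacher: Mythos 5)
Your proof is correct and follows essentially the same route as the paper's: a structural induction in which Lemma~\ref{lem:safety} (combined, implicitly or explicitly, with Lemma~\ref{lem:id}) forces every path expression evaluated at a node outside the graph to return either $\emptyset$ or the singleton of that node, which makes conformance of an outside node a purely syntactic matter. Your explicit ``returning/non-returning'' dichotomy and the symmetric if-and-only-if formulation of the induction hypothesis are only presentational variants of the paper's one-directional argument (which yields the same equivalence, since the hypothesis and conclusion range over arbitrary graphs and outside nodes).
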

\begin{proof}
By induction on $\phi$.  The case where $\phi$ is of the form
$\const c$ cannot occur, and the case where $\phi$ is $\top$ is
trivial.

If $\phi$ is $\phi_1 \lor \phi_2$ or $\neg \phi_1$,
the claim follows readily by induction.

Now assume $\phi$ is of the form $\geqn nE{\phi_1}$.  Then there
exists $x_1,\dots,x_n$ such that $(x,x_i) \in \sem E$ and
$G,x_i \models \phi_1$ for $i=1,\dots,n$.  However, since
$x \notin N_G$, by Lemma~\ref{lem:safety}, all $x_i$ must equal $x$.
Hence, $n=1$ and $(x,x) \in \sem E$ and $G,x \models \phi_1$.  By the
same Lemma, $(y,y) \in \iexpr EH$, since $y\notin N_H$.  Furthermore,
by induction, $H,y \models \phi_1$.  We conclude that
$H,y \models \phi$ as desired.

Next, assume $\phi$ is $\eq(E,p)$.  Since $G,x \models \phi$, but
$\sem p = \emptyset$ since $x \notin N_G$, also
$\sem E(x) = \emptyset$.  Then by Lemma~\ref{lem:safety}, also
$\iexpr EH(y) = \emptyset$, since $y \notin N_H$.  Furthermore, also
$\iexpr pH(y) = \emptyset$.  We conclude that $H,y \models \phi$ as
desired.

Next assume $\phi$ is $\disj(E,p)$.  Then $H,y \models \phi$
is clear.  Indeed, since $y \notin N_H$, we have $\iexpr
pH(y)=\emptyset$.

Finally, assume $\phi$ is $\closed(R)$.  Then again $H,y \models \phi$
is clear because $y \notin N_H$.
\end{proof}

We can now show the theorem.

\begin{proof}[Proof of Theorem~\ref{thm:generalizedschema}]
  Let $\phi$ be the validation shape for shape schema $\Sch$, so that
  $G \models \Sch$ if and only if $\sem\phi$ is empty.  Let $C$ be the
  set of constants mentioned in $\phi$.

  Let us say that $\phi$ is \emph{internal} if for every graph $G$ and
  every node name $v$ such that $G,v \models \phi$, we have
  $v \in N_G \cup C$.  If $\phi$ is not internal, then, using
  Lemma~\ref{lembuiten}, for every graph $G$ and every node
  $v \notin N_G \cup C$, we have $G,v \models \phi$.  Thus, if $\phi$
  is not internal, $\Sch$ is unsatisfiable and is equivalent to the
  single target-based inclusion $\const c \subseteq \neg \top$, for an
  arbitrary constant $c$.

  So now assume $\phi$ is internal.  Define the target-based shape
  schema $\mathcal T$ consisting of the following inclusions:
  \begin{itemize}
  \item For each constant $c \in C$, the inclusion
    $\const c \subseteq \neg \phi$.
  \item For each property name mentioned in $\phi$, the two inclusions
    $\exists p.\top \subseteq \neg \phi$ and
    $\exists p^-.\top \subseteq \neg \phi$.
  \end{itemize}
  We will show that $\Sch$ and $\mathcal T$ are equivalent.  Let
  $\psi$ be the validation shape for $\mathcal T$.

  Let $G$ be any graph, and let $G'$ be the graph obtained from $G$ by
  removing all triples involving property names not mentioned in
  $\phi$.  We reason as follows:
  \begin{align*}
    G \models \Sch
    &\Leftrightarrow
      \sem\phi = \emptyset \\
    &\Leftrightarrow
      \iexpr\phi{G'} = \emptyset && \text{by Lemma~\ref{lemclos}} \\
    &\Leftrightarrow
      G' \models \mathcal T && \text{since $\phi$ is internal} \\
    &\Leftrightarrow
      \iexpr\psi{G'} = \emptyset \\
    &\Leftrightarrow
      \iexpr\psi{G} = \emptyset && \text{by Lemma~\ref{lemclos}} \\
    &\Leftrightarrow
      G \models \mathcal T 
    \tag*{\qedhere}
  \end{align*} 
\end{proof}

\begin{rem}
Note that we do not need class-based targets in the proof, so such
targets are redundant on the left-hand sides of inclusions.
This can also be seen directly: any inclusion
\[ \exists \textsf{type}/\textsf{subclass}^*.\const c \subseteq
\phi \] with a class-based target is equivalent to the following
inclusion with a
subjects-of target: \[ \exists \textsf{type}.\top
\subseteq \neg
\exists \textsf{type}/\textsf{subclass}^*.\const c \lor \phi \]
\end{rem}

\begin{rem}
  Theorem~\ref{thm:generalizedschema} fails in the presence of closure
  constraints.  For example, the inclusion
  $\neg \closed(\emptyset) \subseteq \exists r.\top$ defines the class
  of graphs where every node with an outgoing edge has an outgoing
  $r$-edge.  Suppose this inclusion would be equivalent to a
  target-based shape schema $\Sch$, and let $R$ be the set of all
  property names mentioned in the targets of $\Sch$.  Let $p$ be a
  property name not in $R$ and distinct from $r$; let $a$ be a node
  name not used as a constant in $\Sch$; and consider the graph
  $G=\{(a,p,a)\}$.  This graph trivially satisfies $\Sch$, but
  violates the inclusion.
\end{rem}

\newcommand{\strset}[1]{\bigcup_{s\in U}\iexpr{s}{#1}}

\section{Extensions for full equality and disjointness tests}
\label{sec:fulltests}
A quirk in the design of SHACL is that it only allows equality and
disjointness tests $\eq(E_1,E_2)$ and $\disj(E_1,E_2)$ where $E_1$ can
be a general path expression, but $E_2$ needs to be a property name.
The next question we can ask is whether allowing ``full'' equality or
disjointness tests, i.e., allowing a general path expression for
$E_2$, strictly increases the expressive power.
Within the community there are indeed plans to extend SHACL in
this direction \cite{dash,shacl_extensions}.

When we allow for such ``full'' equality and disjointness tests, it
gives rise to two new features: $\fulleq$ and $\fulldisj$.
Formally, we extend the grammar of shapes with two new constructs:
$\eq(E_1,E_2)$ and $\disj(E_1,E_2)$.

\begin{rem}
  We extend Remark~\ref{rem:id} by noting that in real SHACL, we
  cannot explicitly write the shapes $\eq(\id,\id)$ and
  $\disj(\id,\id)$. However, these shapes are equivalent to $\top$ and
  $\neg\top$ respectively.
\end{rem}

We are going to show that each of these new features strictly adds
expressive power. Concretely, we introduce the following classes of
graphs.
\begin{description}
\item[Full equality] $Q_\fulleq$ is the class of graphs where all
  objects of a property name $p$ do not have the same subjects for $p$
  and $q$. Note that $Q_{\fulleq}$ is definable in $\lang(\fulleq)$ by
  the single, target-based, inclusion statement
  $\exists p^-.\top \subseteq \neg\eq(p^-,q^-)$.
\item[Full disjointness] $Q_\fulldisj$ is the class of graphs where
  all objects of a property name $p$ do not have disjoint sets of
  subjects for $p$ and $q$. Note that $Q_{\fulldisj}$ is definable in
  $\lang(\fulldisj)$ by the single, target-based, inclusion statement
  $\exists p^-.\top \subseteq \neg\disj(p^-,q^-)$.
\end{description}

In the spirit of Theorem~\ref{bool}, we are now going to show the
following:
\begin{thm}
  \label{thm:fulleqdisj}
  $Q_\fulleq$ is not definable in $\lang(\eq, \fulldisj, \closed)$ and
  $Q_\fulldisj$ is not definable in $\lang(\disj, \fulleq, \closed)$.
\end{thm}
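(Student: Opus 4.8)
The plan is to imitate, for each of the two claims, the argument structure behind Theorem~\ref{bool} via Proposition~\ref{deprop}: for every finite vocabulary $\Sigma$ containing $p$ and $q$ and every bound $m$, exhibit two graphs $G$ and $G'$, on a common node set taken outside $\Sigma$, such that (i) $G'$ lies in the class in question while $G$ does not, and (ii) every shape $\phi$ over $\Sigma$ that avoids the forbidden feature and counts only to $m$ satisfies $\iexpr\phi G = \iexpr\phi{G'}$. Once such $G,G'$ are in hand, the validation-shape argument from just after Proposition~\ref{deprop} applies verbatim: a schema in the allowed language defining the class would have a validation shape separating $G$ from $G'$, which (ii) forbids. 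I would prove the $Q_\fulleq$ claim in detail and then obtain the $Q_\fulldisj$ claim by the dual construction, with disjointness of predecessor-sets replacing equality throughout --- noting that there one must \emph{also} keep every full-equality test invariant between $G$ and $G'$, since $\fulleq \in \lang(\disj,\fulleq,\closed)$.

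For $Q_\fulleq$ the graphs must be engineered so that the only thing distinguishing them is the truth of $\eq(p^-,q^-)$ at the objects of $p$: in $G$ there is one object $v$ of $p$ with $\semi{p^-}(v)=\semi{q^-}(v)$, whereas in $G'$ every object of $p$ has $\semi{p^-}(v)\neq\semi{q^-}(v)$. Viewing $p$ and $q$ as $0/1$-relations, $G$ has exactly one ``$q$-column'' equal to the matching ``$p$-column'' and $G'$ has none; this can in fact be arranged with a single-edge difference between $G$ and $G'$. The subtlety is that this localized column agreement must leave no trace visible to $\eq(E,s)$ --- which, for an arbitrary path expression $E$ and a \emph{property name} $s$, compares $\semi E$ with the forward edges of $s$ --- nor to $\fulldisj$, $\closed$, or counting up to $m$. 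In particular $p$ (and every other property name of $\Sigma$) and $q$ must be given structures for which the set of nodes where $\eq(p,q)$ holds is literally the same in $G$ and in $G'$, and for which reachability never lets an allowed test reconstruct the predecessor comparison $\eq(p^-,q^-)$; following Definition~\ref{def:graph_disj}, I expect a homogeneous ``core'' set $V$ of about $\max(3,m+1)$ nodes carrying a sufficiently strong inductive invariant (the ``all of $V$ or none of $V$'' property of Propositions~\ref{prop:disj} and~\ref{prop:eq}, or an automorphism-orbit refinement of it).

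With the graphs fixed, the proof splits into two layers. First, a path-expression lemma in the style of Lemma~\ref{lem:eq_notin_self}/Lemma~\ref{lem:disjoint_g}, describing $\semi E$ on $G$ and $G'$ tightly enough to see that each $\id$-free non-trivial $E$ has ``large'' value sets (so counting to $m$ cannot feel the difference) and that, for every property name $s$, the set of nodes where $\semi E(v) = \semi s(v)$, and the disjointness relation between $\semi{E_1}(v)$ and $\semi{E_2}(v)$, are the same in $G$ and in $G'$; here Lemma~\ref{lem:strings} helps, reducing $E$ on a bounded-size graph to a finite union of strings with explicit values. Second, the analogue of Proposition~\ref{prop:eq}: induction on a shape $\phi$ over $\Sigma$ not using $\fulleq$ and counting to $m$, proving $\iexpr\phi G = \iexpr\phi{G'}$. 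The cases for $\top$, $\const c$, $\closed(R)$, the boolean connectives, and $\geqn nE\psi$ carry over essentially unchanged from Sections~\ref{secdisjproof}--\ref{sec:equality}; the genuinely new base cases are $\eq(E,s)$ (restricted equality) and $\disj(E_1,E_2)$ ($\fulldisj$), which is exactly what the path-expression lemma is designed to discharge.

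The main obstacle is the construction of $G$ and $G'$. Making $G'$ belong to $Q_\fulleq$ while $G$ does not is trivial in isolation, and so is making $G$ and $G'$ indistinguishable by equality-free shapes; the difficulty is that restricted equality $\eq(E,s)$ is \emph{permitted}, and it is remarkably good at detecting precisely the local agreement that forces $G\notin Q_\fulleq$ --- naive attempts make $\eq(p,q)$, or $\eq(E,q)$ for some simple $E$, true at the witness node in $G$ but false in $G'$. Finding a construction for which every $\eq(E,s)$, every $\fulldisj$-test, every $\closed$-test and every count up to $m$ genuinely coincide on $G$ and $G'$ while $\eq(p^-,q^-)$ does not, and then verifying the $\eq(E,s)$ and $\disj(E_1,E_2)$ base cases against it, is where essentially all of the work lies. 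For the $Q_\fulldisj$ claim the same obstacle recurs, compounded by the fact that the stronger feature $\fulleq$ is itself on the permitted list, so the construction must additionally be immune to full-equality comparisons of path expressions.
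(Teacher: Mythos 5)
Your roadmap matches the paper's proof architecture exactly: a pair of graphs $G,G'$ on a common node set outside $\Sigma$ with $G'\in Q_X$ and $G\notin Q_X$, a string-normalization step via Lemma~\ref{lem:strings} followed by an exhaustive classification of string values on the two graphs (the analogue of Lemma~\ref{lem:eq_notin_self}), and then an induction on shapes proving $\iexpr\phi G=\iexpr\phi{G'}$ with an automorphism-orbit refinement of the ``all of $V$ or none of $V$'' invariant (the paper uses $\iexpr\phi G\cap V\in\{A\cup B,\,C,\,V,\,\emptyset\}$), finished off by the validation-shape argument. You also correctly isolate the two genuinely new base cases, $\eq(E,s)$ and $\disj(E_1,E_2)$, and the fact that the $\fulldisj$ construction must additionally be immune to $\fulleq$.

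However, there is a genuine gap, and you name it yourself: the constructions of $G$ and $G'$ are not given, and ``essentially all of the work lies'' there. The sketches you do offer point in directions that do not obviously work. For $Q_\fulleq$ you conjecture a single-edge difference between $G$ and $G'$; the paper instead uses three blocks $A,B,C$ of size $m$ with $\iexpr pG=C\times(A\cup B)$ and $\iexpr qG=C\times A\cup\{(c_i,b_j)\mid i\neq j\}$, and obtains $G'$ by replacing $C\times A$ with its off-diagonal --- a difference of $m$ edges placed so as to preserve the block symmetries (all $a_j$ mutually symmetric, all $c_i$ mutually symmetric, and in $G'$ additionally $a_i\leftrightarrow b_i$). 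A single-edge deletion breaks these orbits and singles out one $(c,a)$ pair, which puts the whole induction invariant at risk; the extra block $B$, on which $q$ is already off-diagonal in \emph{both} graphs, is precisely what masks the change on $A$ from every permitted $\eq(E,s)$ and $\fulldisj$ test. For $Q_\fulldisj$ you propose ``the dual construction,'' but the paper's graphs are not a dual of the $\fulleq$ pair at all: they are built from cyclically arranged, partially overlapping arc-segments (with overlap $m/8$) whose predecessor-sets are disjoint for $a$-nodes in $G$ but overlapping everywhere in $G'$, verified against full equality by a canonical-labelling enumeration of all sets of string types. So while nothing in your plan is wrong, the statement is not yet proved: the separating graph pairs, and the verification that every permitted restricted-equality, full-disjointness, closure and counting test coincides on them, remain to be supplied, and your provisional guesses for those constructions would need to be replaced.
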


These two non-definability results are proven in the following
Sections~\ref{sec:fulleq} and \ref{sec:fulldisj}. Then in
Section~\ref{sec:furth-non-defin} we will reconsider the
non-definability results for non-full equality and disjointness from
Theorem~\ref{bool} in the new light of their full versions.

\subsection{Full equality}
\label{sec:fulleq}

We present here the proof for the primitivity of full equality
tests. The general strategy is the same as in
Section~\ref{secexpr}, where again we will prove appropriate
versions of Proposition~\ref{deprop}.

We begin by defining the graphs $G$ and $G'$ formally.
Note that, as desired, $G'$ belongs to $Q_{\fulleq}$ but $G$ does not.
\begin{defi}{$G_\fulleq(\voc, m)$}
  Let $\voc$ be a finite vocabulary and let $m\geq 3$ be a natural
  number. Let $A=\{a_1,\dots,a_m\}$, $B=\{b_1,\dots,b_m\}$ and
  $C=\{c_1,\dots,c_m\}$ be three disjoint sets of nodes, disjoint
  from $\voc$. We define the graph $G_\fulleq(\voc, m)$ to be
  $\iexpr{p}{G} = C\times (A\cup B)$ and
  $\iexpr{q}{G} = C\times A \cup \{(c_i, b_j)\mid i\neq j \in
  \{1,\dots,m\}\}$.
\end{defi}

\begin{defi}{$G'_\fulleq(\voc, m)$}
  We define the graph $G'_\fulleq(\voc, m)$ like $G_\fulleq(\voc, m)$
  but
  $\iexpr{q}{G} = \{(c_i, a_j)\mid i\neq j \in \{1,\dots,m\}\} \cup
  \{(c_i, b_j)\mid i\neq j \in \{1,\dots,m\}\}$.
\end{defi}

We identify the possible types of strings on the graphs
$G_\fulleq(\Sigma,m)$ and $G'_\fulleq(\Sigma,m)$ as follows.

\begin{lem}
  \label{lem:eq-pe}
  Let $\voc$ be a vocabulary. Let $m \geq 3$ be a natural number. Let
  $G$ be $G_\fulleq(\voc, m)$ and let $G'$ be $G'_\fulleq(\voc,
  m)$. The only possibilities for a string $s$ evaluated on $G$ and
  $G'$ are the following:
  \begin{enumerate}
  \item $\iexpr{s}{G} = \iexpr{p}{G} = \iexpr{s}{G'} = C\times (A\cup B)$.
  \item
    $\iexpr{s}{G} = \iexpr{q}{G} = (C\times A) \cup \{(c_i,b_j)\mid
    i\neq j \in
    \{1,\dots, m\})\}$ and \\
    $\iexpr{s}{G'} = \iexpr{q}{G'} = \{(c_i, a_j)\mid i\neq j \in
    \{1,\dots,m\}\} \cup \{(c_i, b_j)\mid i\neq j \in
    \{1,\dots,m\}\}$.
  \item
    $\iexpr{s}{G} = \iexpr{p^-}{G} = \iexpr{s}{G'} = (A\cup B)\times
    C$.
  \item
    $\iexpr{s}{G} = \iexpr{q^-}{G} = (A\times C) \cup \{(b_i,c_j)\mid
    i\neq j \in
    \{1,\dots, m\})\}$ and \\
    $\iexpr{s}{G'} = \iexpr{q^-}{G'} = \{(a_i,c_j)\mid i\neq j \in
    \{1,\dots, m\})\} \cup \{(b_i,c_j)\mid i\neq j \in \{1,\dots,
    m\})\}$.
  \item $\iexpr{s}{G} = \iexpr{s}{G'} = C\times C$.
  \item $\iexpr{s}{G} = \iexpr{s}{G'} = (A\cup B)\times (A\cup B)$.
  \item $\iexpr{s}{G} = \iexpr{s}{G'} = \id$.
  \item $\iexpr{s}{G} = \iexpr{s}{G'} = \emptyset$.
  \end{enumerate}
\end{lem}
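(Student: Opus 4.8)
The plan is an induction on the structure of the string $s$, proving the stronger statement that the \emph{pair} $(\iexpr{s}{G},\iexpr{s}{G'})$ is one of the eight pairs exactly as listed: if $\iexpr{s}{G}$ equals the $G$-relation of case~$k$, then $\iexpr{s}{G'}$ equals the $G'$-relation of case~$k$. The base case $s=\id$ gives $\iexpr{s}{G}=\iexpr{s}{G'}=\id$, which is case~7. For the inductive step, $s$ is $s'\comp t$ with $s'$ a string and $t$ a property name or the inverse of one; if $t\notin\{p,q\}$ then $\iexpr{t}{G}=\iexpr{t}{G'}=\emptyset$ by construction, so $\iexpr{s}{G}=\iexpr{s}{G'}=\emptyset$ (case~8), and otherwise $t\in\{p,q,p^-,q^-\}$ and we invoke the induction hypothesis for $s'$.

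The observation that makes the ensuing case analysis manageable is that every relation in play is ``blob-structured''. Writing $A$, $B$, $C$ for the three sets of nodes, each of the eight listed relations has domain and range among $\emptyset$, $C$, $A\cup B$ (and, only for $\id$, the set of all nodes), and, restricted to any product $X\times Y$ with $X,Y\in\{A,B,C\}$, is either empty, all of $X\times Y$, or $X\times Y$ minus a perfect matching. Since the domain of $p$ and of $q$ is $C$ while that of $p^-$ and of $q^-$ is $A\cup B$, a composition $\iexpr{s'}{H}\comp\iexpr{t}{H}$ is empty as soon as the range of the first factor misses the domain of the second; this kills every cell in which a relation of case 1, 2, or 6 is composed with $p$ or $q$, and every cell in which a relation of case 3, 4, or 5 is composed with $p^-$ or $q^-$. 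If $\iexpr{s'}{H}=\id$, the composition is simply $\iexpr{t}{H}$, which is case 1, 2, 3, or 4 according to $t$, with precisely the pairing listed. In each remaining cell one composes two ``dense'' relations (cases 1--6) over a shared middle set, which is necessarily $C$ or $A\cup B$ --- never just $A$ or just $B$ --- hence of size $m$ or $2m$; for fixed endpoints $x,z$ only a bounded number of middle elements are forbidden (at most two when the middle set is $C$, at most four when it is $A\cup B$), which, since $m\geq 3$, is fewer than the size of the middle set, so some middle element works and the composition is the \emph{full} product of the two relevant blobs. Since it is described purely in terms of the blobs, this full product --- one of $C\times C$, $C\times(A\cup B)$, $(A\cup B)\times C$, $(A\cup B)\times(A\cup B)$, i.e.\ case 5, 1, 3, or 6 --- is literally the same relation on $G$ and on $G'$, so the pairing invariant is preserved. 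In particular, cases 2 and 4 arise only from the length-one strings $q$ and $q^-$.

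I expect the main difficulty to be one of bookkeeping discipline rather than mathematical depth: one just runs through the $8\times 4$ table, in each cell identifies the middle set, and either notes that two blobs are disjoint (result $\emptyset$) or checks that the handful of forbidden middle elements cannot exhaust the middle set, which has at least $m\geq 3$ elements. A representative nontrivial cell is $\iexpr{q}{H}\comp\iexpr{q^-}{H}$: on $G$ the pair $(c_i,c_j)$ is witnessed by any middle node $a_k\in A$ (since $C\times A\subseteq\iexpr{q}{G}$), and on $G'$ by any $a_k$ with $k\notin\{i,j\}$, which exists as $m\geq 3$; both graphs therefore yield $C\times C$, i.e.\ case~5. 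Every other nonempty, non-$\id$ cell is of the same flavour, completing the induction.
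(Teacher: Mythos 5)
Your proof is correct and is essentially the paper's argument: the paper also establishes closure of these eight relation types under right-composition with $p,q,p^-,q^-$, just presented as a saturation table of representative strings rather than as an explicit structural induction. Your blob-structure/pigeonhole justification (bounded number of forbidden middle elements versus a middle set of size $\geq m\geq 3$) is exactly the reasoning that the paper's table entries leave to the reader, so nothing is missing.
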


\begin{proof}
  We show this by systematically enumerating all strings until no new binary
  relations can be found. Note that we only enumerate over strings
  that alternate between property names and reversed property
  names. Indeed, all other strings evaluate to the empty relation on
  both $G$ and $G'$. Every time we encounter new binary relations, we
  put the string in boldface.

  \begin{center}
  \begin{tabular}[b]{lcc}
    \toprule
    $s$ & $\iexpr{s}{G}$ & $\iexpr{s}{G'}$ \\
    \midrule
    $\mathbf{id}$ & $\id$ & $\id$ \\ 
    $\mathbf{p}$ & $C\times (A\cup B)$ & $C\times (A\cup B)$ \\
    $\mathbf{q}$ & $(C\times A)\cup{}$ & $\{(c_i, a_j)\mid i\neq j \in \{1,\dots,m\}\}\cup{}$ \\
        & $\{(c_i,b_j)\mid i\neq j \in
          \{1,\dots, m\})\}$ & $\{(c_i, b_j)\mid i\neq j \in \{1,\dots,m\}\}$\\
    $\mathbf{p^-}$ & $(A\cup B)\times C$ & $(A\cup B)\times C$ \\
    $\mathbf{q^-}$ & $(A\times C)\cup{}$ & $\{(a_i,c_j)\mid i\neq j \in \{1,\dots, m\})\}\cup{}$ \\
        & $\{(b_i,c_j)\mid i\neq j \in
          \{1,\dots, m\})\}$ & $\{(b_i,c_j)\mid i\neq j \in \{1,\dots, m\})\}$ \\
    $\mathbf{p\comp p^-}$ & $C\times C$ & $C\times C$ \\
    $p\comp q^-$ & $C\times C$ & $C\times C$ \\
    $q\comp p^-$ & $C\times C$ & $C\times C$ \\
    $q\comp q^-$ & $C\times C$ & $C\times C$ \\
    $\mathbf{p^-\comp p}$ & $(A\cup B)\times (A\cup B)$ & $(A\cup B)\times (A\cup B)$ \\
    $p^-\comp q$ & $(A\cup B)\times (A\cup B)$ & $(A\cup B)\times (A\cup B)$ \\
    $q^-\comp p$ & $(A\cup B)\times (A\cup B)$ & $(A\cup B)\times (A\cup B)$ \\
    $q^-\comp q$ & $(A\cup B)\times (A\cup B)$ & $(A\cup B)\times (A\cup B)$ \\
    $p\comp p^-\comp p$ & $C\times (A\cup B)$ & $C\times (A\cup B)$ \\
    $p\comp p^-\comp q$ & $C\times (A\cup B)$ & $C\times (A\cup B)$ \\
    $p^-\comp p\comp p^-$ & $(A\cup B)\times C$ & $(A\cup B)\times C$ \\
    $p^-\comp p\comp q^-$ & $(A\cup B)\times C$ & $(A\cup B)\times C$ \\
    \bottomrule
  \end{tabular} \qedhere
\end{center} 
\end{proof}

We are now ready to prove the key proposition.

\begin{prop}
  Let $\voc$ be a vocabulary. Let $m\geq 3$ be a natural number. Let
  $V=A\cup B\cup C$ be the common set of nodes of the graphs
  $G = G_\fulleq(\voc, m)$ and $G' = G'_\fulleq(\voc, m)$. For all
  shapes $\phi$ over $\voc$ counting to at most $m-1$, we have
  $\iexpr{\phi}{G} = \iexpr{\phi}{G'}$. Moreover, 
  \begin{itemize}
  \item $\iexpr{\phi}{G} \cap V = A\cup B$, or
  \item $\iexpr{\phi}{G} \cap V = C$, or
  \item $\iexpr{\phi}{G} \cap V = V$, or 
  \item $\iexpr{\phi}{G} \cap V = \emptyset$.
  \end{itemize}
\end{prop}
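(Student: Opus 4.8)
The plan is to prove this by induction on the structure of $\phi$, mirroring the proofs of Propositions~\ref{prop:disj} and~\ref{prop:eq}. The induction hypothesis to maintain is exactly the four-way dichotomy stated: for every subshape $\phi$ (over $\voc$, counting to at most $m-1$, not using full equality), $\iexpr{\phi}{G}\cap V$ equals one of $A\cup B$, $C$, $V$ or $\emptyset$, and moreover $\iexpr{\phi}{G}=\iexpr{\phi}{G'}$ on all node names, inside and outside $V$. This dichotomy is the ``strong'' hypothesis that makes everything go through: it records that a low-counting shape cannot tell two $C$-nodes apart, cannot tell two nodes of $A\cup B$ apart, and in particular cannot separate an $A$-node from a $B$-node. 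The base case $\top$ gives $N\supseteq V$; $\const c$ gives $\{c\}$, which meets $V$ in $\emptyset$ because $V$ is disjoint from $\voc$; the Boolean connectives are trivial.

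For the path expressions occurring in $\phi$ I would first apply Lemma~\ref{lem:id} to reduce $E$ to one of $\id$, $E'\cup\id$ or an $\id$-free $E'$, and then apply Lemma~\ref{lem:strings} with $n=|V|=3m$ to write $\iexpr{E}{H}=\bigcup_{s\in U}\iexpr{s}{H}$ for both $H=G$ and $H=G'$, with $U$ depending only on $E$. Lemma~\ref{lem:eq-pe} then says each $\iexpr{s}{H}$ is one of eight explicit relations, and that $\iexpr{s}{G}$ and $\iexpr{s}{G'}$ agree except for the ``pure $q$'' and ``pure $q^-$'' string types, where they differ by one edge per source. Reading this off fibre-wise: for $v=c_i$ the set $\iexpr{E}{G}(c_i)$ is a union of members of $\{\emptyset,\{c_i\},C,A\cup B,A\cup\{b_j\mid j\neq i\}\}$, and $\iexpr{E}{G'}(c_i)$ is the same union with the last set replaced by $\{a_j\mid j\neq i\}\cup\{b_j\mid j\neq i\}$; for $v=a_j$ one gets a union of members of $\{\emptyset,\{a_j\},C,A\cup B\}$ in $G$, with $C$ replaced by $C\setminus\{c_j\}$ for $q^-$-type strings in $G'$; and for $v=b_j$ the two graphs give literally the same fibre. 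In every case $\iexpr{E}{H}(v)\subseteq V$ when $v\in V$, and $\iexpr{E}{G'}(v)\subseteq\iexpr{E}{G}(v)$ on $V$. For node names $v\notin V$, Lemma~\ref{lem:safety} gives $\iexpr{E}{G}(v)=\iexpr{E}{G'}(v)$ (it is $\emptyset$ when the $\id$-free part of $E$ is safe and $\{v\}$ when it is unsafe, with $\id$ handled separately), which is exactly what settles the ``$G=G'$ outside $V$'' part of every base case, just as in Section~\ref{secdisjproof}.

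Given this, the base cases become bookkeeping. For $\eq(E,p')$ with $p'\in\voc$, the interpretation of $p'$ is $C\times(A\cup B)$, $C\times A\cup\{(c_i,b_j)\mid i\neq j\}$, or $\emptyset$; so membership of $v\in V$ in $\iexpr{\eq(E,p')}{H}$ is an equality between two unions of the fibres above, and since the fibre type set of $E$ at $v$ depends only on $E$ and the single perturbed element is present on either both sides or neither, this holds for all $C$-nodes simultaneously and for all $(A\cup B)$-nodes simultaneously, and it holds in $G$ iff in $G'$; node names outside $V$ are handled by safety as before. (This is non-full equality; full equality is precisely what separates $G$ from $G'$.) For $\closed(R)$ one checks directly that in $G$ and $G'$ only $p$ and $q$ have nonempty interpretation, so $\closed(R)$ holds everywhere if $\{p,q\}\subseteq R$ (giving $V$) and otherwise holds exactly on $A\cup B$ and fails on $C$ (giving $A\cup B$), identically in $G$ and $G'$, and trivially on all $v\notin V$. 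The Boolean inductive cases are immediate.

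The main obstacle, as in the earlier proofs, is the case $\disj(E_1,E_2)$ (for arbitrary path expressions $E_1,E_2$, which also covers $\fulldisj$) and the counting case $\geqn{n}{E}{\phi_1}$, where the one-element difference between $G$ and $G'$ threatens to flip a truth value at a node of $V$. Here I would use the following rigidity observation read off from Lemma~\ref{lem:eq-pe}: if $a_i\in\iexpr{E}{G}(c_i)\setminus\iexpr{E}{G'}(c_i)$ then necessarily $A\subseteq\iexpr{E}{G}(c_i)$ and $\{a_j\mid j\neq i\}\subseteq\iexpr{E}{G'}(c_i)$ (the only way to lose $a_i$ is that it came solely from a $q$-type string, which contributes all of $A$ in $G$ and all of $A\setminus\{a_i\}$ in $G'$), and symmetrically if $c_j\in\iexpr{E}{G}(a_j)\setminus\iexpr{E}{G'}(a_j)$ then $C\subseteq\iexpr{E}{G}(a_j)$ and $C\setminus\{c_j\}\subseteq\iexpr{E}{G'}(a_j)$. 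For disjointness this shows that if $\iexpr{E_1}{G}(v)\cap\iexpr{E_2}{G}(v)$ were exactly the single perturbed element, one of the $E_k$ would have to contain the whole thick fibre ($A$ at $c_i$, $C$ at $a_j$) and hence share at least $m\ge 3$ elements with the other --- a contradiction; so (together with $\iexpr{E_k}{G'}(v)\subseteq\iexpr{E_k}{G}(v)$ for the easy direction) disjointness holds in $G$ iff in $G'$, and by the evident $S_m$-symmetry of the two graphs it is moreover the same for all $c_i$ and for all $a_j,b_j$. For the counting case one checks, using $m\ge 3$, that every fibre size --- and every size of an intersection of a fibre union with one of $A\cup B$, $C$, $V$, $\emptyset$ --- that can arise is either at most $1$ or at least $m-1$, and that the perturbation can only shrink a size of the latter kind by one; since $\phi$ counts to at most $m-1$, a threshold $n\le m-1$ is met by a count $\ge m-1$ in both graphs, or missed by a count $\le 1<n$ in both when $n\ge 2$, or the count is exactly $1$ and graph-independent. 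This is also where the bound $m-1$ is essential: a count to $m$ along $q^-$ separates $a_j$ from $b_j$, while a count to $m-1$ cannot. Finally, the equality $\iexpr{\phi}{G}=\iexpr{\phi}{G'}$ on node names outside $V$ is argued exactly as in the proof of Proposition~\ref{prop:disj}, using Lemma~\ref{lem:safety} and the fact that $G$ and $G'$ share the node set $V$.
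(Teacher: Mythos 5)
Your proposal is correct and follows essentially the same route as the paper: structural induction maintaining the four-set dichotomy, reduction of path expressions to the eight string types via Lemmas~\ref{lem:strings} and~\ref{lem:eq-pe}, the symmetry observations, a type-set case analysis for $\eq$, $\disj(E_1,E_2)$, $\closed$ and counting, and Lemma~\ref{lem:safety} for node names outside $V$. Your ``perturbation shrinks each fibre by at most one element, and every nonzero count is either $1$ or at least $m-1$'' framing of the counting case, and the rigidity shortcut for disjointness (both fibres containing the perturbed element must contain the whole thick fibre), are mild streamlinings of the paper's explicit enumeration, not a different argument.
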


\begin{proof}
  By induction on the structure of $\phi$. For the base cases, if
  $\phi$ is $\top$ then $\iexpr{\top}{G}=\iexpr{\top}{G'}=N$ and
  $N\cap V = V$. If $\phi$ is $\{c\}$, then
  $\iexpr{\{c\}}{G}=\iexpr{\{c\}}{G'} = \{c\}$ and
  $\{c\}\cap V=\emptyset$ since $c\in\voc$ and $V\cap\voc=\emptyset$.

  If $\phi$ is $\closed(Q)$, we consider the possibilities for $Q$. If
  $Q$ does not contain both $p$ and $q$, then clearly
  $\iexpr{\phi}{G}\cap V=\iexpr{\phi}{G'}\cap V=A\cup B$. Otherwise,
  $\iexpr{\phi}{G}=\iexpr{\phi}{G'}=N$.

  Before considering the remaining cases, we observe the following
  symmetries:
  \begin{itemize}
  \item All elements of $A$ are symmetrical in $G$. This is obvious
    from the definition of $G$.
  \item Also in $G'$, all elements of $A$ are symmetrical. Indeed, for
    any $a_i\neq a_j$ in $A$, the function that swaps $a_i$ and $a_j$,
    as well as $c_i$ and $c_j$, is an automorphism of $G'$.
  \item Similarly, all elements of $B$ are symmetrical in $G$, and
    also in $G'$.
  \item Moreover, we see that all elements of $C$ are symmetrical in
    $G$, and in $G'$.
  \item Finally, in $G'$, any $a_i$ and $b_j$ are symmetrical. Indeed,
    the function that swaps $a_i$ and $b_i$ is clearly an automorphism
    of $G'$. In turn, $b_i$ and $b_j$ are symmetrical by the above.
  \end{itemize}

  Therefore, we are only left to show:
  \begin{enumerate}[(i)]
  \item \label{en:p11} For any $a\in A$ and $b\in B$, we have
    $G,a\models\phi\iff G, b\models \phi$,
  \item \label{en:p12} For any $a\in A$, we have
    $G,a\models\phi\iff G',a\models\phi$, and
  \item \label{en:p13} For any $c\in C$, we have
    $G,c\models\phi\iff G',c\models\phi$.
  \item \label{en:p14} For any $x\not\in V$, we have
    $G,x\models\phi\iff G',x\models\phi$.
  \end{enumerate}
  Note that then also for any $b\in B$, we have
  $G,b\models\phi\iff G',b\models\phi$ because for any $a\in A$ and
  $b\in B$, we have
  $G,b\models\phi\stackrel{\rm\ref{en:p11}}{\iff} G, a\models \phi
  \stackrel{\rm\ref{en:p12}}{\iff} G',a\models\phi
  \stackrel{\mathit{symmetry}}{\iff} G',b\models\phi$.

  Consider the case where $\phi$ is $\eq(E,r)$. We
  verify \ref{en:p11}, \ref{en:p12}, \ref{en:p13}, and \ref{en:p14}.
  \begin{enumerate}
  \item[\ref{en:p11}] By definition of $G$,
    $\iexpr{r}{G}(a)=\iexpr{r}{G}(b)=\emptyset$ for any property name
    $r$. Therefore we need to show
    $\iexpr{E}{G}(a)=\emptyset \iff \iexpr{E}{G}(b)=\emptyset$. By
    Lemma \ref{lem:strings} we know there is a set of strings $U$
    equivalent to $E$ in both $G$ and $G'$. By Lemma \ref{lem:eq-pe}
    there are only 8 types of strings. We observe from Lemma
    \ref{lem:eq-pe} that for every $U$,
    $\strset{G}(a)$ is empty whenever $U$ only
    contains strings of type 1, 2, 5, or 8. These are also exactly the
    $U$ s.t.\ $\strset{G}(b)$ is empty.
  \item[\ref{en:p12}] Furthermore, these are also exactly the sets of
    strings $U$ s.t.\ $\strset{G'}(a)$ is
    empty. Therefore, as $\iexpr{r}{G'}(a)=\emptyset$, we have
    $G',a\models\phi$.
  \item[\ref{en:p13}] Assume $G, c\models\phi$. We consider the possibilities for
    $r$. First, suppose $r=p$. The sets of strings $U$ s.t.
    $\strset{G}(c)=\iexpr{p}{G}(c)$ contain strings
    of type 1 but not strings of type 5 or 7. These are also exactly
    the $U$ s.t.\ $\strset{G'}(c)=\iexpr{p}{G'}(c)$.

    Next, suppose $r=q$. The sets of strings $U$ s.t.
    $\strset{G}(c)=\iexpr{p}{G}(c)$ contain strings
    of type 2 but not strings of type 1, 5 or 7. These are also
    exactly the types of strings s.t.\
    $\strset{G'}(c)=\iexpr{q}{G'}(c)$.

    Finally, if $r$ is any other property name, then
    $\iexpr{r}{G}(c)=\iexpr{r}{G'}(c)=\emptyset$. This is the case
    when $U$ does not contain any strings of type 1, 2, 3, or 7. These
    are also exactly the types of strings $U$ s.t.\
    $\strset{G'}(c)=\emptyset$.
  \item[\ref{en:p14}] Let $x\in N-V$. Clearly
    $\iexpr{r}{G}(x)=\iexpr{r}{G'}(x)=\emptyset$. By Lemma
    \ref{lem:safety}, if $E$ is safe, then
    $\iexpr{E}{G}(x)=\iexpr{E}{G'}(x)=\emptyset$. Therefore
    $G,x\models\phi$ and $G',x\models\phi$. On the other hand,
    whenever $E$ is unsafe,
    $\iexpr{E}{G}(x)=\iexpr{E}{G'}(x)=\{x\}\neq\emptyset$. Therefore,
    $G,x\not\models\phi$ and $G',x\not\models\phi$.
  \end{enumerate}

\noindent
  Next, consider the case where $\phi$ is $disj(E_1,E_2)$. We again
  verify \ref{en:p11}, \ref{en:p12}, \ref{en:p13}, and \ref{en:p14}.
  \begin{enumerate}
  \item[\ref{en:p11}] Assume $G,a\models\disj(E_1,E_2)$.  This can
    only be the case when the corresponding sets of strings $U_1$ and
    $U_2$ are of the following form. $U_1$ can consist only of strings
    of type 3, 4, 1, 2, 5, and 8 (Here, types 1, 2, 5 and 8 evaluate
    to empty from $a$ as already seen above). $U_2$ can then only
    consist of strings of type 6, 7, 1, 2, 5, and 8 (or vice
    versa). These are also the only cases where
    $G,b\models\disj(E_1,E_2)$.
  \item[\ref{en:p12}] Exactly the same situation occurs in $G'$ and
    these are then also the only cases where
    $G',a\models\disj(E_1,E_2)$.
  \item[\ref{en:p13}] Assume $G,c\models\disj(E_1,E_2)$. This can only
    be the case when the corresponding sets of strings $U_1$ and $U_2$
    are of the following form. $U_1$ can consist only of strings of
    type 1, 2, 3, 4, 6, and 8 (Here, types 3, 4, 6, and 8 evaluate to
    empty from $c$ as already seen above). $U_2$ can then only consist
    of strings of type 5, 7, 3, 4, 6, and 8. We observe that this is
    also the case in $G'$.
  \item[\ref{en:p14}] Let $x\in N-V$. Whenever $E_1$ is safe, by Lemma
    \ref{lem:safety}
    $\iexpr{E_1}{G}(x)=\iexpr{E_1}{G'}(x)=\emptyset$. Therefore,
    $G,x\models\phi$ and $G',x\models\phi$. Clearly, the same holds
    whenever $E_2$ is safe. When both $E_1$ and $E_2$ are unsafe,
    $\iexpr{E_1}{G}(x)=\iexpr{E_1}{G'}(x)=\{x\}\neq\emptyset$. Therefore,
    $G,x\not\models\phi$ and $G',x\not\models\phi$.
  \end{enumerate}

  The cases where $\phi$ is $\phi_1\land\phi_2$, $\phi_1\lor\phi_2$ or
  $\neg\phi_1$ are handled by induction in a straightforward manner.

  Lastly, we consider the case where $\phi$ is $\geqn{n}{E}{\phi_1}$.
  \begin{enumerate}
  \item[\ref{en:p11}] Assume $G, a\models\phi$. Then, there exist
    distinct $x_1,\dots,x_n$ s.t.\ $(a,x_i)\in\iexpr{E}{G}$ and
    $G,x_i\models\phi_1$ for $1\leq i\leq n$. Again, by Lemma
    \ref{lem:strings} we know there is a set of strings $U$ equivalent
    to $E$ in both $G$ and $G'$. By Lemma \ref{lem:eq-pe} there are
    only 8 types of strings.  By induction, we consider three cases.

    First, if $\iexpr{\phi_1}{G}\cap V=A\cup B$, then
    $x_i,\dots,x_n\in A\cup B$. Therefore, we know $U$ must at least
    contain strings of type 6 or 7. Suppose $U$ contains strings of
    type 6. Then, we verify that
    $\sharp (\strset{G}(b)\cap (A\cup B)) \geq m-1 \geq n$. Whenever $U$
    contains strings of type 7, and not of type 6, we know $n=1$ and
    clearly $\sharp (\strset{G}(b)\cap (A\cup B))=1$.

    Next, if $\iexpr{\phi_1}{G}\cap V=C$, then $x_i,\dots,x_n\in
    C$. Therefore, we know $U$ must at least contain strings of type 3
    or 4. Whenever $U$ contains 3 or 4, we verify that
    $\sharp (\strset{G}(b)\cap C) \geq m-1 \geq n$.

    Next, if $\iexpr{\phi_1}{G}\cap V=V$, then $x_i,\dots,x_n\in
    V$. Therefore, we know $U$ must at least contain strings of type
    3, 4, 6 or 7. All these types have already been handled in the
    previous two cases.

    Finally, the case where $\iexpr{\phi_1}{G}\cap V=\emptyset$ cannot
    occur as $n>0$.

    The sets of strings $U$ above are also exactly the sets used to
    argue the implication from right to left.
  \item[\ref{en:p12}] For every case of $U$ above, for every inductive
    case of $\phi_1$, we can also verify that
    $\sharp (\strset{G'}(a)\cap (A\cup B))\geq m-1\geq n$,
    $\sharp (\strset{G'}(a)\cap C)\geq m-1\geq n$, and
    $\sharp (\strset{G'}(a)\cap V)\geq m-1\geq n$.
  \item[\ref{en:p13}] Assume $G, c\models\phi$. Then, there exist
    distinct $x_1,\dots,x_n$ s.t.\ $(c,x_i)\in\iexpr{E}{G}$ and
    $G,x_i\models\phi_1$ for $1\leq i\leq n$. By induction, we consider
    three cases.

    First, if $\iexpr{\phi_1}{G}\cap V=A\cup B$, then
    $x_i,\dots,x_n\in A\cup B$. Therefore, we know $U$ must at least
    contain strings of type 1 or 2. Whenever $U$ contains 1 or 2, we
    verify that
    $\sharp (\strset{G'}(c)\cap (A\cup B))\geq m
    \geq n$.

    Next, if $\iexpr{\phi_1}{G}\cap V=C$, then $x_i,\dots,x_n\in
    C$. Therefore, we know $U$ must at least contain strings of type 5
    or 7. Whenever $U$ contains 5, we verify that
    $\sharp (\strset{G'}(c)\cap C)\geq m \geq
    n$. Otherwise, whenever $U$ contains strings of type 7, and not of
    type 5, we know $n=1$ and clearly
    $\sharp (\strset{G'}(c)\cap C)=1$.

    Next, if $\iexpr{\phi_1}{G}\cap V=V$, then $x_i,\dots,x_n\in
    V$. Therefore, we know $U$ must at least contain strings of type
    1, 2, 5 or 7. All these types have already been handled in the
    previous two cases.

    Finally, the case where $\iexpr{\phi_1}{G}\cap V=\emptyset$ cannot
    occur as $n>0$.

    The sets of strings $U$ above are also exactly the sets used to
    argue the implication from right to left.
    
  \item[\ref{en:p14}] For the direction from left to right, take
    $x\in \iexpr{\phi}{G}\setminus V$. Since $G,x\models\phi$, there
    exists $y_1,\dots,y_n$ s.t.\ $(x,y_i)\in\iexpr{E}{G}$ and
    $G,y_i\models\phi_1$ for $i=1,\dots,n$. However, since $x\not\in V$,
    by Lemma \ref{lem:safety}, all $y_i$ must equal $x$. Hence, $n=1$
    and $(x,x)\in\iexpr{E}{G}$ and $G,x\models\phi_1$. Then again, by
    the same Lemma, $(x,x)\in\iexpr{E}{G'}$, since $G$ and $G'$ have
    the same set of nodes $V$. Moreover, by induction,
    $G',x\models\phi_1$. We conclude $G',x\models\phi$ as desired. The
    direction from right to left is argued symmetrically. \qedhere
  \end{enumerate}
\end{proof}

\subsection{Full disjointness}
\label{sec:fulldisj}
We present here the proof for the primitivity of full disjointness
tests. The general strategy is the same as in 
Section~\ref{sec:fulleq}.

We begin by defining the graphs $G$ and $G'$ formally.
\begin{defi}{$G_\fulldisj(\voc, m)$ }
  Let $m$ be a natural number that is a multiple of $8$. Let
  $A=\{a_1,\dots,a_m\}$, $B=\{b_1,\dots,b_m\}$ and
  $C=\{c_1,\dots,c_m\}$ be three disjoint sets of nodes, disjoint from
  $\voc$. For any $i\leq j$, we write $a_{i\to j}$ to denote the set
  \[\{a_{1+(i-1+l\mod m)}\mid 0\leq l\leq j-i\}\]
  We define $b_{i\to j}$ and $c_{i\to j}$ analogously.

  We define the graph $G_\fulldisj(\voc, m)$ by:
  \begin{align*}
    \iexpr{p}{G}(c_i) &= a_{i\to i+\frac{m}{2}-1} \cup
    b_{i-\frac{m}{8}\to i+\frac{m}{2}-1}  \;\text{and} \\
    \iexpr{q}{G}(c_i) &= a_{i-\frac{m}{2}\to i-1}\cup
    b_{i-\frac{m}{2}\to i+\frac{m}{8}-1} \;\text{for $1\leq i\leq m$}
  \end{align*}
  The $p$ and $q$ relations are visualized in
  Figure~\ref{fig:disjgraph1}.
\end{defi}

To give an example for our notation, suppose $m=8$. Then,
\begin{align*}
  a_{2\to 5} &=\{a_{1+(1+l\mod 8)}\mid 0\leq l\leq 3\}=\{a_2, a_3, a_4, a_5\}\\
  a_{7\to 10}&=\{a_{1+(6+l\mod 8)}\mid 0\leq l\leq 3\}=\{a_7, a_8, a_1,a_2\}, \;\text{and}\\
  a_{-4\to -1} &= \{a_{1+(-5+l\mod 8)}\mid 0\leq l\leq 3\} = \{a_4,a_5,a_6,a_7\}
\end{align*}

\begin{defi}{$G'_\fulldisj(\voc, m)$ }
  We define the graph $G'_\fulldisj(\voc, m)$ on the same nodes as
  $G_\fulldisj(\voc, m)$, with the only difference being the
  relationship of the $p$- and $q$-edges from $C$ to $A$:
  \begin{align*}
    \iexpr{p}{G}(c_i) &= a_{i-\frac{m}{8}\to i+\frac{m}{2}-1} \cup
                        b_{i-\frac{m}{8}\to i+\frac{m}{2}-1}\;\text{and}\\
    \iexpr{q}{G}(c_i) &= a_{i-\frac{m}{2}\to i+\frac{m}{8}-1} \cup
                        b_{i-\frac{m}{2}\to i+\frac{m}{8}-1}\;\text{for $1\leq i\leq m$}
  \end{align*}
  The $p$ and $q$ relations are visualized in
  Figure~\ref{fig:disjgraph1}.
\end{defi}

Important to the intuition behind these graphs is the overlap
generated by the inverse $p$- and $q$-edges. As demonstrated in
Figure~\ref{fig:disjgraph2}, in graph $G=G_\fulldisj(\Sigma,m)$, the
set of $c$ nodes reached from $a$ nodes with inverse $p$ edges is
disjoint from the set of $c$ nodes reached with inverse $q$
edges. This is not the case for $b$ nodes: there, these sets overlap
by precisely one fourth of the $c$ nodes. For graph
$G'=G'_\fulldisj(\Sigma,m)$, the sets of $c$ nodes reachable by
inverse $p$ and $q$ edges overlap for both $a$ and $b$ nodes.

We precisely characterize the behavior of strings on the graphs
$G$ and $G'$ as follows.

\begin{lem}
  \label{lem:disj-pe}
  Let $\voc$ be a vocabulary. Let $m$ be a natural number that is a
  multiple of $8$. Let $G$ be $G_\fulldisj(\voc, m)$ and let $G'$ be
  $G'_\fulldisj(\voc, m)$. The only possibilities for a string $s$
  evaluated on $G$ and $G'$ are the following:
  \begin{enumerate}
  \item
    $\iexpr{s}{G} = \iexpr{p}{G} = \bigcup_{i\in\{1,\dots,m\}}\{c_i\}\times (a_{i\to
      i+\frac{m}{2}-1} \cup b_{i-\frac{m}{8}\to i+\frac{m}{2}-1})$ and \\
    $\iexpr{s}{G'} = \iexpr{p}{G'} = \bigcup_{i\in\{1,\dots,m\}}\{c_i\}\times
    (a_{i-\frac{m}{8}\to i+\frac{m}{2}-1} \cup b_{i-\frac{m}{8}\to
      i+\frac{m}{2}-1})$;
  \item
    $\iexpr{s}{G} = \iexpr{q}{G} = \bigcup_{i\in\{1,\dots,m\}}\{c_i\}\times
    (a_{i-\frac{m}{2}\to i-1}\cup b_{i-\frac{m}{2}\to
      i+\frac{m}{8}-1})$ and \\
    $\iexpr{s}{G'} = \iexpr{q}{G} = \bigcup_{i\in\{1,\dots,m\}}\{c_i\}\times
    (a_{i-\frac{m}{2}\to i+\frac{m}{8}-1} \cup b_{i-\frac{m}{2}\to
      i+\frac{m}{8}-1})$;
  \item
    $\iexpr{s}{G} = \iexpr{p^-}{G} = \bigcup_{i\in\{1,\dots,m\}}
    (\{a_i\}\times c_{i-\frac{m}{2}+1\to i})\cup (\{b_i\}\times
    c_{i-\frac{m}{2}+1\to
      i+\frac{m}{8}})$ and \\
    $\iexpr{s}{G'} = \iexpr{p^-}{G'} = \bigcup_{i\in\{1,\dots,m\}}
    (\{a_i\}\times c_{i-\frac{m}{2}+1\to i+\frac{m}{8}})\cup
    (\{b_i\}\times c_{i-\frac{m}{2}+1\to i+\frac{m}{8}})$ ;
  \item
    $\iexpr{s}{G} = \iexpr{q^-}{G} = \bigcup_{i\in\{1,\dots,m\}}
    (\{a_i\}\times c_{i+1\to i+\frac{m}{2}})\cup (\{b_i\}\times
    c_{i-\frac{m}{8}+1\to
      i+\frac{m}{2}})$ and \\
    $\iexpr{s}{G'} = \iexpr{q^-}{G'} = \bigcup_{i\in\{1,\dots,m\}}
    (\{a_i\}\times c_{i-\frac{m}{8}+1\to i+\frac{m}{2}})\cup
    (\{b_i\}\times c_{i-\frac{m}{8}+1\to i+\frac{m}{2}})$;
  \item $\iexpr{s}{G} = \iexpr{s}{G'} = C\times C$;
  \item $\iexpr{s}{G} = \iexpr{s}{G'} = (A\cup B)\times (A\cup B)$;
  \item $\iexpr{s}{G} = \iexpr{s}{G'} = C\times (A\cup B)$;
  \item $\iexpr{s}{G} = \iexpr{s}{G'} = (A\cup B)\times C$;
  \item $\iexpr{s}{G} = \iexpr{s}{G'} = \id$; or
  \item $\iexpr{s}{G} = \iexpr{s}{G'} = \emptyset$
  \end{enumerate}
  The first four types of strings are visualized in
  Figure~\ref{fig:disjgraph1} and Figure~\ref{fig:disjgraph2}.
\end{lem}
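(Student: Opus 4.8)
The plan is to follow the recipe used for Lemma~\ref{lem:eq-pe}. A string is, up to a harmless leading $\id$, a composition read left to right of atomic path expressions, each a property name or the inverse of one, so the binary relation a string defines on $G$ (and on $G'$) is obtained by a finite closure computation: start from $\id$, $\emptyset$, and the atomic relations $\sem p$, $\sem q$, $\sem{p^-}$, $\sem{q^-}$; repeatedly right-compose every relation obtained so far with each of $\sem p$, $\sem q$, $\sem{p^-}$, $\sem{q^-}$, running the computation on $G$ and on $G'$ in lockstep; and collect the relations produced until the set stabilizes. It then only remains to observe that this stable set is exactly the list of ten types.

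First I would discard the trivially empty strings. On both graphs $\sem p,\sem q\subseteq C\times(A\cup B)$ and $\sem{p^-},\sem{q^-}\subseteq(A\cup B)\times C$, so any string using a property name other than $p$ or $q$, and any string in which two consecutive atomic steps are both ``forward'' (in $\{p,q\}$) or both ``backward'' (in $\{p^-,q^-\}$), evaluates to $\emptyset$ on both $G$ and $G'$ (type 10). What survives is $\id$ (type 9) and the strictly alternating strings, which start either at $C$ or at $A\cup B$. The length-one relations $\sem p,\sem q,\sem{p^-},\sem{q^-}$ are types 1--4; by the graph definitions each is a union over $i$ of a singleton times a union of cyclic intervals (arcs) in $\mathbb{Z}/m$, and these four relations genuinely differ between $G$ and $G'$ --- the place where the $m/8$ misalignment of the $a$- and $b$-windows of $G$ makes itself felt.

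The combinatorial heart is two elementary facts about arcs in $\mathbb{Z}/m$, both using $8\mid m$ (hence $m\ge 8$): (a) two arcs whose lengths sum to more than $m$ must intersect; and (b) the union of $k$ consecutive translates of an arc of length $\ell$ is an arc of length $\ell+k-1$, so it is all of $\mathbb{Z}/m$ once $\ell+k-1\ge m$. Every $b$-window has length $5m/8$, and $2\cdot(5m/8)>m$, so fact (a) forces the two-step ``forward $\comp$ backward'' compositions ($\sem p\comp\sem{p^-}$ and its three siblings) to collapse to $C\times C$ (type 5); fact (b) is what drives the remaining longer compositions onto the coarse relations $(A\cup B)\times(A\cup B)$, $C\times(A\cup B)$ and $(A\cup B)\times C$ (types 6--8). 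A few further rounds of right-composition then confirm that the collection consisting of types 5--8 together with $\id$ and $\emptyset$ is closed under right-composition with $\sem p,\sem q,\sem{p^-},\sem{q^-}$, so the enumeration terminates; moreover on types 5--10 the two graphs agree, while the distinction between $G$ and $G'$ survives only in types 1--4.

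I expect the main obstacle to be the bookkeeping rather than any conceptual difficulty: there are many compositions to verify, each a small modular-arithmetic computation with arcs of lengths $m/2$, $5m/8$ and $m/4$, and one has to track carefully which composition yields which relation --- in particular at what length each composition first becomes fully coarse --- so that the search really does close on exactly ten relations, with facts (a), (b) and the hypothesis $8\mid m$ invoked at the right spots. As in the proof of Lemma~\ref{lem:eq-pe}, the cleanest presentation is a single exhaustive table of strings, boldfacing each newly produced relation and continuing until no new relation appears.
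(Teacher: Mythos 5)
Your overall strategy is exactly the paper's: view a string as a left-to-right composition of atoms, discard the trivially empty ones (wrong property names, two consecutive forward or two consecutive backward steps), and close the set $\{\id,\emptyset,\iexpr{p}{G},\iexpr{q}{G},\iexpr{p^-}{G},\iexpr{q^-}{G}\}$ under right-composition in lockstep on $G$ and $G'$ until it stabilizes, presenting the computation as a table with newly produced relations highlighted. That is literally what the paper does, and your arc facts (a) and (b) are the right tools for carrying out each table entry.

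The gap is that at the one delicate entry your own facts do not deliver the conclusion you assert. Take $s=p^-\comp p$ on $G$ and focus node $a_k$. Here $\iexpr{p^-}{G}(a_k)=c_{k-\frac{m}{2}+1\to k}$ is an arc of length $\frac{m}{2}$, and the $A$-part of each $\iexpr{p}{G}(c_i)$ is $a_{i\to i+\frac{m}{2}-1}$, also of length $\frac{m}{2}$. By your fact (b), the union of these $\frac{m}{2}$ consecutive translates is an arc of length $\frac{m}{2}+\frac{m}{2}-1=m-1$, so it is \emph{not} all of $A$: the node $a_{k+\frac{m}{2}}$ is missed (equivalently, via fact (a), the arcs $\iexpr{p^-}{G}(a_k)$ and $\iexpr{p^-}{G}(a_j)$ have lengths summing to exactly $m$, not more, and are disjoint precisely when $j=k+\frac{m}{2}$; a concrete check with $m=8$, $k=1$ gives $\iexpr{p^-\comp p}{G}(a_1)\cap A=A\setminus\{a_5\}$). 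The same happens for $p^-\comp q$, $q^-\comp p$ and $q^-\comp q$, each missing a perfect matching inside $A\times A$, whereas on $G'$ the relevant windows have length $\frac{5m}{8}$ and the unions do cover everything. So the closure does not in fact stabilize on the ten listed relations: the two-step backward-then-forward strings on $G$ are not of type 6, and they even differ from their $G'$ counterparts outside types 1--4. The paper's own table records $(A\cup B)\times(A\cup B)$ for $p^-\comp p$ on $G$ as well, so this is a point where the lemma itself needs to be reconciled (extra types, or wider $a$-windows in the definition of $G_\fulldisj$); in any case, before invoking fact (b) you must verify that $\ell+k-1\ge m$ actually holds, and here it fails by exactly one.
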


\begin{proof}
  The proof is performed as in the proof of Lemma \ref{lem:eq-pe}. We
  now have the following table:
  
  \begin{center}
  \begin{tabular}[b]{lcc}
    \toprule
    $s$ & $\iexpr{s}{G}$ & $\iexpr{s}{G'}$ \\
    \midrule
    $\mathbf{id}$ & $\id$ & $\id$ \\ 
    $\mathbf{p}$ & type 1 & type 1\\
    $\mathbf{q}$ & type 2 & type 2 \\
    $\mathbf{p^-}$ & type 3 & type 3 \\
    $\mathbf{q^-}$ & type 4 & type 4 \\
    $\mathbf{p\comp p^-}$ & $C\times C$ & $C\times C$ \\
    $p\comp q^-$ & $C\times C$ & $C\times C$ \\
    $q\comp p^-$ & $C\times C$ & $C\times C$ \\
    $q\comp q^-$ & $C\times C$ & $C\times C$ \\
    $\mathbf{p^-\comp p}$ & $(A\cup B)\times (A\cup B)$ & $(A\cup B)\times (A\cup B)$ \\
    $p^-\comp q$ & $(A\cup B)\times (A\cup B)$ & $(A\cup B)\times (A\cup B)$ \\
    $q^-\comp p$ & $(A\cup B)\times (A\cup B)$ & $(A\cup B)\times (A\cup B)$ \\
    $q^-\comp q$ & $(A\cup B)\times (A\cup B)$ & $(A\cup B)\times (A\cup B)$ \\
    $\mathbf{p\comp p^-\comp p}$ & $C\times (A\cup B)$ & $C\times (A\cup B)$ \\
    $p\comp p^-\comp q$ & $C\times (A\cup B)$ & $C\times (A\cup B)$ \\
    $\mathbf{p^-\comp p\comp p^-}$ & $(A\cup B)\times C$ & $(A\cup B)\times C$ \\
    $p^-\comp p\comp q^-$ & $(A\cup B)\times C$ & $(A\cup B)\times C$ \\
    $p\comp p^-\comp p\comp p^-$ & $C\times C$ & $C\times C$ \\
    $p\comp p^-\comp p\comp q^-$ & $C\times C$ & $C\times C$ \\
    $p^-\comp p\comp p^-\comp p$ & $(A\cup B)\times (A\cup B)$ & $(A\cup B)\times (A\cup B)$ \\
    $p^-\comp p\comp p^-\comp q$ & $(A\cup B)\times (A\cup B)$ & $(A\cup B)\times (A\cup B)$ \\
    \bottomrule
  \end{tabular}\\\qedhere
\end{center}
\end{proof}

\newcommand{\drawcircle}[1]{
  \draw (0,1) node[above] {#1};
  \begin{scope}[shift={(-0.25,0.97)}]
    \draw [gray,rotate=-73,-Stealth] (0,0) -- (0,0.1);
  \end{scope}
  \draw [gray,dashed] (0,1) arc [start angle=450, end angle=105, delta angle=15, radius=1cm ];}

\begin{figure}
  $\iexpr{p}{G}=\bigcup_{i\in\{1,\dots,m\}}\{c_i\}\times (a_{i\to
    i+\frac{m}{2}-1} \cup b_{i-\frac{m}{8}\to i+\frac{m}{2}-1})$
  
  \begin{minipage}[c]{0.3\linewidth}
    \begin{tikzpicture}
      \drawcircle{$c_i$}
      \draw[-Circle] (0,1.06);
    \end{tikzpicture}
  \end{minipage}
  \begin{minipage}[c]{0.3\linewidth}
    \begin{tikzpicture}
      \drawcircle{$a_i$}
      \begin{scope}[shift={(-0.07,0)}]
        \draw[{Circle-Circle[open]}] (0,1) arc [start angle=93.5, end angle=-93.5, radius=1cm ];
      \end{scope}
    \end{tikzpicture}
  \end{minipage}
  \begin{minipage}[c]{0.3\linewidth}
    \begin{tikzpicture}
      \drawcircle{$b_i$}
      \begin{scope}[rotate=48.5]
        \draw[{Circle-Circle[open]}] (0,1) arc [start angle=90, end angle=-142, radius=1cm ];
      \end{scope}
    \end{tikzpicture}
  \end{minipage}

  $\iexpr{q}{G}=\bigcup_{i\in\{1,\dots,m\}}\{c_i\}\times
  (a_{i-\frac{m}{2}\to i-1}\cup b_{i-\frac{m}{2}\to i+\frac{m}{8}-1})$
  
  \begin{minipage}[c]{0.3\linewidth}
    \begin{tikzpicture}
      \drawcircle{$c_i$}
      \draw[-Circle] (0,1.06);
    \end{tikzpicture}
  \end{minipage}
  \begin{minipage}[c]{0.3\linewidth}
    \begin{tikzpicture}
      \drawcircle{$a_i$}
      \begin{scope}[shift={(0.07,0)}]
        \draw[{Circle[open]}-Circle] (0,1) arc [start angle=86.5, end angle=273.5, radius=1cm ];
      \end{scope}
    \end{tikzpicture}
  \end{minipage}
  \begin{minipage}[c]{0.3\linewidth}
    \begin{tikzpicture}
      \drawcircle{$b_i$}
      \begin{scope}[rotate=183.5]
        \draw[{Circle-Circle[open]}] (0,1) arc [start angle=90, end angle=-142, radius=1cm ];
      \end{scope}
    \end{tikzpicture}
  \end{minipage}

  $\iexpr{p}{G'}=\bigcup_{i\in\{1,\dots,m\}}\{c_i\}\times
  (a_{i-\frac{m}{8}\to i+\frac{m}{2}-1} \cup b_{i-\frac{m}{8}\to
    i+\frac{m}{2}-1})$

  \begin{minipage}[c]{0.3\linewidth}
    \begin{tikzpicture}
      \drawcircle{$c_i$}
      \draw[-Circle] (0,1.06);
    \end{tikzpicture}
  \end{minipage}
  \begin{minipage}[c]{0.3\linewidth}
    \begin{tikzpicture}
      \drawcircle{$a_i$}
      \begin{scope}[rotate=48.5]
        \draw[{Circle-Circle[open]}] (0,1) arc [start angle=90, end angle=-142, radius=1cm ];
      \end{scope}
    \end{tikzpicture}
  \end{minipage}
  \begin{minipage}[c]{0.3\linewidth}
    \begin{tikzpicture}
      \drawcircle{$b_i$}
      \begin{scope}[rotate=48.5]
        \draw[{Circle-Circle[open]}] (0,1) arc [start angle=90, end angle=-142, radius=1cm ];
      \end{scope}
    \end{tikzpicture}
  \end{minipage}
  
  $\iexpr{q}{G'}=\bigcup_{i\in\{1,\dots,m\}}\{c_i\}\times
  (a_{i-\frac{m}{2}\to i+\frac{m}{8}-1} \cup b_{i-\frac{m}{2}\to
    i+\frac{m}{8}-1})$

  \begin{minipage}[c]{0.3\linewidth}
    \begin{tikzpicture}
      \drawcircle{$c_i$}
      \draw[-Circle] (0,1.06);
    \end{tikzpicture}
  \end{minipage}
  \begin{minipage}[c]{0.3\linewidth}
    \begin{tikzpicture}
      \drawcircle{$a_i$}
      \begin{scope}[rotate=183.5]
        \draw[{Circle-Circle[open]}] (0,1) arc [start angle=90, end angle=-142, radius=1cm ];
      \end{scope}
    \end{tikzpicture}
  \end{minipage}
  \begin{minipage}[c]{0.3\linewidth}
    \begin{tikzpicture}
      \drawcircle{$b_i$}
      \begin{scope}[rotate=183.5]
        \draw[{Circle-Circle[open]}] (0,1) arc [start angle=90, end angle=-142, radius=1cm ];
      \end{scope}
    \end{tikzpicture}
  \end{minipage}
  \caption{Illustration of the $p$ and $q$ relations in graphs $G=G_\fulldisj(\voc,m)$ and $G'=G'_\fulldisj(\voc,m)$}
  \label{fig:disjgraph1}
\end{figure}

\begin{figure}
  $\iexpr{p^-}{G} = \bigcup_{i\in\{1,\dots,m\}} (\{a_i\}\times
  c_{i-\frac{m}{2}+1\to i})\cup (\{b_i\}\times c_{i-\frac{m}{2}+1\to
    i+\frac{m}{8}})$
  
  \begin{minipage}[c]{0.24\linewidth}
    \begin{tikzpicture}
      \drawcircle{$a_i$}
      \draw[-Circle] (0,1.06);
    \end{tikzpicture}
  \end{minipage}
  \begin{minipage}[c]{0.24\linewidth}
    \begin{tikzpicture}
      \drawcircle{$c_i$}
      \begin{scope}[shift={(0.07,0)}]
        \draw[Circle-{Circle[open]}] (0,1) arc [start angle=86.5, end angle=273.5, radius=1cm ];
      \end{scope}
    \end{tikzpicture}
  \end{minipage}
  \begin{minipage}[c]{0.24\linewidth}
    \begin{tikzpicture}
      \drawcircle{$b_i$}
      \draw[-Circle] (0,1.06);
    \end{tikzpicture}
  \end{minipage}
  \begin{minipage}[c]{0.24\linewidth}
    \begin{tikzpicture}
      \drawcircle{$c_i$}
      \begin{scope}[rotate=183.5]
        \draw[{Circle[open]}-Circle] (0,1) arc [start angle=90, end angle=-142, radius=1cm ];
      \end{scope}
    \end{tikzpicture}
  \end{minipage}

  $\iexpr{q^-}{G} = \bigcup_{i\in\{1,\dots,m\}} (\{a_i\}\times
  c_{i+1\to i+\frac{m}{2}})\cup (\{b_i\}\times c_{i-\frac{m}{8}+1\to
    i+\frac{m}{2}})$
  
  \begin{minipage}[c]{0.24\linewidth}
    \begin{tikzpicture}
      \drawcircle{$a_i$}
      \draw[-Circle] (0,1.06);
    \end{tikzpicture}
  \end{minipage}
  \begin{minipage}[c]{0.24\linewidth}
    \begin{tikzpicture}
      \drawcircle{$c_i$}
      \begin{scope}[shift={(-0.07,0)}]
        \draw[{Circle[open]}-Circle] (0,1) arc [start angle=93.5, end angle=-93.5, radius=1cm ];
      \end{scope}
    \end{tikzpicture}
  \end{minipage}
  \begin{minipage}[c]{0.24\linewidth}
    \begin{tikzpicture}
      \drawcircle{$b_i$}
      \draw[-Circle] (0,1.06);
    \end{tikzpicture}
  \end{minipage}
  \begin{minipage}[c]{0.24\linewidth}
    \begin{tikzpicture}
      \drawcircle{$c_i$}
      \begin{scope}[rotate=48.5]
        \draw[{Circle[open]}-Circle] (0,1) arc [start angle=90, end angle=-142, radius=1cm ];
      \end{scope}
    \end{tikzpicture}
  \end{minipage}

  $\iexpr{p^-}{G'} = \bigcup_{i\in\{1,\dots,m\}}
  (\{a_i\}\times c_{i-\frac{m}{2}+1\to i+\frac{m}{8}})\cup
  (\{b_i\}\times c_{i-\frac{m}{2}+1\to i+\frac{m}{8}})$
  
  \begin{minipage}[c]{0.24\linewidth}
    \begin{tikzpicture}
      \drawcircle{$a_i$}
      \draw[-Circle] (0,1.06);
    \end{tikzpicture}
  \end{minipage}
  \begin{minipage}[c]{0.24\linewidth}
    \begin{tikzpicture}
      \drawcircle{$c_i$}
      \begin{scope}[rotate=183.5]
        \draw[{Circle[open]}-Circle] (0,1) arc [start angle=90, end angle=-142, radius=1cm ];
      \end{scope}
    \end{tikzpicture}
  \end{minipage}
  \begin{minipage}[c]{0.24\linewidth}
    \begin{tikzpicture}
      \drawcircle{$b_i$}
      \draw[-Circle] (0,1.06);
    \end{tikzpicture}
  \end{minipage}
  \begin{minipage}[c]{0.24\linewidth}
    \begin{tikzpicture}
      \drawcircle{$c_i$}
      \begin{scope}[rotate=183.5]
        \draw[{Circle[open]}-Circle] (0,1) arc [start angle=90, end angle=-142, radius=1cm ];
      \end{scope}
    \end{tikzpicture}
  \end{minipage}

  $\iexpr{q^-}{G'} = \bigcup_{i\in\{1,\dots,m\}}
  (\{a_i\}\times c_{i-\frac{m}{8}+1\to i+\frac{m}{2}})\cup
  (\{b_i\}\times c_{i-\frac{m}{8}+1\to i+\frac{m}{2}})$
  
  \begin{minipage}[c]{0.24\linewidth}
    \begin{tikzpicture}
      \drawcircle{$a_i$}
      \draw[-Circle] (0,1.06);
    \end{tikzpicture}
  \end{minipage}
  \begin{minipage}[c]{0.24\linewidth}
    \begin{tikzpicture}
      \drawcircle{$c_i$}
      \begin{scope}[rotate=48.5]
        \draw[{Circle[open]}-Circle] (0,1) arc [start angle=90, end angle=-142, radius=1cm ];
      \end{scope}
    \end{tikzpicture}
  \end{minipage}
  \begin{minipage}[c]{0.24\linewidth}
    \begin{tikzpicture}
      \drawcircle{$b_i$}
      \draw[-Circle] (0,1.06);
    \end{tikzpicture}
  \end{minipage}
  \begin{minipage}[c]{0.24\linewidth}
    \begin{tikzpicture}
      \drawcircle{$c_i$}
      \begin{scope}[rotate=48.5]
        \draw[{Circle[open]}-Circle] (0,1) arc [start angle=90, end angle=-142, radius=1cm ];
      \end{scope}
    \end{tikzpicture}
  \end{minipage}
  \caption{Illustration of the $p^-$ and $q^-$ relations in graphs
    $G=G_\fulldisj(\voc,m)$ and $G'=G'_\fulldisj(\voc,m)$}
  \label{fig:disjgraph2}
\end{figure}

We are ready to present our key Proposition.
\begin{prop} \label{prop:fulldisj}
  Let $\voc$ be a vocabulary. Let $m$ be a natural number and a
  multiple of 8. Let $V=A\cup B\cup C$ be the common set of nodes of
  the graphs $G = G_\fulldisj(\voc, m)$ and
  $G' = G'_\fulldisj(\voc, m)$. For all shapes $\phi$ over $\voc$
  counting to at most $\frac{m}{2}$, we have
  $\iexpr{\phi}{G} = \iexpr{\phi}{G'}$. Moreover,
  \begin{itemize}
  \item $\iexpr{\phi}{G} \cap V = A\cup B$, or
  \item $\iexpr{\phi}{G} \cap V = C$, or
  \item $\iexpr{\phi}{G} \cap V = V$, or
  \item $\iexpr{\phi}{G} \cap V = \emptyset$.
  \end{itemize}
\end{prop}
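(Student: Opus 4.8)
The plan is to argue by induction on the structure of $\phi$, following closely the proof of the corresponding proposition for full equality (Section~\ref{sec:fulleq}), but with Lemma~\ref{lem:disj-pe} playing the role of Lemma~\ref{lem:eq-pe}. Here $H$ denotes $G$ or $G'$, and $\phi$ is understood not to use $\fulldisj$, so its equality and disjointness subshapes are of the forms $\eq(E_1,E_2)$ (full equality) and $\disj(E,p')$ with $p'$ a property name. First I would record the automorphisms: the rotation $a_i\mapsto a_{i+1}$, $b_i\mapsto b_{i+1}$, $c_i\mapsto c_{i+1}$ (indices mod $m$) is an automorphism of both $G$ and $G'$, while the swap $a_i\leftrightarrow b_i$ is an automorphism of $G'$ only. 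Hence all of $A$ is a single orbit in $H$, as is all of $B$ and all of $C$, and $A\cup B$ is one orbit in $G'$; this reduces both assertions of the proposition to the four statements (i)~$G,a\models\phi\iff G,b\models\phi$ for $a\in A$, $b\in B$; (ii)~$G,a\models\phi\iff G',a\models\phi$ for $a\in A$; (iii)~$G,c\models\phi\iff G',c\models\phi$ for $c\in C$; (iv)~$G,x\models\phi\iff G',x\models\phi$ for $x\notin V$. The base cases $\top$ and $\const c$ are immediate ($\iexpr{\top}{H}\cap V=V$ and $\iexpr{\const c}{H}\cap V=\emptyset$, since $V$ is chosen outside $\voc$), and $\closed(Q)$ is handled exactly as in the proof for $X=\disj$: outside $V$ no node has outgoing edges, inside $V$ only the $C$-nodes do, and they use only the labels $p$ and $q$, so the answer depends only on whether $\{p,q\}\subseteq Q$.

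Before the inductive step I would isolate the combinatorial facts driving everything. Writing $P_H(v)=\iexpr{p^-}{H}(v)$, $Q_H(v)=\iexpr{q^-}{H}(v)$: for every $v\in A\cup B$ and $H\in\{G,G'\}$ the sets $P_H(v),Q_H(v)$ are nonempty, mutually incomparable, proper subsets of $C$ with $P_H(v)\cup Q_H(v)=C$ and each of size at least $m/2$; dually, for every $v\in C$ the sets $\iexpr{p}{H}(v),\iexpr{q}{H}(v)$ are nonempty, incomparable, proper subsets of $A\cup B$ with union $A\cup B$, each of size at least $m/2$, and with nonempty intersection. These are read off directly from the definitions of $G,G'$ (a computation with the cyclic notation $x_{i\to j}$) or from Lemma~\ref{lem:disj-pe}. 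In addition, inspecting the ten string types of Lemma~\ref{lem:disj-pe} shows that in each of types~1--4 the value in $G'$ contains the value in $G$ while types~5--10 coincide, so $\iexpr{s}{G}\subseteq\iexpr{s}{G'}$ for every string $s$; with Lemma~\ref{lem:strings} applied at the common node count $3m$, this gives $\iexpr{E}{G}\subseteq\iexpr{E}{G'}$ for every path expression $E$. Finally, by Lemma~\ref{lem:safety}, for $v\in V$ and any $E$ we have $\iexpr{E}{H}(v)\subseteq V$, so only $\iexpr{\phi_1}{H}\cap V$ will matter when a counting quantifier is applied to $\phi_1$ at a node of $V$.

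For the inductive cases, the boolean connectives are routine. For $\phi=\eq(E_1,E_2)$ I would replace $E_1,E_2$ by finite string-sets $U_1,U_2$ valid on both graphs (Lemma~\ref{lem:strings}) and then note, using Lemma~\ref{lem:disj-pe} and the facts above, that at each node $v\in\{a_i,b_j,c_i\}$ the set $\iexpr{E_k}{H}(v)$ decomposes into a $C$-part and an $(A\cup B)$-part, and which of the finitely many possible values each part takes depends only on which of the ten string types occur in $U_k$, by a rule that is identical in $G$ and in $G'$ and identical at an $A$-node and at a $B$-node (here $P_H(v)\cup Q_H(v)=C$ and the incomparability facts are used). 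Hence $\iexpr{E_1}{H}(v)=\iexpr{E_2}{H}(v)$ holds in $G$ iff in $G'$, and at an $A$-node iff at a $B$-node; the case $x\notin V$ is settled by Lemma~\ref{lem:safety} (the condition reduces to ``$E_1,E_2$ have the same safety status''). For $\phi=\disj(E,p')$: if $v\notin V$, or $v\in A\cup B$, or $p'\notin\{p,q\}$, then $\iexpr{p'}{H}(v)=\emptyset$ and the shape holds; for $v=c_i$ with $p'\in\{p,q\}$ one checks (using in particular $\iexpr{p}{H}(c_i)\cap\iexpr{q}{H}(c_i)\neq\emptyset$) that the shape holds at $c_i$ iff the string-set of $E$ contains no string of type~1,~2, or~7, again a $G/G'$-independent condition. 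For $\phi=\geqn{n}{E}{\phi_1}$: by induction $\iexpr{\phi_1}{G}=\iexpr{\phi_1}{G'}$ and $\iexpr{\phi_1}{G}\cap V\in\{A\cup B,\;C,\;V,\;\emptyset\}$; if it is $\emptyset$ then $\iexpr{\phi}{H}\cap V=\emptyset$, and otherwise I split on $E$ via Lemma~\ref{lem:id} and count through the string-set of the $\id$-free part. The decisive observation is that at any $v\in V$, if $\iexpr{E}{H}(v)$ meets the relevant block at all, it meets it in at least $m/2\ge n$ elements --- because any contributing string of type~1--4,~6 or~8 already supplies that many, in both graphs --- the sole exception being that only $\id$ contributes, forcing $n=1$; combined with $\iexpr{E}{G}(v)\subseteq\iexpr{E}{G'}(v)$ and the fact that $G$ and $G'$ admit the same contributing string types at $v$, this yields (i),~(ii),~(iii), while (iv) is handled as for the other constructs via Lemma~\ref{lem:safety}, reducing to $n=1$, $(x,x)\in\iexpr{E}{H}$, and $\phi_1$ at $x$. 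Assembling (i)--(iv) with the symmetries gives the proposition.

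I expect the main obstacle to be the bookkeeping in the cases $\eq(E_1,E_2)$ and $\geqn{n}{E}{\phi_1}$: one must turn the slogan ``the outcome depends only on which string types appear'' into an explicit enumeration, for each node location $a$, $b$, $c$, of the possible values of the $C$-part and the $(A\cup B)$-part of $\iexpr{E}{H}(v)$, and a check that the induced map from type-profiles to values agrees across $G$ and $G'$ and across $A$ and $B$. The conceptual content is slight and resides entirely in the arithmetic identities about the cyclic blobs stated above --- chiefly that the inverse-$p$ and inverse-$q$ reach sets of any node of $A\cup B$ always cover $C$ and are always incomparable, in \emph{both} graphs, which is precisely why neither full equality, nor ordinary disjointness, nor closure, nor counting up to $m/2$ can detect the asymmetry between $A$- and $B$-nodes that witnesses $G\notin Q_\fulldisj$ while $G'\in Q_\fulldisj$.
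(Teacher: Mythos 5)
Your plan is correct and follows essentially the same route as the paper's proof: induction on $\phi$, the rotation and swap automorphisms reducing both claims to the four statements (i)--(iv), the string types of Lemma~\ref{lem:disj-pe} combined with Lemma~\ref{lem:strings}, the counting bound $m/2$ with $\id$ (type~9) as the sole single-element exception, and Lemma~\ref{lem:safety} for nodes outside $V$. The ``explicit enumeration'' you defer in the $\eq(E_1,E_2)$ case is exactly what the paper carries out via its canonical-labelling argument and Tables~\ref{table:canonical-ab} and~\ref{table:canonical-c}, so nothing essential is missing from your outline.
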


\begin{proof}
  By induction on the structure of $\phi$. For the base cases, if
  $\phi$ is $\top$ then $\iexpr{\top}{G}=\iexpr{\top}{G'}=N$ and
  $N\cap V = V$. If $\phi$ is $\{c\}$, then
  $\iexpr{\{c\}}{G}=\iexpr{\{c\}}{G'} = \{c\}$ and
  $\{c\}\cap V=\emptyset$ since $c\in\voc$ and $V\cap\voc=\emptyset$.

  If $\phi$ is $\closed(Q)$, we consider the possibilities for $Q$. If
  $Q$ does not contain both $p$ and $q$, then clearly
  $\iexpr{\phi}{G}\cap V=\iexpr{\phi}{G'}\cap V=A\cup B$. Otherwise,
  $\iexpr{\phi}{G}=\iexpr{\phi}{G'}=N$.

  Before considering the remaining cases, we observe the following
  symmetries:
  \begin{itemize}
  \item In both $G$ and $G'$, all elements of $A$ are symmetrical, as
    are all elements of $B$, and all elements of $C$. Indeed, for any
    $i\in\{1,\dots,m\}$, the function that maps $x_i$ to
    $x_{1+i\mod m}$ where $x_i$ is $a_i$, $b_i$ or $c_i$, is clearly an
    automorphism of $G$ and also of $G'$.
  \item Furthermore, in $G'$, any $a_i$ and $b_j$ are
    symmetrical. Indeed, the function that swaps every $a_i$ with
    $b_i$ is an automorphism of $G'$. (We already know that $b_i$ and
    $b_j$ are symmetrical by the above.)
  \end{itemize}
  Therefore, we are only left to show:
  \begin{enumerate}[(i)]
  \item \label{en:p21} For any $a\in A$ and $b\in B$, we have
    $G,a\models\phi\iff G,b\models \phi$,
  \item \label{en:p22} For any $a\in A$, we have
    $G,a\models\phi\iff G',a\models\phi$, and
  \item \label{en:p23} For any $c\in C$, we have
    $G,c\models\phi\iff G',c\models\phi$.
  \item \label{en:p24} For any $x\not\in V$, we have
    $G,x\models\phi\iff G',x\models\phi$.
  \end{enumerate}
  Note that then also for any $b\in B$, we have
  $G,b\models\phi\iff G',b\models\phi$ because for any $a\in A$ and
  $b\in B$, we have
  $G,b\models\phi\stackrel{\rm\ref{en:p11}}{\iff} G, a\models \phi
  \stackrel{\rm\ref{en:p12}}{\iff} G',a\models\phi
  \stackrel{\mathit{symmetry}}{\iff} G',b\models\phi$.
  
  Consider the case where $\phi$ is $\disj(E,r)$. We verify
  \ref{en:p21}, \ref{en:p22}, \ref{en:p23}, and \ref{en:p24}.  First,
  to see that \ref{en:p21} and \ref{en:p22} hold, we observe that
  $\iexpr{r}{G}(a)=\iexpr{r}{G}(b)= \iexpr{r}{G'}(a)=
  \iexpr{r}{G'}(b)= \emptyset$. Therefore, $G,a\models\phi$,
  $G,b\models\phi$, $G',a\models\phi$, and $G',b\models\phi$ always
  hold, showing \ref{en:p21} and \ref{en:p22}.

  Next, to show \ref{en:p23} where $r=p$, assume
  $G,c\models\disj(E,p)$.  By Lemma \ref{lem:strings} we know there
  is a set of strings $U$ equivalent to $E$ in both $G$ and $G'$. By
  Lemma \ref{lem:disj-pe} there are only 10 types of strings. We
  observe from Lemma \ref{lem:disj-pe} that for every $U$,
  $\strset{G}(c)$ is disjoint from
  $\iexpr{p}{G}(c)$ whenever $U$ does not contain strings of type 1,
  2, or 7. These are also exactly the $U$ s.t.\
  $\strset{G'}(c)$ is disjoint from
  $\iexpr{p}{G'}(c)$.

  Next, to show \ref{en:p23} where $r=q$, assume
  $G,c\models\disj(E,q)$. We observe from Lemma \ref{lem:disj-pe}
  that for every $U$, $\strset{G}(c)$ is disjoint
  from $\iexpr{q}{G}(c)$ whenever $U$ does not contain strings of
  type 1, 2, or 7. These are also exactly the $U$ s.t.\
  $\strset{G'}(c)$ is disjoint from
  $\iexpr{q}{G'}(c)$.

  For every other property name $r$,
  $\iexpr{r}{G}(a)=\iexpr{r}{G}(b)=\emptyset$. Therefore,
  $G,c\models\phi$ and $G',c\models\phi$ always hold.

  Finally, we show \ref{en:p24} by observing that for any
  $x\in N\setminus V$,
  $\iexpr{r}{G}(x)=\iexpr{r}{G'}(x)=\emptyset$. Therefore,
  $G,x\models\phi$ and $G',x\models\phi$ always hold.
  
  Next, consider the case where $\phi$ is $eq(E_1,E_2)$. We again
  verify \ref{en:p21}, \ref{en:p22}, \ref{en:p23}, and
  \ref{en:p24}.

  We show \ref{en:p21} by using a canonical labeling argument. For any
  two sets $U_1$ and $U_2$ of types, we call $U_1$ and $U_2$
  equivalent in $a\in A$ if
  $\bigcup_{s\in U_1}\iexpr{s}{G}(a) = \bigcup_{s\in
    U_2}\iexpr{s}{G}(a)$. Similarly, we define when $U_1$ and $U_2$
  are equivalent in $b$ or $c$.

  We can canonically label the equivalence
  classes in $a$ as follows. Let $U=\{u_1,\dots,u_l\}$ and
  $u_1 < \dots < u_l$ with each $u_i\in \{1,\dots, 10\}$ a type.

  There are only six unique singleton sets namely $\{1\}$, $\{3\}$,
  $\{4\}$, $\{6\}$, $\{8\}$, and $\{9\}$. Replace each $u_i$ by their
  singleton representative $\overline{u_i}$. In
  $\{\overline{u_1}, \dots, \overline{u_l}\}$, reorder and remove
  duplicates to obtain an equivalent set $\{u'_1,\dots,u'_{l'}\}$.

  If $l'=1$, we are done. Otherwise, we enumerate all nonequivalent
  2-element sets that are not equivalent to a singleton: there are
  again six of those, namely $\{3,6\}$, $\{3,9\}$, $\{4,6\}$,
  $\{4,9\}$, $\{6,8\}$, and $\{8,9\}$.

  Replace $u'_1$ and $u'_2$ by either $\{u''\}$ in case
  $\{u'_1, u'_2\}$ is equivalent to a singleton; otherwise replace
  $u'_1$ and $u'_2$ by their equivalent 2-element set
  $\{u''_1, u''_2\}$. If $l'=2$, we are again done.

  We can repeat this process. However, it turns out that there are no
  3-element sets that are not equivalent to a singleton or a 2-element
  set. Hence, there are only 12 representatives. The enumeration
  process is shown in Table \ref{table:canonical-ab}, giving the
  representative for equivalence in $a$ as well as for equivalence in
  $b$. Crucially, in filling the table, we observe every set $U$ has
  the same representative for equivalence in $a$ as for equivalence in
  $b$.

  Next, \ref{en:p22} is shown in an analogous manner, where the
  enumeration process is again shown in Table
  \ref{table:canonical-ab}.

  Next, \ref{en:p23} is again shown with an analogous manner, where
  the enumeration process is shown in Table
  \ref{table:canonical-c}.

  To show \ref{en:p24}, assume $x\not\in V$. If both $E_1$ and $E_2$
  are safe, then by Lemma \ref{lem:safety}
  $\iexpr{E_1}{G}(x)=\iexpr{E_2}{G'}(x)=\emptyset$. Thus,
  $G,x\models\phi$ and $G',x\models\phi$. If both $E_1$ and $E_2$ are
  unsafe, then by Lemma \ref{lem:safety}
  $\iexpr{E_1}{G}(x)=\iexpr{E_2}{G'}(x)=\{x\}$. Thus, $G,x\models\phi$
  and $G',x\models\phi$. However, whenever only one of $E_1$ and $E_2$
  is safe, clearly $G,x\not\models\phi$ and $G',x\not\models\phi$.

  The cases where $\phi$ is $\phi_1\land\phi_2$, $\phi_1\lor\phi_2$ or
  $\neg\phi_1$ are handled by induction in a straightforward manner.

  Lastly, we consider the case where $\phi$ is $\geqn{n}{E}{\psi}$.
  \begin{enumerate}
  \item[\ref{en:p21}] Assume $G, a\models\phi$. Then, there exist distinct
    $x_1,\dots,x_n$ s.t.\ $(a,x_i)\in\iexpr{E}{G}$ and
    $G,x_i\models\psi$ for $1\leq i\leq n$. Again, by Lemma
    \ref{lem:strings} we know there is a set of strings $U$
    equivalent to $E$ in both $G$ and $G'$. By Lemma
    \ref{lem:disj-pe} there are only 10 types of strings.  By
    induction, we consider three cases.

    First, if $\iexpr{\psi}{G}\cap V=A\cup B$, then
    $x_i,\dots,x_n\in A\cup B$. Therefore, we know $U$ must at least
    contain strings of type 6 or 9. Suppose $U$ contains strings of
    type 6. Then, we verify that
    $\sharp (\strset{G}(b)\cap (A\cup B))\geq m \geq
    n$. Otherwise, whenever $U$ contains strings of type 9, and not of
    type 6, we know $n=1$ and clearly
    $\sharp (\strset{G}(b)\cap (A\cup B))=1$.

    Next, if $\iexpr{\psi}{G}\cap V=C$, then $x_i,\dots,x_n\in
    C$. Therefore, we know $U$ must at least contain strings of type
    3, 4 or 8. Whenever $U$ contains 3, 4 or 8, we verify that
    $\sharp (\strset{G}(b)\cap C)\geq \frac{m}{2} \geq n$.

    Next, if $\iexpr{\psi}{G}\cap V=V$, then $x_i,\dots,x_n\in
    V$. Therefore, we know $U$ must at least contain strings of type
    3, 4, 6, 8 or 9. All these types have already been handled in the
    previous two cases.

    Finally, the case where $\iexpr{\psi}{G}\cap V=\emptyset$ cannot
    occur as $n>0$.

    The sets of strings $U$ above are also exactly the sets used to
    argue the implication from right to left.
  \item[\ref{en:p22}] For every case of $U$ above, for every inductive case of
    $\psi$, we can also verify that
    $\sharp (\strset{G'}(a)\cap (A\cup B))\geq m \geq n$,
    $\sharp (\strset{G'}(a)\cap C)\geq \frac{m}{2} \geq n$, and
    $\sharp (\strset{G'}(a)\cap V)\geq \frac{m}{2} \geq n$.
  \item[\ref{en:p23}] Assume $G, c\models\phi$. Then, there exist distinct
    $x_1,\dots,x_n$ s.t.\ $(c,x_i)\in\iexpr{E}{G}$ and
    $G,x_i\models\psi$ for $1\leq i\leq n$. By induction, we consider
    three cases.

    First, if $\iexpr{\psi}{G}\cap V=A\cup B$, then
    $x_i,\dots,x_n\in A\cup B$. Therefore, we know $U$ must at least
    contain strings of type 1, 2 or 7. Whenever $U$ contains 1, 2 or
    7, we verify that
    $\sharp (\strset{G'}(c)\cap (A\cup B))\geq \frac{m}{2} \geq n$.

    Next, if $\iexpr{\psi}{G}\cap V=C$, then $x_i,\dots,x_n\in
    C$. Therefore, we know $U$ must at least contain strings of type 5
    or 9. Whenever $U$ contains 5, we verify that
    $\sharp (\strset{G'}(c)\cap C)\geq m \geq
    n$. Otherwise, whenever $U$ contains strings of type 9, and not of
    type 5, we know $n=1$ and clearly
    $\sharp (\strset{G'}(c)\cap C)=1$.

    Next, if $\iexpr{\psi}{G}\cap V=V$, then $x_i,\dots,x_n\in
    V$. Therefore, we know $U$ must at least contain strings of type
    1, 2, 5, 7 or 9. All these types have already been handled in the
    previous two cases.

    Finally, the case where $\iexpr{\psi}{G}\cap V=\emptyset$ cannot
    occur as $n>0$.

    The sets of strings $U$ above are also exactly the sets used to
    argue the implication from right to left.
  \item[\ref{en:p24}] For the direction from left to right, take
    $x\in \iexpr{\phi}{G}\setminus V$. Since $G,x\models\phi$, there
    exists $y_1,\dots,y_n$ s.t.\ $(x,y_i)\in\iexpr{E}{G}$ and
    $G,y_i\models\psi$ for $i=1,\dots,n$. However, since $x\not\in V$,
    by Lemma \ref{lem:safety}, all $y_i$ must equal $x$. Hence, $n=1$
    and $(x,x)\in\iexpr{E}{G}$ and $G,x\models\psi$. Then again, by
    the same Lemma, $(x,x)\in\iexpr{E}{G'}$, since $G$ and $G'$ have
    the same set of nodes $V$/ Moreover, by induction,
    $G',x\models\psi$. We conclude $G',x\models\phi$ as desired. The
    direction from right to left is argued symmetrically. \qedhere
  \end{enumerate}
\end{proof}

\begin{table}
  \caption{Sets of types starting from $a_i$, $b_i$ in $G$ and $a_i$ in $G'$.}
  \label{table:canonical-ab}
  \begin{tabular}{lccc}
    \toprule
    $U$ & $\iexpr{E}{G}(a_i)$ & $\iexpr{E}{G}(b_i)$ & $\iexpr{E}{G'}(a_i)$ \\
    \midrule
  $\mathbf{\{1\}}$ & $\emptyset$ & $\emptyset$ & $\emptyset$ \\
  $\{2\}$ & $\emptyset$ & $\emptyset$ & $\emptyset$ \\
  $\mathbf{\{3\}}$ & $c_{i-\frac{k}{2}+1\to i}$ & $c_{i-\frac{k}{2}+1\to i+\frac{k}{8}}$ & $c_{i-\frac{k}{2}+1\to i+\frac{k}{8}}$ \\
  $\mathbf{\{4\}}$ & $c_{i+1\to i+\frac{k}{2}}$ & $c_{i-\frac{k}{8}+1\to i+\frac{k}{2}}$ & $c_{i-\frac{k}{8}+1\to i+\frac{k}{2}}$ \\
  $\{5\}$ & $\emptyset$ & $\emptyset$ & $\emptyset$ \\
  $\mathbf{\{6\}}$ & $A\cup B$ & $A\cup B$ & $A\cup B$ \\
  $\{7\}$ & $\emptyset$ & $\emptyset$ & $\emptyset$ \\
  $\mathbf{\{8\}}$ & $C$ & $C$ & $C$ \\
  $\mathbf{\{9\}}$ & $\{a_i\}$ & $\{b_i\}$ & $\{a_i\}$ \\
  $\{10\}$ & $\emptyset$ & $\emptyset$ & $\emptyset$ \\
  $\{3,4\}$ & $C$ & $C$ & $C$ \\
  $\mathbf{\{3,6\}}$ & $A\cup B\cup c_{i-\frac{k}{2}+1\to i}$ & $A\cup B\cup c_{i-\frac{k}{2}+1\to i+\frac{k}{8}}$ & $A\cup B\cup c_{i-\frac{k}{2}+1\to i+\frac{k}{8}}$ \\
  $\{3,8\}$ & $C$ & $C$ & \\
  $\mathbf{\{3,9\}}$ & $c_{i+1\to i+\frac{k}{2}} \cup \{a_i\}$ & $c_{i-\frac{k}{8}+1\to i+\frac{k}{2}} \cup \{b_i\}$ & $c_{i-\frac{k}{8}+1\to i+\frac{k}{2}} \cup \{a_i\}$\\
  $\mathbf{\{4,6\}}$ & $A\cup B\cup c_{i+1\to i+\frac{k}{2}}$ & $A\cup B\cup c_{i-\frac{k}{8}+1\to i+\frac{k}{2}}$ & $A\cup B\cup c_{i-\frac{k}{8}+1\to i+\frac{k}{2}}$ \\
  $\{4,8\}$ & $C$ & $C$ & \\
  $\mathbf{\{4,9\}}$ & $c_{i+1\to i+\frac{k}{2}}\cup \{a_i\}$ & $c_{i-\frac{k}{8}+1\to i+\frac{k}{2}}\cup \{b_i\}$ & $c_{i-\frac{k}{8}+1\to i+\frac{k}{2}}\cup \{a_i\}$ \\
  $\mathbf{\{6,8\}}$ & $V$ & $V$ & $V$ \\
  $\{6,9\}$ & $A\cup B$ & $A\cup B$ & $A\cup B$\\
  $\mathbf{\{8,9\}}$ & $C\cup \{a_i\}$ & $C\cup \{b_i\}$ & $C\cup \{a_i\}$ \\
  $\{3,6,4\}$ & $V$ & $V$ & $V$ \\
  $\{3,6,8\}$ & $V$ & $V$ & $V$ \\
  $\{3,6,9\}$ & $A\cup B\cup c_{i-\frac{k}{2}+1\to i}$ & $A\cup B\cup c_{i-\frac{k}{2}+1\to i+\frac{k}{8}}$ & $A\cup B\cup c_{i-\frac{k}{2}+1\to i+\frac{k}{8}}$ \\
  $\{3,9,4\}$ & $C\cup \{a_i\}$ & $C\cup \{b_i\}$ & $C\cup \{a_i\}$ \\
  $\{4,9,6\}$ & $A\cup B\cup c_{i+1\to i+\frac{k}{2}}$ & $A\cup B\cup c_{i-\frac{k}{8}+1\to i+\frac{k}{2}}$ & $A\cup B\cup c_{i-\frac{k}{8}+1\to i+\frac{k}{2}}$ \\
  $\{4,9,8\}$ & $C$ & $C$ & $C$ \\
  $\{8,9,3\}$ & $C\cup \{a_i\}$ & $C\cup \{b_i\}$ & $C\cup \{a_i\}$ \\
    $\{4,6,3\}$ & $V$ & $V$ & $V$ \\
    \bottomrule
  \end{tabular}
\end{table}

\begin{table}
  \caption{Sets of types starting from $c_i$ in $G$ and in $G'$.}
  \label{table:canonical-c}
  \begin{tabular}{lcc}
    \toprule
    $U$ & $\iexpr{E}{G}(c_i)$ & $\iexpr{E}{G'}(c_i)$ \\
    \midrule
    $\mathbf{\{1\}}$ &
  $a_{i\to i+\frac{m}{2}-1}\cup b_{i-\frac{m}{8}\to i+\frac{m}{2}-1}$
  &
  $a_{i-\frac{m}{8}\to i+\frac{m}{2}-1}\cup b_{i-\frac{m}{8}\to
    i+\frac{m}{2}-1}$ \\
  $\mathbf{\{2\}}$ & $a_{i-\frac{m}{2}\to i-1}\cup b_{i-\frac{m}{2}\to i+\frac{m}{8}-1}$ &
  $a_{i-\frac{m}{2}\to i+\frac{m}{8}-1}\cup b_{i-\frac{m}{2}\to i+\frac{m}{8}-1}$ \\
  $\mathbf{\{3\}}$ & $\emptyset$ & $\emptyset$ \\
  $\{4\}$ & $\emptyset$ & $\emptyset$ \\
  $\mathbf{\{5\}}$ & $C$ & $C$ \\
  $\{6\}$ & $\emptyset$ & $\emptyset$ \\
  $\mathbf{\{7\}}$ & $A\cup B$ & $A\cup B$ \\
  $\{8\}$ & $\emptyset$ & $\emptyset$ \\
  $\mathbf{\{9\}}$ & $\{c_i\}$ & $\{c_i\}$ \\
  $\{10\}$ & $\emptyset$ & $\emptyset$ \\
  $\{1,2\}$ & $A\cup B$ & $A\cup B$ \\
  $\mathbf{\{1,5\}}$ & $C\cup a_{i\to i+\frac{m}{2}-1}\cup b_{i-\frac{m}{8}\to i+\frac{m}{2}-1}$ & $C\cup a_{i-\frac{m}{8}\to i+\frac{m}{2}-1}\cup b_{i-\frac{m}{8}\to i+\frac{m}{2}-1}$ \\
  $\{1,7\}$ & $A\cup B$ & $A\cup B$ \\
  $\mathbf{\{1,9\}}$ & $\{c_i\}\cup a_{i\to i+\frac{m}{2}-1}\cup b_{i-\frac{m}{8}\to i+\frac{m}{2}-1}$ & $\{c_i\}\cup a_{i-\frac{m}{8}\to i+\frac{m}{2}-1}\cup b_{i-\frac{m}{8}\to i+\frac{m}{2}-1}$ \\
  $\mathbf{\{2,5\}}$ & $C\cup a_{i-\frac{m}{2}\to i-1}\cup b_{i-\frac{m}{2}\to i+\frac{m}{8}-1}$ &
  $C\cup a_{i-\frac{m}{2}\to i+\frac{m}{8}-1}\cup b_{i-\frac{m}{2}\to i+\frac{m}{8}-1}$ \\
  $\{2,7\}$ & $A\cup B$ & $A\cup B$ \\
  $\mathbf{\{2,9\}}$ & $\{c_i\}\cup a_{i-\frac{m}{2}\to i-1}\cup b_{i-\frac{m}{2}\to i+\frac{m}{8}-1}$ &
  $\{c_i\}\cup a_{i-\frac{m}{2}\to i+\frac{m}{8}-1}\cup b_{i-\frac{m}{2}\to i+\frac{m}{8}-1}$ \\
  $\mathbf{\{5,7\}}$ & $V$ & $V$ \\
  $\{5,9\}$ & $C$ & $C$ \\
  $\mathbf{\{7,9\}}$ & $\{c_i\}\cup A\cup B$ & $\{c_i\}\cup A\cup B$ \\
  $\{1,5,2\}$ & $V$ & $V$ \\
  $\{1,5,7\}$ & $V$ & $V$ \\
  $\{1,5,9\}$ & $C\cup a_{i\to i+\frac{m}{2}-1}\cup b_{i-\frac{m}{8}\to i+\frac{m}{2}-1}$ & $C\cup a_{i-\frac{m}{8}\to i+\frac{m}{2}-1}\cup b_{i-\frac{m}{8}\to i+\frac{m}{2}-1}$ \\
  $\{1,9,2\}$ & $\{c_i\}\cup A\cup B$ & $\{c_i\}\cup A\cup B$ \\
  $\{1,9,7\}$ & $\{c_i\}\cup A\cup B$ & $\{c_i\}\cup A\cup B$ \\
  $\{2,5,7\}$ & $V$ & $V$ \\
  $\{2,5,9\}$ & $C\cup a_{i-\frac{m}{2}\to i-1}\cup b_{i-\frac{m}{2}\to i+\frac{m}{8}-1}$ & $C\cup a_{i-\frac{m}{2}\to i+\frac{m}{8}-1}\cup b_{i-\frac{m}{2}\to i+\frac{m}{8}-1}$ \\
  $\{2,9,7\}$ & $\{c_i\}\cup A\cup B$ & $\{c_i\}\cup A\cup B$ \\
    $\{5,7,9\}$ & $V$ & $V$ \\
    \bottomrule
  \end{tabular}
\end{table}

\begin{rem}
  In our construction of the graphs $G$ and $G'$, we work with
  segments that overlap for $1/8$th of the number of nodes. The
  critical reader will remark that an overlap of a single node would
  already be sufficient. Our choice for working with a larger overlap
  is indeed largely aesthetic.  Moreover, our proof still works for an
  extension of SHACL where shapes of the form $|r\cap E|\geq n$ would
  be allowed. This extension allows us to write shapes like
  $|\textsf{colleague}\cap\textsf{friend}| \geq 5$, stating that the
  node has at least five colleagues that are also friends. Such an
  extension then would still not be able to express full disjointness.
\end{rem}

\subsection{Further non-definability results}
\label{sec:furth-non-defin}

In Theorem~\ref{bool}, we showed that equality is primitive in
$\lang(\disj,\closed)$, and similarly, that disjointness is primitive
in $\lang(\eq,\closed)$. Can we strengthen these results to
$\lang(\fulldisj,\closed)$ and $\lang(\fulleq,\closed)$, respectively?
This turns out to be indeed possible.

That equality remains primitive in $\lang(\fulldisj,\closed)$ already
follows from our given proof in Section~\ref{sec:equality}. In effect,
the attentive reader may have noticed that we already cover full
disjointness in that proof. In contrast, our proof of primitivity of
disjointness in $\lang(\eq,\closed)$ does not extend to full
equality. Nevertheless, we can reuse our proof of primitivity of
\emph{full} disjointness as follows. The graphs $G$ and $G'$ from
Proposition~\ref{prop:fulldisj} are indistinguishable in
$\lang(\fulleq,\disj,\closed)$. Let $H$ and $H'$ be the same graphs
but with all directed edges reversed (i.e., the graphs illustrated in
Figure~\ref{fig:disjgraph2}). Then the same proof shows that $H$ and
$H'$ are indistinguishable in $\lang(\fulleq, \closed)$. However,
since $G$ and $G'$ are distinguishable by the inclusion statement
$\exists p^-.\top \subseteq \neg\disj(p^-,q^-)$, also $H$ and $H'$
are distinguishable by the inclusion statement
$\exists p.\top \subseteq \neg\disj(p,q)$. Thus, the primitivity of
disjointness in $\lang(\fulleq,\closed)$ is established.

\section{Extension to stratified recursion} \label{secrec}

Until now, we could do without shape names.  We do need them,
however, for recursive shape schemas.  Such schemas allow shapes
to be defined using recursive rules, much as in Datalog and logic
programming.  The rules have a shape name in the head; in the
body they have a shape that can refer to the same or other shape
names.

\begin{exa}
  The following rule defines a shape, named $s$, recursively:
  \[ s \gets \const c \lor (\eq(p,q) \land \exists r.s). \] A node $x$
  will satisfy $s$ if there is a (possibly empty) path of $r$-edges from $x$
  to the constant $c$, so that all nodes along the path satisfy
  $\eq(p,q)$ (for two property names $p$ and $q$).
\end{exa}

\paragraph*{Rules and programs}
We need to make a few extensions to our formalism and the
semantics.
\begin{itemize}
\item
  We assume an infinite supply $S$ of \emph{shape names}.  Again for
  simplicity of notation only, we assume that $S$ is disjoint from
  $N$ and $P$.
\item
  The syntax of shapes is extended
  so that \emph{every shape name is a shape}.
\item
  A vocabulary $\Sigma$ is now a subset of $N \cup P \cup S$; an
  interpretation $I$ now additionally assigns a subset $\semi s$ of $\dom^I$ to
  every shape name $s$ in $\Sigma$.
\end{itemize}

Noting the obvious parallels with the field of logic programming,
we propose to use the following terminology from that field.  A
\emph{rule} is of the form $s \gets \phi$, where $s$ is a shape
name and $\phi$ is a shape.  A \emph{program} is a finite set of
rules.  The shape names appearing as heads of rules in a program
are called the \emph{intensional} shape names of that program.

The following definitions of the semantics of programs are similar to
definitions well-known for Datalog.  A program is \emph{semipositive}
if for every intensional shape name $s$, and every shape $\phi$ in the
body of some rule, $s$ occurs only positively in $\phi$.  Let $\prog$
be a semipositive program over vocabulary $\Sigma$, with set of
intensional shape names $D$.  An interpretation $J$ over
$\Sigma \cup D$ is called a \emph{model} of $\prog$ if for every rule
$s \gets \phi$ of $\prog$, the set $\semj\phi$ is a subset of
$\semj s$.  Given any interpretation $I$ over $\Sigma - D$, there
exists a unique minimal interpretation $J$ that expands $I$ to
$\Sigma \cup D$ such that $J$ is a model of $\prog$ (Indeed, $J$ is
the least fixpoint of the well-known immediate consequence operator,
which is a monotone operator since $\mathcal{P}$ is semipositive
\cite{ahv_book}). We call $J$ the result of applying $\prog$ to $I$,
and denote $J$ by $\prog(I)$.

Stratified programs are essentially sequences of semipositive
programs.  Formally, a program $\prog$ is called
\emph{stratified} if it can be partitioned into parts $\prog_1$,
\dots, $\prog_n$ called \emph{strata}, such that
\textbf{(i)}
the strata have pairwise disjoint sets
of intensional shape names; \textbf{(ii)} each stratum is semipositive;
and \textbf{(iii)} the strata are ordered in such a way that when a shape
name $s$ occurs in the body of a rule in some stratum, $s$ is
not intensional in any later stratum.

Let $\prog$ be a stratified program with $n$ strata $\prog_1$,
\dots, $\prog_n$  and let again $I$ be an interpretation over a
vocabulary without the intensional shape names.  We define
$\prog(I)$, the result of applying $\prog$ to $I$, to be the
interpretation $J_n$, where $J_0 := I$ and $J_{k+1} :=
\prog_{k+1}(J_k)$ for $0\leq k<n$.

\paragraph*{Stratified shape schemas}  We are now ready to define
a \emph{stratified shape schema} again as a set of inclusions, but now
paired with a stratified program.  Formally, it is
a pair $(\prog,\tbox)$, where:
\begin{itemize} \item $\prog$ is a program that is stratified, and
  where every shape name mentioned in the body of some rule is
  an intensional shape name in $\prog$. \item $\tbox$ is a
  finite set of inclusion statements $\phi_1 \subseteq \phi_2$,
  where $\phi_1$ and $\phi_2$ mention only shape names that are
  intensional in $\prog$. \end{itemize}

Now we define a graph $G$ to \emph{conform} to $(\prog,\tbox)$
if $\iexpr{\phi_1}{\prog(G)}$
is a subset of $\iexpr{\phi_2}{\prog(G)}$, for every
inclusion $\phi_1 \subseteq \phi_2$ in $\tbox$.

\begin{rem}
  The nonrecursive notion of shape schema, defined in
  Section~\ref{secdefs}, corresponds to the special case where
  $\prog$ is the empty program.
\end{rem}

\paragraph*{Extending Theorem~\ref{bool}}

Theorem~\ref{bool} extends to stratified shape schemas.
Indeed, consider a stratified shape schema
$(\prog,\tbox)$.  Shapes not mentioning any shape names are
referred to as \emph{elementary shapes}.  We observe that for
every intensional shape
name $s$ and every graph $H$, there exists an elementary shape
$\phi$ such that $\iexpr s{\prog(H)} = \iexpr \phi H$.  Furthermore,
$\phi$ uses the same constants, quantifiers, and path expressions
as $\prog$.  For semipositive programs, this is shown using a
fixpoint characterization of the minimal model; for stratified
programs, this argument can then be applied repeatedly.  The
crux, however, is that graphs $G$ and $G'$ of
Proposition~\ref{deprop} will have the same $\phi$.  Indeed, by
that Proposition, the fixpoints of the different strata
will be reached on $G$ and on $G'$
in the same stage.  We effectively obtain an extension of
Proposition~\ref{deprop}, which establishes the theorem
for features $X$ other than $\closed$.

Also for $X=\closed$, the reasoning, given after Lemma~\ref{lemclos},
extends in the same way to stratified
shape schemas, since the graphs $G$ and $G'$ used there again
yield exactly the same evaluation for all shapes that do not use
$\closed$.

\paragraph*{Extending Theorem~\ref{thm:generalizedschema}}
Also Theorem~\ref{thm:generalizedschema} extends to stratified shape
schemas.  Thereto, Lemma \ref{lemclos} needs to be reproven in the
presence of a stratified program $\prog$ defining the intensional
shape names.  The extended Lemma~\ref{lemclos} then states that
$\iexpr\phi{\prog(G)}= \iexpr\phi{\prog(G')}$.  The proof of
Theorem~\ref{thm:generalizedschema} then goes through unchanged.

\paragraph*{Extending Theorem~\ref{thm:fulleqdisj}} Also
Theorem~\ref{thm:fulleqdisj} extends to stratified shape schemas for
the same reasons given above for Theorem~\ref{bool}.

\section{Concluding remarks} \label{seconc}

An obvious open question is whether our results extend further to
nonstratified programs, depending on various semantics that have been
proposed for Datalog with negation, notably well-founded or stable
models \cite{ahv_book,truszcz_stabwf}.  One must then deal with
3-valued models and, for stable models, choose whether the TBox should
hold in every stable model (skeptical), or in at least one
(credulous).  For example, Andre\c sel et al.\ \cite{andresel} adopt a
credulous approach.  In the same vein, even for stratified programs,
one may consider \emph{maximal} models instead of minimal ones, as
suggested for ShEx \cite{shex}.  Unified approaches developed 
for logic programming semantics can be naturally applied to SHACL
\cite{iclp_recursiveshacl}.

Notably, Corman et al.\ \cite{corman} have already suggested that
disjointness is redundant in a setting of recursive shape schemas with
nonstratified negation. Their expression is not correct, however
\cite{cormanbug}.\footnote{Their approach is to postulate two shape
  names $s_1$ and $s_2$ that can be assigned arbitrary sets of nodes,
  as long as the two sets form a partition of the domain.  Then for
  one node $x$ to satisfy the shape $\disj(E,p)$, it is sufficient
  that $E(x)$ is a subset of $s_1$ and $p(x)$ of $s_2$.  This
  condition is not necessary, however, as other nodes may require
  different partitions.}

% A problem left open by our work is whether disjointness remains primitive in
% the presence of full equality.  Indeed, our proof of
% Proposition~\ref{prop:disj} fails in the presence of full
% equality. Interestingly, the converse problem is covered by our
% results: equality tests remain primitive in the presence of full
% disjointness. The proof of Proposition~\ref{prop:eq} still works in
% this presence.

A general question surrounding SHACL, even standard nonrecursive
SHACL, is to understand better in which sense (if at all) this
language is actually better suited for expressing constraints on RDF
graphs than, say, SPARQL ASK queries
\cite{shaclsparql,owlic,gent-valid-reason}.  Certainly, the affinity
with description logics makes it easy to carve out cases where higher
reasoning tasks become decidable \cite{leinberger,shaclsatsouth}.  It
is also possible to show that nonrecursive SHACL is strictly weaker in
expressive power than SPARQL\@.  But does SHACL conformance checking
really have a lower computational complexity?  Can we think of novel
query processing strategies that apply to SHACL but not easily to
SPARQL?  Are SHACL expressions typically shorter, or perhaps longer,
than the equivalent SPARQL ASK expression?  How do the expression
complexities \cite{vardi_comp} compare?

\section*{Acknowledgement}

We thank the anonymous referees for helpful comments that improved the
presentation of this paper. This research was supported by the
Flanders AI Research programme.

\bibliographystyle{alphaurl}
\bibliography{journal}

\appendix

\section{Supplementary Proofs}
\label{sec:supplementary-proofs}

\subsection*{Proof of Lemma~\ref{lem:strings}}

We first state an auxiliary lemma:

\begin{lem}
  \label{lem:bininf}
  
  Let $V$ be a finite set of $n$ elements, and let
  $R\subseteq V\times V$ be a binary relation over $V$. We have
  $R^* = R^0 \cup R^1 \cup \dots \cup R^{n-1}$.
\end{lem}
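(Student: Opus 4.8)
The plan is to unwind the definition and reduce everything to a path-shortening argument. By definition, $R^*$ is the reflexive–transitive closure of $R$, which can be written as $R^* = \bigcup_{k \geq 0} R^k$ (with $R^0$ the identity on $V$). The inclusion $R^0 \cup \dots \cup R^{n-1} \subseteq R^*$ is then immediate, so the whole content is the reverse inclusion, and for that it suffices to prove that $R^k \subseteq R^0 \cup \dots \cup R^{n-1}$ for \emph{every} $k \geq 0$. For $k \leq n-1$ this is trivial, so the work is confined to the case $k \geq n$.

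First I would set up the witnessing-sequence picture: $(x,y) \in R^k$ exactly when there are elements $v_0, v_1, \dots, v_k$ of $V$ with $v_0 = x$, $v_k = y$, and $(v_{i-1}, v_i) \in R$ for $1 \leq i \leq k$. The key observation is that such a sequence has $k+1$ entries, so as soon as $k \geq n$ we have $k + 1 > n = |V|$, and the pigeonhole principle gives indices $i < j$ with $v_i = v_j$. Excising the segment strictly between these two occurrences (i.e.\ passing to $v_0, \dots, v_i, v_{j+1}, \dots, v_k$) yields a shorter witnessing sequence, showing $(x,y) \in R^{k'}$ for some $k' = k - (j-i) < k$. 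I would then package this as a strong induction on $k$: for $k \geq n$, every pair in $R^k$ lies in some $R^{k'}$ with $k' < k$, hence by the induction hypothesis in $R^0 \cup \dots \cup R^{n-1}$; for $k < n$ the claim is vacuous. Combining over all $k$ gives $R^* \subseteq R^0 \cup \dots \cup R^{n-1}$, and with the trivial reverse inclusion we are done.

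The argument is routine, so there is no real obstacle; the only points that need care are the off-by-one in the length/vertex-count bookkeeping (a path of length $k$ has $k+1$ vertices, which is why the threshold is $n$ and not $n-1$), making the "excise the loop" step precise rather than hand-waved, and the degenerate case $n = 0$, where $V = \emptyset$ forces $R = \emptyset$ and $R^* = \emptyset$, matching the empty union on the right-hand side.
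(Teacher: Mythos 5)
Your proposal is correct and follows essentially the same path-shortening argument as the paper: both reduce to showing that any $(a,b)\in R^k$ with $k\geq n$ witnesses a repeated vertex by pigeonhole, excise the loop, and conclude by induction on $k$. The extra care you give to the off-by-one bookkeeping and the $n=0$ case is fine but does not change the substance.
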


\begin{proof}
  $R^*$ is defined as $R^0 \cup R^1 \cup \dots$ however, we will
  show that if $(a,b)\in R^m$, with $m\geq n$, then there exists a
  $k < m$ such that $(a,b)\in R^k$.

  We call a sequence of elements $x_1,\dots,x_h$ an $R$-path if
  $(x_l,x_{l+1})\in R$ for $1\leq l\leq h$.

  If $(a,b)\in R^m$, then there exists an $R$-path
  $x_1,\dots,x_{m+1}$ with $x_1=a$ and $x_{m+1}=b$. As there are
  only $n$ total elements, there exists $i,j$ with
  $1\leq i<j\leq m+1$ such that $x_i=x_j$. Therefore,
  $x_1,\dots,x_{i-1},x_j,\dots,x_{m+1}$ is also an $R$-path. We
  conclude that $(a,b)\in R^{m-(j-i)}$, as desired.
\end{proof}

\begin{proof}[Proof of Lemma~\ref{lem:strings}.]
  The proof is by induction on the structure of $E$. Clearly for the
  base case $E = p$, we have the set $U=\{p\}$ and similarly for
  $E=p^-$ we have $U=\{p^-\}$. When $E=\id$, clearly
  $U=\{\id\}$. Next, we consider the inductive cases. When
  $E=E_1\cup E_2$, we know by induction there exists a set of
  strings $U_1$ for $E_1$, and $U_2$ for $E_2$. We then have
  $U=U_1\cup U_2$. When $E=E_1\comp E_2$, we again know by induction
  there exists a set of strings $U_1$ for $E_1$, and $U_2$ for
  $E_2$. We have
  $U = \{s_1\comp s_2\mid s_1\in U_1 \;\text{and}\; s_2\in U_2\}$.
  Finally, when $E=E'^*$, we know by induction there exists a set of
  strings $U'$ for $E'$. Let $W$ be a set of strings, we define
  $W^1:= W$, and for a natural number $m>1$,
  $W^m := \{s_1\comp s_2\mid s_1\in W, s_2\in W^{m-1}\}$. We also
  use the shorthand notation $E^m$, with $m>0$ a natural number, to
  denote $m$ compositions of the path expression $E$. For example,
  $E^3$ is $E\comp E\comp E$. By definition,
  $\iexpr{E'^*}{G}=\iexpr{\id}{G}\cup\iexpr{E'}{G}
  \cup\iexpr{E'^2}{G}\cup\dots$. By Lemma~\ref{lem:bininf} we know
  that this is the same as
  $\iexpr{E'^*}{G}=\iexpr{\id}{G}\cup\iexpr{E'}{G}
  \cup\iexpr{E'^2}{G}\cup\dots\cup\iexpr{E'^{n-1}}{G}$ for graphs
  with at most $n$ nodes. It then follows that
  $U=\{\id\}\cup U'\cup U'^2\cup\dots\cup U'^{n-1}$.
\end{proof}

\subsection*{Proof of Lemma~\ref{lem:disjoint_gprime}}

\begin{proof}
For $i=1,2,3,4$, define the \emph{i-th blob of nodes} to be the set
$X_i=\{x_i^1,\dots,x_i^M\}$ (see Figure~\ref{figraphs}).  We also
use the notations $\nxt(1)=2$; $\nxt(2)=3$; $\nxt(3)=4$;
$\nxt(4)=1$; $\prev(4)=3$; $\prev(3)=2$; $\prev(2)=1$; $\prev(1)=4$.
Thus $\nxt(i)$ indicates the next blob in the cycle, and $\prev(i)$
the previous.

The proof is by induction on the structure of $E$. If $E$ is a
property name, $E$ is simple so the claim is trivial. If $E$ is of
the form $p^-$, the first claim is clear because
$\iexpr{r^-}{G'}\subseteq \iexpr{E}{G'}$, and we only need to
verify the second one.  That holds because for any $i$, if
$v\in X_i$, then $\iexpr{p^-}{G'}(v)\supseteq X_{\prev(i)}$ and
clearly $X_{\prev(i)} - \iexpr{r}{G'}(v)\neq\emptyset$.  We next
consider the inductive cases.

First, assume $E$ is of the form $E_1 \cup E_2$.  When at least
one of $E_1$ and $E_2$ is not simple, the two claims immediately
follow by induction, since
$\iexpr{E}{G'} \supseteq \iexpr{E_1}{G'}$ and
$\iexpr{E}{G'} \supseteq \iexpr{E_2}{G'}$.  If $E_1$ and $E_2$ are
simple, then $E$ is simple and the claim is trivial.

Next, assume $E$ is of the form $E_1^*$.  If $E_1$ is not simple,
the two claims follow immediately by induction, since
$\iexpr{E}{G'} \supseteq \iexpr{E_1}{G'}$.  If $E_1$ is simple,
the first claim clearly hold for $E$, so we only need to verify
the second claim.  That holds because, by the form of $E$, every
node $v$ is in $\iexpr{E}{G'}(v)$, but not in $\iexpr{r}{G'}(v)$,
as $G$ does not have any self-loops.

Finally, assume $E$ is of the form $E_1 \comp E_2$.  Note that if
$E_1$ or $E_2$ is simple, clearly claim one holds because
$\iexpr{r}{G'} \subseteq \iexpr{E}{G'}$.  The argument that
follows will therefore also apply when $E_1$ or $E_2$ is simple.
We will be careful not to apply the induction hypothesis for the
second statement to $E_1$ and $E_2$.

We distinguish two cases.

\begin{itemize}
\item If $\iexpr{r}{G'} \subseteq \iexpr{E_2}{G'}$, then we show
  that $\iexpr{r}{G'} \subseteq \iexpr{E}{G'}$.  Let $v \in X_i$.
  We verify the following two inclusions:
  \begin{itemize}

  \item $\iexpr{E}{G}(v) \supseteq X_i$. Let $u\in X_i$. If
    $u\neq v$, choose a third node $w\in X_i$. Since $X_i$ is a
    clique, $(v,w)\in \iexpr{E_1}{G}$ because the first claim
    holds for $E_1$. By $\iexpr{r}{G'} \subseteq \iexpr{E_2}{G'}$,
    we also have $(w,u)\in \iexpr{E_2}{G'}$, whence
    $u\in \iexpr{E}{G'}(v)$ as desired. If $u=v$, we similarly
    have $(v,w) \in \iexpr{E_1}{G'}$ and
    $(w,u) \in \iexpr{E_2}{G'}$ as desired.

  \item $\iexpr{E}{G}(v) \supseteq X_{\nxt(i)}$. Let
    $u\in X_{\nxt(i)}$ and choose $w\neq v \in X_i$. Because the
    first claim holds for $E_1$, we have
    $(v,w) \in \iexpr{E_1}{G}$. By
    $\iexpr{r}{G'} \subseteq \iexpr{E_2}{G'}$, we also have
    $(w,u)\in \iexpr{E_2}{G'}$, whence $u\in \iexpr{E}{G'}(v)$ as
    desired.
  \end{itemize}
  We conclude that
  $\iexpr{E}{G'}(v) \supseteq X_i \cup X_{\nxt(i)} \supseteq
  \iexpr{r}{G'}$ as desired.

\item If $\iexpr{r^-}{G'} \subseteq \iexpr{E_2}{G'}$, then we show
  that $\iexpr{r^-}{G'} \subseteq \iexpr{E}{G'}$. This is
  analogous to the previous case, now verifying that
  $\iexpr{E}{G}(v) \supseteq X_i \cup X_{\prev(i)}$.
\end{itemize}

In both cases, the second statement now follows for every node
$v$.  Indeed, $v\in X_i \subseteq \iexpr{E}{G'}(v)$ but
$v\notin \iexpr{r}{G'}(v)$.
\end{proof}

\end{document}